%% file: main.tex
\documentclass[11pt]{article}

\usepackage[letterpaper,margin=1in]{geometry}
\usepackage[T1]{fontenc}
\usepackage[utf8]{inputenc}
\usepackage{lmodern}
\usepackage{microtype}

\usepackage{amsmath,amssymb,amsthm,mathtools,bm}
\usepackage{bbm}
\usepackage{dsfont}

\usepackage{graphicx}
\usepackage{xcolor}
\usepackage{float}
\usepackage{enumitem}
\usepackage{booktabs}
\usepackage{array}
\usepackage{multirow}
\usepackage{algorithm}
\usepackage{algpseudocode}
\usepackage{booktabs}
\usepackage{placeins}
\usepackage{tabularx}
\usepackage{array}
\usepackage{comment}

\usepackage[hidelinks]{hyperref}
\usepackage[nameinlink,capitalise,noabbrev]{cleveref}

\usepackage{tikz}
\usetikzlibrary{arrows.meta,positioning,calc,decorations.pathreplacing}

\setlist[itemize]{topsep=2pt,itemsep=2pt,parsep=1pt,partopsep=1pt}
\setlist[enumerate]{topsep=2pt,itemsep=2pt,parsep=1pt,partopsep=1pt}

\usepackage{aliascnt}

\newtheorem{theorem}{Theorem}[section]

\newaliascnt{lemma}{theorem}
\newtheorem{lemma}[lemma]{Lemma}
\aliascntresetthe{lemma}

\newaliascnt{proposition}{theorem}

\aliascntresetthe{proposition}

\newaliascnt{corollary}{theorem}
\newtheorem{corollary}[corollary]{Corollary}
\aliascntresetthe{corollary}

\newaliascnt{claim}{theorem}

\aliascntresetthe{claim}

\newaliascnt{fact}{theorem}

\aliascntresetthe{fact}

\newaliascnt{observation}{theorem}

\aliascntresetthe{observation}

\theoremstyle{definition}

\newaliascnt{definition}{theorem}

\aliascntresetthe{definition}

\newaliascnt{example}{theorem}

\aliascntresetthe{example}

\newaliascnt{remark}{theorem}
\newtheorem{remark}[remark]{Remark}
\aliascntresetthe{remark}

\newaliascnt{openproblem}{theorem}
\newtheorem{openproblem}[openproblem]{Open Problem}
\aliascntresetthe{openproblem}

\crefname{theorem}{theorem}{theorems}
\Crefname{theorem}{Theorem}{Theorems}
\crefname{lemma}{lemma}{lemmas}
\Crefname{lemma}{Lemma}{Lemmas}
\crefname{proposition}{proposition}{propositions}
\Crefname{proposition}{Proposition}{Propositions}
\crefname{corollary}{corollary}{corollaries}
\Crefname{corollary}{Corollary}{Corollaries}
\crefname{claim}{claim}{claims}
\Crefname{claim}{Claim}{Claims}
\crefname{fact}{fact}{facts}
\Crefname{fact}{Fact}{Facts}
\crefname{observation}{observation}{observations}
\Crefname{observation}{Observation}{Observations}
\crefname{definition}{definition}{definitions}
\Crefname{definition}{Definition}{Definitions}
\crefname{example}{example}{examples}
\Crefname{example}{Example}{Examples}
\crefname{remark}{remark}{remarks}
\Crefname{remark}{Remark}{Remarks}
\crefname{algorithm}{algorithm}{algorithms}
\Crefname{algorithm}{Algorithm}{Algorithms}
\crefname{section}{section}{sections}
\Crefname{section}{Section}{Sections}

\newcommand{\eps}{\epsilon}

\newcommand{\R}{\mathbb{R}}

\newcommand{\dF}{d_{\mathrm F}}
\newcommand{\ddF}{d_{\mathrm{dF}}}

\newcommand{\REACH}{\textnormal{\textsc{Reach}}}

\newcommand{\defeq}{\vcentcolon=}
\newcommand{\abs}[1]{\left|#1\right|}
\newcommand{\norm}[1]{\left\lVert#1\right\rVert}
\newcommand{\set}[1]{\left\{#1\right\}}

\newcommand{\floor}[1]{\left\lfloor#1\right\rfloor}

\title{\texorpdfstring{$(5+\eps)$}{(5+epsilon)}-Approximation of Fr\'echet Distance in Strongly Subquadratic Time\thanks{University of Illinois Urbana-Champaign, Urbana, IL 61801. Email: \texttt{\{hengyu2, jihanw2\}@illinois.edu}.}}
\author{Lenny Liu \qquad Jihan Wang}
\date{\today}

\begin{document}

\renewcommand{\thefootnote}{\fnsymbol{footnote}}
\maketitle
\thispagestyle{empty}
\renewcommand{\thefootnote}{\arabic{footnote}}
\setcounter{footnote}{0}

\begin{abstract}
We give randomized
$(5+\epsilon)$-approximation algorithms for both the continuous and discrete
Fr\'echet distances on arbitrary two polygonal curves $\tau$ and $\sigma$ in $\mathbb R^d$ for fixed $d$,
with $n$ and $m\le n$ vertices respectively.
Our algorithm for continuous Fr\'echet runs in $\widetilde O_{d,\epsilon}(n m^{8/9})$ time,
and our algorithm for discrete Fr\'echet runs in $\widetilde O_{d,\epsilon}(n m^{4/5})$
time.
These bounds improve the recent strongly subquadratic constant-factor
approximation algorithms of Cheng, Huang, and
Zhang~\cite{cheng2025constant}, which give $(7+\epsilon)$-approximations.

The approximation improvement comes from certifying long boundary-to-boundary
reachability directly through auxiliary surrogate curves, avoiding an extra
conversion back to input subcurves and hence removing one triangle-inequality
loss.  The running-time improvement comes from a two-scale macro-surrogate
search combined with dyadic auxiliary-transfer structures, with the discrete
case gaining a faster bound from exact planar reachability in the discrete
free-space graph.
\end{abstract}
\thispagestyle{empty}
\clearpage

\setcounter{page}{1}
\input{introduction}

\input{overview}

\input{preliminaries}

\input{continuous}

\input{discrete}

\input{Discussion}

\paragraph{Independent and concurrent work.}
Independently of and concurrently with this work, van~der~Horst and
Ophelders~\cite{vdHO26} obtained a deterministic
$(3+\epsilon)$-approximation for the continuous Fr\'echet distance in
$O\bigl(nm^{2/3}\log n \cdot \log(\tfrac{1}{\epsilon}\log n)\bigr)$ time,
applicable in general metric spaces, with an exact factor-$3$ algorithm in
$\mathbb{R}$; they state that analogous results hold for the discrete
Fr\'echet distance, with details deferred to their full version. Their
results supersede ours in approximation factor and running time. Both works were
carried out independently; we learned of theirs through personal
communication after this manuscript was completed. Their algorithm
propagates reachability exactly along surrogates formed by subcurves of
the input curve itself, whereas ours transfers reachability through
randomly sampled auxiliary curves. Their construction achieves, in effect,
the goal of \Cref{op:search}.

\bibliographystyle{alpha}
\bibliography{ref}

\end{document}

%% file: introduction.tex
%======================================================================
\section{Introduction}
\label{sec:intro}
%======================================================================

\emph{Fr\'echet distance} is a classical measure of similarity between curves.
In the standard ``man and dog'' interpretation, two agents traverse their
respective curves from start to finish without backtracking, and the distance
is the minimum leash length that permits such a traversal.  Because the
definition respects the order of points along each curve, the Fr\'echet distance
is well suited to comparing trajectories, paths, and time-series data.

Let $\tau,\sigma:[0,1]\to\R^d$ be polygonal curves.  A continuous matching
between $\tau$ and $\sigma$ is a pair of continuous nondecreasing maps
$\rho,\varrho:[0,1]\to[0,1]$ satisfying
$\rho(0)=\varrho(0)=0$ and $\rho(1)=\varrho(1)=1$.  The \emph{continuous}
Fr\'echet distance is
\[
    \dF(\tau,\sigma)
    =
    \min_{\rho,\varrho}
    \max_{t\in[0,1]}
    \norm{\tau(\rho(t))-\sigma(\varrho(t))}.
\]

For the discrete variant, write the vertex sequences as
$\tau=(v_1,\ldots,v_n)$ and $\sigma=(w_1,\ldots,w_m)$.  A
\emph{discrete matching} is a sequence
\[
    C=((i_1,j_1),\ldots,(i_L,j_L))
\]
such that $(i_1,j_1)=(1,1)$, $(i_L,j_L)=(n,m)$, and for every
$\ell<L$,
\[
    (i_{\ell+1}-i_\ell,\ j_{\ell+1}-j_\ell)
    \in \{(1,0),(0,1),(1,1)\}.
\]
The \emph{discrete} Fr\'echet distance is
\[
    \ddF(\tau,\sigma)
    =
    \min_C
    \max_{(i,j)\in C}
    \norm{v_i-w_j}.
\]

\paragraph{Exact algorithms.}
Alt and Godau~\cite{alt1995computing} initiated the algorithmic study of the
continuous Fr\'echet distance and gave the classical algorithm,
which computes the distance between curves of complexities $m$ and $n$ in
$O(mn\log(mn))$ time.  Eiter and Mannila~\cite{eiter1994computing} introduced
the discrete Fr\'echet distance and gave an $O(mn)$-time dynamic programming algorithm.
These quadratic-type bounds remain the natural baseline for arbitrary input
curves.

Several works have obtained subpolynomial improvements for exact computation.
For the discrete Fr\'echet distance in the plane, Agarwal, Ben Avraham, Kaplan,
and Sharir~\cite{agarwal2014computing} gave a word-RAM algorithm running in
$O(mn\log\log n/\log n)$ time, assuming $m\le n$.
For the continuous Fr\'echet distance, Cheng
and Huang~\cite{cheng2025frechet} recently gave an exact randomized algorithm in
arbitrary fixed dimension with expected running time
$O(mn(\log\log n)^{2+\mu}\log n/\log^{1+\mu}m)$ for some constant
$\mu\in(0,1)$.

\paragraph{Hardness and lower bounds.}
Conditional lower bounds suggest that truly subquadratic exact algorithms are
unlikely.  Bringmann~\cite{bringmann2014walking}, assuming the
\emph{Strong Exponential Time Hypothesis} (SETH) of Impagliazzo and
Paturi~\cite{impagliazzo2001complexity}, ruled out strongly subquadratic time exact
algorithms for both the continuous and the discrete Fr\'echet distance,
already for curves in the plane.  His lower bound holds even for imbalanced
complexities, so, it is likely no algorithm runs in $O((nm)^{1-\gamma})$ time for any constant
$\gamma>0$.  The same work also rules out strongly subquadratic time $1.001$-approximation.  Buchin, Ophelders, and
Speckmann~\cite{buchin2019seth} raised the inapproximability threshold, under
SETH, no strongly subquadratic algorithm approximates the continuous or the
discrete Fr\'echet distance within a factor smaller than $3$, even for curves
in one dimension.  They also showed that reductions of this kind cannot
establish an inapproximability factor above $3$.  

\paragraph{Approximation algorithms.}
For structured inputs, strong approximation guarantees are known.  Driemel,
Har-Peled, and Wenk~\cite{driemel2010approximating} introduced
$c$-packed curves and gave a near-linear-time $(1+\eps)$-approximation for them.  Bringmann and K\"unnemann~\cite{bringmann2017improved} subsequently
improved the dependence on $\eps$, matching conditional lower bounds up to
lower-order factors.

For arbitrary curves, earlier strongly subquadratic time algorithms achieved
approximation factors that grow with the input size.  For the discrete
Fr\'echet distance, Bringmann and Mulzer~\cite{bringmann2016approximability}
gave an $O(\alpha)$-approximation in $O(n\log n+n^2/\alpha)$ time for two
$n$-vertex curves, and Chan and Rahmati~\cite{chan2018improved} improved
the running time to $O(n\log n+n^2/\alpha^2)$.  For the continuous Fr\'echet
distance, Colombe and Fox~\cite{colombe2021approximating} gave the first
strongly subquadratic time algorithm with a polynomial approximation.  For
two $n$-vertex curves in fixed dimension and
$\alpha\in[\sqrt n,n]$, their algorithm computes an
$O(\alpha)$-approximation in
$O((n^3/\alpha^2)\log n)$ time. Van der Horst, van Kreveld, Ophelders, and
Speckmann~\cite{van2023subquadratic} later gave an
$O(\alpha)$-approximation for curves of complexities $m\le n$ in fixed
dimension in $O((n+mn/\alpha)\log^3 n)$ time.  More recently, van der Horst,
van Kreveld, Ophelders, and Speckmann~\cite{van2024faster} improved the
arbitrary-dimensional running time to $O((n^2/\alpha)\log n)$ for
equal-complexity curves and gave a one-dimensional algorithm running in
$O((n^2/\alpha^3)\log^2 n)$ time.

A recent breakthrough of Cheng, Huang, and
Zhang~\cite{cheng2025constant} gave the first strongly subquadratic
constant-factor approximation algorithms for continuous and discrete
Fr\'echet distance on arbitrary curves.  For curves $\tau$ and $\sigma$ with
$\abs{\tau}=n$ and $\abs{\sigma}=m\le n$, they obtained randomized
$(7+\eps)$-approximation algorithms for both variants in
$\widetilde O_{d,\eps}(nm^{0.99})$ time.

In subsequent work focused on the imbalanced setting,
Blank~\cite{blank2026frechet} obtained a $(3+\eps)$-approximation for both
variants in $O((n+m^2)\log n)$ time and refined the corresponding conditional
lower bounds.  This running time is particularly effective when the smaller
curve is substantially shorter.

\begin{table*}[t]
    \centering
    \small
    \setlength{\tabcolsep}{4.5pt}
    \renewcommand{\arraystretch}{1.08}
    \begin{tabularx}{\textwidth}{@{}
    l
    >{\raggedright\arraybackslash}p{0.20\textwidth}
    >{\raggedright\arraybackslash}p{0.14\textwidth}
    >{\raggedright\arraybackslash}X
    l
@{}}
        \toprule
        Variant
        & Setting
        & Guarantee
        & Running time
        & Reference
        \\
        \midrule
        Continuous
        & arbitrary curves
        & exact
        & $O(mn\log(mn))$
        & \cite{alt1995computing}
        \\
        Discrete
        & arbitrary curves
        & exact
        & $O(mn)$
        & \cite{eiter1994computing}
        \\
        Discrete
        & $d=2$
        & exact
        & $O(mn\log\log n/\log n)$
        & \cite{agarwal2014computing}
        \\
        Continuous
        & fixed $d$
        & exact, randomized
        & $O\!\left(
            mn(\log\log n)^{2+\mu}\log n/
            \log^{1+\mu}m
          \right)$
        & \cite{cheng2025frechet}
        \\
        \midrule
        Discrete
        & equal complexity
        & $O(\alpha)$
        & $O(n\log n+n^2/\alpha)$
        & \cite{bringmann2016approximability}
        \\
        Discrete
        & equal complexity
        & $O(\alpha)$
        & $O(n\log n+n^2/\alpha^2)$
        & \cite{chan2018improved}
        \\
        Continuous
        & equal complexity, fixed $d$
        & $O(\alpha)$
        & $O((n^3/\alpha^2)\log n)$
        & \cite{colombe2021approximating}
        \\
        Continuous
        & $m\le n$, fixed $d$
        & $O(\alpha)$
        & $O((n+mn/\alpha)\log^3 n)$
        & \cite{van2023subquadratic}
        \\
        Continuous
        & equal complexity
        & $O(\alpha)$
        & $\widetilde O(n^2/\alpha)$
        & \cite{van2024faster}
        \\
        \midrule
        Continuous
        & arbitrary curves
        & $7+\eps$
        & $\widetilde O_{d,\eps}(nm^{0.99})$
        & \cite{cheng2025constant}
        \\
        Discrete
        & arbitrary curves
        & $7+\eps$
        & $\widetilde O_{d,\eps}(nm^{0.99})$
        & \cite{cheng2025constant}
        \\
        Continuous
        & $m\le n$; imbalanced focus
        & $3+\eps$
        & $O((n+m^2)\log n)$
        & \cite{blank2026frechet}
        \\
        Discrete
        & $m\le n$; imbalanced focus
        & $3+\eps$
        & $O((n+m^2)\log n)$
        & \cite{blank2026frechet}
        \\
        \midrule
        Continuous
        & arbitrary curves
        & $5+\eps$
        & $\widetilde O_{d,\eps}(nm^{8/9})$
        & \Cref{thm:continuous-main}
        \\
        Discrete
        & arbitrary curves
        & $5+\eps$
        & $\widetilde O_{d,\eps}(nm^{4/5})$
        & \Cref{thm:discrete-main} 
        \\
        \bottomrule
    \end{tabularx}
    \caption{
        Selected exact and approximation algorithms for Fr\'echet distance,
        where $m\le n$.  Unless stated otherwise, the dimension is fixed.
    }
    \label{tab:frechet-results}
\end{table*}

\paragraph{Our results.}
Cheng, Huang, and Zhang~\cite{cheng2025constant} identify improving both
the approximation factor and the running time of their strongly subquadratic
constant-factor algorithms as a natural question.  We answer this question in the
affirmative on both counts.  For arbitrary
polygonal curves $\tau$ and $\sigma$ of complexities $n$ and $m\le n$ in
fixed dimension, we give randomized $(5+\eps)$-approximation algorithms for
both continuous and discrete Fr\'echet distance.  The continuous Fr\'echet algorithm
runs in $\widetilde O_{d,\eps}(nm^{8/9})$ time, and the discrete one
runs in $\widetilde O_{d,\eps}(nm^{4/5})$ time.
Our fixed-threshold decision procedures distinguish distance at most
$\delta$ from larger distances within the same time bounds.  The
continuous procedure rejects distance greater than $(5+\eps)\delta$.  The
discrete procedure needs no accuracy parameter at all; it distinguishes
distance at most $\delta$ from distance greater than $5\delta$ in
$\widetilde O_d(nm^{4/5})$ time, independent of $\eps$. The parameter $\eps$ enters
the discrete approximation result only through the final threshold search, seeded by a coarse estimate of Bringmann and Mulzer~\cite{bringmann2016approximability}. 
The continuous approximation algorithm follows
from the decision-to-optimization transformation of Colombe and
Fox~\cite{colombe2021approximating}.

The rest of the paper is organized as follows.
\Cref{sec:overview} gives a technical overview of the algorithms.
\Cref{sec:preliminaries} introduces the free-space reachability framework,
curve simplification, and block decomposition.
\Cref{sec:continuous} proves the continuous gap-decision and approximation
results.
\Cref{sec:discrete} proves the corresponding discrete results.
\Cref{sec:discussion} poses an open problem.

%% file: overview.tex
%======================================================================
\section{Overview}
\label{sec:overview}

At a fixed threshold $\delta$, the standard continuous and discrete
Fr\'echet dynamic programs explore the full free-space diagram: the monotone 
reachability structure whose feasible cells or vertices encode pairs of positions at distance at most $\delta$.
The quadratic running time reflects an all-to-all
comparison: every position on $\tau$ is implicitly compared with
every position on $\sigma$.  Our strategy is to avoid these all-to-all comparisons by certifying reachability only at selected places, 
using two structural properties of Fr\'echet matchings
throughout: matchings \emph{restrict} (a matching of two curves
restricts to a matching between any subcurve of one and some subcurve
of the other), and matchings \emph{concatenate} at their maximum cost
rather than their sum.  Our approach has three components.  First, we partition 
both curves into blocks and reduce the computation to one difficult transition per block pair.  
Second, we certify that transition by transferring reachability through short auxiliary surrogate curves.  
Third, we find these surrogates by a two-scale macro search: random samples handle dense pieces, while a 
sparse fallback handles the remaining cases; all transfer queries are answered from data structures 
preprocessed per host block.

We adopt the block framework of Cheng, Huang, and
Zhang~\cite{cheng2025constant}.  Partition $\tau$ into consecutive
blocks of $\mu_1$ vertices and $\sigma$ into consecutive blocks of
$\mu_2$ vertices, for parameters fixed later, and process the grid of
block pairs $\tau_k\times\sigma_l$ in a topological order of the block grid.  
The algorithm stores information only on block boundaries: each boundary carries a
set that is \emph{complete}, containing every globally $\delta$-reachable point on that boundary, 
and \emph{$\beta$-sound}, meaning every stored point is globally
$\beta\delta$-reachable, where $\beta=5+\eps$ in the continuous case
(\Cref{sec:continuous}) and $\beta=5$ in the discrete case
(\Cref{sec:discrete}).  
One local update, \REACH{}, receives such sets on the left and bottom boundaries
of a block pair and must produce them on the right and top.
Completeness guarantees that a true matching is never discarded, and
$\beta$-soundness does not degrade as certificates cross many blocks, because
matchings concatenate at their maximum cost.  The global problem
therefore reduces to implementing a single \REACH{} update, and the
entire game is to do so in time far below the $\mu_1\mu_2$ cost of
solving one block pair exactly.

A witness path through a block pair enters from the left or bottom
boundary and exits through the right or top, giving four classes.
Three of them are easy for the same reason: each pins the host
side to a prefix, a suffix, or all of $\tau_k$, matched to a portion of
$\sigma_l$ with only $O(\mu_2)$ vertices.  A curve that matches an
$O(\mu_2)$-vertex curve at threshold $\delta$ admits a simplification
of $O(\mu_2)$ size, so the relevant host prefix, suffix, or block curve
can be replaced by its simplification and the witness re-certified by a
dynamic program of size $O(\mu_2^2)$.  The soundness of these local
cases stays below the final threshold, and their cost is lower order.
The fourth class is where the difficulty arises.  A bottom-to-top witness matches
the entire block $\sigma_l$ to an unknown subcurve
$P=\tau_k[x,y]$, and both endpoints are unknown: $x$ and $y$ range over
many scattered positions continuously many in the continuous setting.
Testing all source-endpoint pairs would recover exactly the all-to-all
comparison that the block decomposition was built to avoid.

The way out is that certifying such a witness does not require finding it.
Suppose the algorithm holds any short curve $\pi$ with
$d(\pi,\sigma')\le q\delta$
for the current query piece $\sigma'\subseteq\sigma_l$, where $d$ denotes
$\dF$ or $\ddF$ as appropriate; we call $\pi$ a \emph{surrogate} of quality
$q$.  A surrogate need not be a subcurve of $\tau_k$, or even of $\tau$.
An \emph{auxiliary transfer} takes a set $S$ of reachable sources on $\tau_k$,
the surrogate $\pi$, and a threshold, and returns the endpoints reachable from
$S$ by matching subcurves of $\tau_k$ to $\pi$: transfer completeness keeps every
true endpoint, and transfer soundness certifies every returned one.

The accounting is the same in the two variants.  Let $\lambda$ be the loss of
the transfer structure: $\lambda=1+\eta$ for the continuous portal-rounded
structure, where $\eta\le\eps/6$ is the internal accuracy parameter of the
continuous algorithm, and $\lambda=1$ for the discrete exact structure.  A true witness is a host subcurve $\tau_k[z,y]$ with
$d(\tau_k[z,y],\sigma')\le\delta$.  For every true witness, the triangle
inequality gives
$d(\tau_k[z,y],\pi)\le d(\tau_k[z,y],\sigma')+d(\sigma',\pi)\le(1+q)\delta$.
Transferring at threshold $(1+q)\delta$ therefore loses no witness.  Conversely, a
returned endpoint has a certificate against $\pi$ at threshold
$\lambda(1+q)\delta$, and converting it back to $\sigma'$ costs $q\delta$ once
more.  The resulting local soundness bound is
$\bigl(\lambda(1+q)+q\bigr)\delta$.
As explained next, our surrogate search provides $q=2+\eta$ in the continuous algorithm and $q=2$ in the discrete algorithm.
The transfer loss is $\lambda=1+\eta$ continuously and $\lambda=1$ discretely, giving the bounds $5+5\eta+\eta^2$ and $5$, respectively.

\paragraph{Macro surrogates.}
The macro search still follows the Cheng, Huang and Zhang's\cite{cheng2025constant} philosophy of looking for a short
certificate near a marked host macro.  The difference is what the certificate is
used for.  We construct an auxiliary curve $Q_M$ for each macro $M$, and a
successful search returns a subcurve $\pi\subseteq Q_M$.  This $\pi$ is fed
directly into the transfer structure; it is not converted into a subcurve of the
host block.

A macro is marked by a query curve $Z$ if some host subcurve within distance $\delta$ of $Z$ intersects with it.  
Such a witness may extend far beyond the macro and may contain many vertices.  The key point is that the extension 
has low simplification complexity at threshold $\delta$, by restriction, it matches a subcurve of $Z$, and 
when $Z$ has $O(s)$ complexity this gives an $O(s)$-vertex simplification.  At every macro boundary, 
we therefore precompute the longest vertex-aligned suffix ending there and the longest vertex-aligned prefix 
starting there whose endpoint-augmented simplifications have $O(s)$ vertices.
The \emph{auxiliary curve} of a macro concatenates the simplification of the maximal suffix at its left boundary, 
the macro itself, and the simplification of the maximal prefix at its right boundary, for a total size of $O(s)$. 
Maximality of these suffixes and prefixes forces every witness subcurve that marks the macro to lie inside the host range covered by the macro and the two maximal subcurves, 
and this range is within $(1+\eta)\delta$ of the auxiliary curve
continuously, by the approximate simplification primitive, and within
$\delta$ discretely, by the exact one, so the auxiliary curve of a marked
macro contains a surrogate of $Z$: within $(2+\eta)\delta$ continuously
and within $2\delta$ discretely.  One
free-start/free-end propagation of the auxiliary curve against $Z$
finds it in $O(s\,|Z|)$ time.  A small query is thus searched against a
small precomputed curve, never against one built at the scale of the
full block.

\paragraph{Pieces, sampling, and the sparse fallback.}
The remaining question is which macros are marked, as the algorithm
cannot afford to test them all.  Split $\sigma_l$ into \emph{pieces} of
$\mu_3$ vertices, matching a fine macro scale $s=\mu_3$, so that each
piece-against-macro search costs only $O(\mu_3^2)$.  For each piece,
sample fine macros at random and search the sampled ones.  A piece is \emph{dense} if it marks at least $\omega$ fine macros; 
dense pieces obtain a surrogate from the sample with high probability.  If every piece
obtains a surrogate, the \emph{sequential} branch chains them: it transfers the
bottom-boundary sources through the first surrogate, feeds the output
to the second, and so on to the top boundary.  Completeness is
preserved at every step by cutting the witness at piece boundaries, and
soundness does not accumulate, again because matchings concatenate at
their maximum.  If some piece fails, then on the sampling-success event
it is not dense: the host subcurves matching it can be located outright by one
$O(\mu_1\mu_3)$ time propagation, affordable precisely because the piece is
small, and they fall into fewer than $\omega$ fine macros.  Any full-block witness
restricts, by the restriction property, to a witness for the failed
piece, so it touches one of these macros.  The collected fine macros are mapped to 
their containing \emph{coarse} macros, at scale $s=\mu_2$ matching the whole
block $\sigma_l$, and each coarse auxiliary curve is tested against all
of $\sigma_l$; the first success feeds a single auxiliary-transfer query.  
The two scales are key: the sequential branch performs many cheap fine-scale searches, 
while the sparse fallback performs only a few expensive coarse-scale ones.

\paragraph{Transfer structures.}
It remains to answer the transfer queries themselves.  Answering each
by a fresh dynamic program on $\tau_k\times\pi$ would cost
$O(\mu_1\mu_3)$ per piece, and across all pieces and block pairs this
alone adds up to $\Theta(nm)$.  Instead, each auxiliary curve is
preprocessed once per host block and reused for every block pair in its
row.  The edge sequence of the auxiliary curve carries a balanced binary
decomposition; a query subcurve $\pi$ splits into logarithmically many
canonical intervals, each with a precomputed table, and the tables are
chained in query order.  Because matchings concatenate at their maximum
cost, chaining logarithmically many intervals does not degrade the
threshold, and a query costs $\widetilde O(\mu_1)$.

The continuous and discrete instantiations of this structure differ,
and the difference is where the stronger discrete bound comes from.
Continuously, the sources on a row form intervals with continuously
many positions.  The structure discretizes each source edge by portals,
precomputes the reachable top-row arrays from every portal, and merges
the retrieved arrays greedily in source order at query time; the crossing lemma \Cref{lem:crossing} shows that the greedy merge is correct.  
Rounding sources to portals is what injects the $(1+\eta)$ factor of transfer loss into the soundness
accounting.

Discretely, the free-space graph of $\tau_k$ and an auxiliary curve is
a finite planar directed acyclic graph.  The structure stores, for each
canonical interval, the farthest final-row vertex reachable from every
source and tests newly exposed targets with the constant-time planar
reachability oracle of Holm, Rotenberg, and
Thorup~\cite{holm2015planar}; the same crossing lemma shows that each final-row target 
needs to be tested only once.  No geometric rounding is needed, so the discrete transfer is exact, 
and the per-host-block preprocessing, the transfer tables together with the batched simplifications of the host block, 
is cheaper by a factor of $\mu_1$.  The cheaper preprocessing permits larger host
blocks, and this is the entire source of the gap between the two
exponents: balancing the block, piece, and sampling parameters gives
$\widetilde O(nm^{8/9})$ time continuously, up to $\eps$-dependent
factors, and
$\widetilde O(nm^{4/5})$ time discretely.

Combining the inherited block induction of~\cite{cheng2025constant} with the new
implementation of the bottom-to-top transition yields complete and
$\beta$-sound \textsc{Reach} updates.  The direct auxiliary-transfer
accounting gives a continuous fixed-threshold procedure distinguishing
$d_F\le\delta$ from $d_F>(5+\epsilon)\delta$, and the exact discrete
simplification and planar transfer primitives give a discrete fixed-threshold
procedure distinguishing $d_{dF}\le\delta$ from $d_{dF}>5\delta$.  The optimization
algorithms follow from standard decision-to-optimization conversions: the transformation
of~\cite{colombe2021approximating} continuously, and a threshold search
seeded by a coarse estimate of~\cite{bringmann2016approximability}
discretely.  The search introduces only a tunable $1+O(\eps)$ grid loss,
so the discrete optimization result is stated as a
$(5+\eps)$-approximation although its fixed-threshold decision primitive
is a $5$-gap procedure.

%% file: preliminaries.tex
%======================================================================
\section{Preliminaries} 
\label{sec:preliminaries}
%======================================================================

Throughout the paper, all curves are polygonal curves in $\R^d$, where $d$ is
a fixed constant.  The input curves are given by vertex sequences
\[
    \tau= \langle v_1,\ldots,v_n \rangle,
    \qquad
    \sigma= \langle w_1,\ldots,w_m \rangle,
\]
with $m\le n$.  We write $\dF$ and $\ddF$ for the continuous and discrete
Fr\'echet distances, respectively.

We write $\widetilde O(\cdot)$ for bounds that suppress polylogarithmic
factors in $n$ and $m$, $\widetilde O_d(\cdot)$ when the hidden factors may
also depend on the fixed dimension, and $\widetilde O_{d,\eps}(\cdot)$ when
they may additionally depend on the approximation parameter $\eps$.

In \Cref{sec:continuous,sec:discrete}, we work at a fixed threshold
$\delta>0$ and prove gap-decision procedures.  Their conversion to
approximation algorithms is described in the introduction and carried out
at the end of each of these sections.

%----------------------------------------------------------------------
\subsection{Curve notation and reachability} 
\label{subsec:curve-notation}
%----------------------------------------------------------------------

For a curve $P= \langle p_1,\ldots,p_N \rangle $, let $N = \abs{P}$.  Unless stated otherwise,
curve complexity means number of vertices; an $s$-edge subcurve has $O(s)$
vertices.  We
write $x\le_P y$ if $x$ appears no later than $y$ and write $P[x,y]$ for the
subcurve from $x$ to $y$.  For indices $i\le j$, we also write $P[i,j]$ for
the vertex subcurve $p_i,\ldots,p_j$.  When concatenating consecutive
subcurves with a common endpoint, we identify the common endpoint; keeping an
extra duplicate copy only changes complexity bounds by a constant factor.  Let
$\mathbb B(p,r)=\set{x\in\R^d:\norm{x-p}\le r}$ denote the closed Euclidean
ball of radius $r$ centered at $p$.

For a curve $Q= \langle q_1,\ldots,q_t \rangle$, its edge sequence is
$q_1q_2,\ldots,q_{t-1}q_t$.  A \emph{dyadic interval} of $Q$ means a contiguous
interval of this edge sequence.  The edge interval from $q_iq_{i+1}$ through
$q_{j-1}q_j$ represents the vertex-to-vertex subcurve $Q[i,j]$.  A continuous
subcurve $Q[x,y]$ consists of at most two partial-edge pieces and a maximal
vertex-to-vertex middle part.  A discrete singleton $Q[i,i]$ is treated as a
degenerate subcurve.

Let $P,Q$ be curves and let $r\ge0$.  For $x\le_P y$ and $a\le_Q b$, we say
that $(y,b)$ is \emph{$r$-reachable from $(x,a)$} if
\[
    \dF(P[x,y],Q[a,b])\le r.
\]
Equivalently, there is a monotone path from $(x,a)$ to $(y,b)$ in the
free-space diagram of $P$ and $Q$ at threshold $r$.  When $P=\tau$ and
$Q=\sigma$, a pair $(x,y)$ is \emph{globally $r$-reachable} if it is $r$-reachable
from $(v_1,w_1)$.

For the discrete distance, a vertex pair $(i,j)$ is feasible at threshold $r$
if $\norm{p_i-q_j}\le r$.  The discrete free-space graph contains the feasible
pairs and has directed edges from $(i,j)$ to each feasible pair among
$(i+1,j)$, $(i,j+1)$, and $(i+1,j+1)$.  For $i'\le i$ and $j'\le j$, the pair
$(i,j)$ is discretely $r$-reachable from $(i',j')$ exactly when
\[
    \ddF(P[i',i],Q[j',j])\le r.
\]
When $P=\tau$ and $Q=\sigma$, a discrete vertex pair is globally
$r$-reachable if it is discretely $r$-reachable from $(1,1)$.  The discrete
free-space graph is a planar directed acyclic graph.

We use the triangle inequality for both $\dF$ and $\ddF$.  We also use the
following two elementary consequences of monotone matchings.

\begin{lemma}[Restriction]
\label{lem:restriction}
Suppose $\dF(P,Q)\le r$.  For every subcurve $P'\subseteq P$, there is a
contiguous subcurve $Q'\subseteq Q$ such that $\dF(P',Q')\le r$.  The
analogous statement holds for $\ddF$ and contiguous vertex subcurves.
\end{lemma}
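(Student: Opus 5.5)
We prove the continuous statement first by restricting an optimal matching, and then give the discrete analogue by the same argument on index sequences. Since $P$ and $Q$ are compact polygonal curves and the Fr\'echet distance is realized as a minimum (as in the definition in \Cref{sec:intro}), fix continuous nondecreasing surjections $\rho,\varrho:[0,1]\to[0,1]$ with $\norm{P(\rho(t))-Q(\varrho(t))}\le r$ for all $t$. If only an infimum were available, one would instead take matchings of cost $r+1/k$ and pass to a limit by compactness; for now we assume the minimum is attained.

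Write the subcurve as $P'=P[\alpha,\beta]$ with $\alpha\le\beta$ in the parameter domain. Because $\rho$ is continuous, nondecreasing, and onto, the preimage $\rho^{-1}([\alpha,\beta])$ is a nonempty closed interval $[t_0,t_1]$ with $\rho(t_0)=\alpha$ and $\rho(t_1)=\beta$. Set $a=\varrho(t_0)$ and $b=\varrho(t_1)$. Monotonicity of $\varrho$ gives $a\le b$, so $Q'=Q[a,b]$ is a contiguous subcurve. Next reparametrize $[t_0,t_1]$ affinely onto $[0,1]$. The restricted maps $\rho|_{[t_0,t_1]}$ and $\varrho|_{[t_0,t_1]}$ are then continuous nondecreasing surjections onto $[\alpha,\beta]$ and $[a,b]$; surjectivity follows from the intermediate value theorem. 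After rescaling these target intervals to $[0,1]$, we obtain a matching between $P'$ and $Q'$. Every pair it matches was already matched by $(\rho,\varrho)$, so its cost is at most $r$, which gives $\dF(P',Q')\le r$. The degenerate case $\alpha=\beta$ requires no rescaling: there $Q'$ may be a single point or a subcurve all of whose points lie within distance $r$ of $P(\alpha)$, and the bound holds directly.

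For the discrete version, take an optimal coupling $C=((i_1,j_1),\ldots,(i_L,j_L))$ with every pair at distance at most $r$, and let $P'=P[i',i'']$. Since the first coordinates advance by at most one per step and run from $1$ to $|P|$, the set of positions $\ell$ with $i_\ell\in[i',i'']$ is a contiguous block $\ell_0\le\ell\le\ell_1$. At its ends we have $i_{\ell_0}=i'$, because $i'$ is the first index entered, and $i_{\ell_1}=i''$. Let $Q'=Q[j_{\ell_0},j_{\ell_1}]$. The subsequence $((i_\ell,j_\ell))_{\ell_0\le\ell\le\ell_1}$ starts at $(i',j_{\ell_0})$, ends at $(i'',j_{\ell_1})$, and uses only the steps $(1,0),(0,1),(1,1)$. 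It is therefore a discrete matching of $P'$ and $Q'$, with cost at most $r$.

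The only step that needs care is ensuring the restricted maps are genuine surjective matchings of the right endpoints. In the continuous case this is the preimage-interval argument combined with the intermediate value theorem. In the discrete case it is the observation that the indices advance one at a time. Existence of an optimal matching is standard; if one prefers to avoid it, the limiting argument mentioned in the first paragraph suffices.
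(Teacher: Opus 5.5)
Your proposal is correct and is the same argument as the paper's: restrict a witnessing matching to the parameter (respectively index) interval of $P'$, and note that its projection onto $Q$ is a contiguous subcurve $Q'$ matched to $P'$ at cost at most $r$. You make explicit what the paper leaves implicit, namely the preimage interval $[t_0,t_1]$ with $\rho(t_0)=\alpha$, $\rho(t_1)=\beta$, and the unit-step argument giving $i_{\ell_0}=i'$, $i_{\ell_1}=i''$; your fallback limiting argument is unnecessary, since the paper defines $\dF$ as an attained minimum.
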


\begin{proof}
Restrict a witnessing continuous matching to the parameter interval of $P'$.
Its projection onto $Q$ is contiguous.  The discrete proof is identical.
\end{proof}

\begin{lemma}[Crossing\cite{cheng2025constant}]
\label{lem:crossing}
Suppose there are monotone free-space paths at threshold $r$ from
$(x_1,a)$ to $(y_1,b)$ and from $(x_2,a)$ to $(y_2,b)$, where
\[
    x_1\le_P x_2\le_P y_2\le_P y_1
    \qquad\text{and}\qquad
    a\le_Q b.
\]
Then $(y_2,b)$ is $r$-reachable from $(x_1,a)$.  The same statement holds in
the discrete free-space graph.
\end{lemma}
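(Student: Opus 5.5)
We argue topologically in the free-space diagram of $P$ and $Q$ at threshold $r$. Let $\gamma_1$ be the monotone path from $(x_1,a)$ to $(y_1,b)$ and $\gamma_2$ the one from $(x_2,a)$ to $(y_2,b)$. We place the $P$-parameter on the horizontal axis and the $Q$-parameter on the vertical one, so both paths start on the horizontal line $\{Q=a\}$ and end on the horizontal line $\{Q=b\}$. Inside the strip $a\le_Q \cdot\le_Q b$, the path $\gamma_1$ starts weakly left of $\gamma_2$ (since $x_1\le x_2$) and ends weakly right of it (since $y_2\le y_1$). We want a monotone path from $(x_1,a)$ to $(y_2,b)$ that stays in free space. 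The candidate is the concatenation of a prefix of $\gamma_1$ with a suffix of $\gamma_2$, spliced at a common point.

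\textbf{Step 1 (finding a common point).} Both paths are monotone, so each can be parametrised by its $Q$-coordinate $t\in[a,b]$. On a horizontal segment, where the $Q$-coordinate is constant, we take the whole segment as the fibre over that $t$. Define $f_i(t)$ to be the leftmost $P$-coordinate of $\gamma_i$ at height $t$ and $g_i(t)$ the rightmost. Then $f_1(a)=x_1\le x_2=f_2(a)$, and at height $b$ we have $g_1(b)=y_1\ge y_2=g_2(b)$. Let $t^*=\sup\{t: \gamma_1 \text{ at height } t \text{ lies strictly left of } \gamma_2\}$. By monotonicity and closedness of the paths, the two paths must meet at height $t^*$, giving a point $z$ lying on both. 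The degenerate cases are $x_1=x_2$, where $z=(x_1,a)$ works, and the case where the paths only touch at height $b$, where $z=(y_2,b)$ itself lies on $\gamma_1$ because $\gamma_1$'s top horizontal run covers $[\,\cdot\,,y_1]\ni y_2$. This intermediate-value argument is the only delicate step. To make it rigorous, I would view $\gamma_1$ and $\gamma_2$ as curves in the closed rectangle $[x_1,y_1]\times[a,b]$. The path $\gamma_2$ joins the bottom edge to the top edge. The path $\gamma_1$ together with the bottom segment from $x_1$ to $x_2$ and the top segment from $y_2$ to $y_1$ separates the rectangle. A Jordan-curve (or simpler monotone-staircase) argument then forces an intersection.

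\textbf{Step 2 (splicing).} Follow $\gamma_1$ from $(x_1,a)$ to $z$, then follow $\gamma_2$ from $z$ to $(y_2,b)$. Both pieces are monotone and lie in free space, so the concatenation is a monotone free-space path. Hence $\dF(P[x_1,y_2],Q[a,b])\le r$, which is the claim.

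\textbf{Discrete case.} The same argument works on the grid. Paths are sequences of vertices using steps $(1,0)$, $(0,1)$, $(1,1)$. Two such lattice paths that cross must share a vertex, except for a diagonal step of one path crossing a diagonal step of the other, which cannot occur when both steps are $(1,1)$. The one remaining possibility is a $(1,1)$ step straddling a $(1,0)$ or $(0,1)$ step; I would check directly that this forces a shared vertex. Having found a shared vertex, the splice is identical to Step 2. The main obstacle is therefore just the careful crossing argument in Step 1; everything else is immediate from concatenation of matchings.
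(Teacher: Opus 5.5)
Your proposal is correct and follows the paper's proof. The paper argues the same way: the first path starts weakly left of the second and ends weakly right of it, planarity and monotonicity force a common point, and splicing a prefix of the first path with a suffix of the second gives the desired path. In the discrete case it likewise observes that unit horizontal, unit vertical, and slope-one cell-diagonal edges meet only at grid vertices, so the splice point can be taken to be a shared vertex.

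Your sup-based intermediate-value argument is more detailed than the paper's one-line appeal to planarity. It needs one more trivial case: when $x_1<x_2$ but $\gamma_1$'s initial run on row $a$ already reaches $x_2$, the set defining $t^*$ is empty, and $z=(x_2,a)$ works.
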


\begin{proof}
The first path starts weakly to the left of the second on row $a$ and ends
weakly to its right on row $b$.  Planarity and monotonicity force an
intersection, and splicing the two paths there gives the desired path.  The same ordering argument forces the two embedded grid paths to
intersect.  In this planar embedding, every edge is a unit horizontal
segment, a unit vertical segment, or a cell diagonal of common slope.
Distinct edges therefore meet only at shared grid vertices.  Hence the
two paths share a grid vertex and splicing at this vertex gives the desired
discrete path.
\end{proof}

%----------------------------------------------------------------------
\subsection{Curve simplification}
\label{subsec:simplification}
%----------------------------------------------------------------------

We use the following simplification primitives.  Define
\[
\begin{aligned}
    k_{\mathrm c}^*(P,\delta)
    &\defeq
    \min\set{\abs{Q}:\dF(P,Q)\le\delta},\\
    k_{\mathrm d}^*(P,\delta)
    &\defeq
    \min\set{\abs{Q}:\ddF(P,Q)\le\delta},
\end{aligned}
\]
where the comparison curve $Q$ need not be a subcurve of $P$.

\begin{lemma}[Continuous simplification~\cite{cheng2025simplification}]
\label{lem:simplification}
Let $P$ be a polygonal curve in $\R^d$ for fixed $d$.  Given $\delta>0$ and
$\eta\in(0,1)$, one can compute a polygonal curve $P'$ satisfying
\[
    \dF(P,P')\le(1+\eta)\delta,
    \qquad
    \abs{P'}\le\max\set{1,\,2k_{\mathrm c}^*(P,\delta)-2}.
\]
The running time is
\[
    O\!\left(
        \eta^{-\alpha}\abs{P}\log(1/\eta)
    \right),
    \qquad
    \alpha=2(d-1)\floor{d/2}^{\,2}+d.
\]
\end{lemma}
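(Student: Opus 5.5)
Since \Cref{lem:simplification} is quoted from~\cite{cheng2025simplification}, we only sketch how it can be proved. The plan is a greedy \emph{segment-cover} argument.

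\textbf{Combinatorial core.} Fix an optimal curve $Q^*$ with $k=k_{\mathrm c}^*(P,\delta)$ vertices, together with a matching witnessing $\dF(P,Q^*)\le\delta$. By \Cref{lem:restriction}, each of the $k-1$ edges of $Q^*$ is matched to a contiguous subcurve of $P$, and these subcurves partition $P$ in order. Hence $P$ can be cut into at most $k-1$ consecutive subcurves, each within Fr\'echet distance $\delta$ of a single segment.

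Greedily cut $P$ into maximal consecutive subcurves $P_1,\dots,P_t$, each within distance $\delta$ of some segment $e_i$. We choose these maximal prefixes under a relaxed test that accepts every subcurve admitting a segment at distance $\delta$ and only certifies distance $(1+\eta)\delta$. A standard exchange argument shows that the greedy cut points never fall behind the cut points induced by $Q^*$, so $t\le k-1$.

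The output $P'$ concatenates $e_1,\dots,e_t$, inserting a connector edge from the end of $e_i$ to the start of $e_{i+1}$. This gives $\abs{P'}\le 2t\le 2k-2$ vertices; the degenerate case $t=0$ gives one vertex.

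\textbf{Validity of $P'$.} Let $p$ be the common point of $P_i$ and $P_{i+1}$. The end of $e_i$ and the start of $e_{i+1}$ are both within $(1+\eta)\delta$ of $p$. The ball around $p$ is convex, so the whole connector lies within $(1+\eta)\delta$ of $p$, and we may match the connector entirely to $p$. Concatenating the per-piece matchings, whose costs combine by maximum rather than sum, gives $\dF(P,P')\le(1+\eta)\delta$.

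\textbf{Algorithmic core and main obstacle.} The difficult step is the test ``does $P[x,y]$ lie within distance roughly $\delta$ of one segment?''. Exactly, this is an ordered ball-stabbing problem whose feasible set of segments is a complicated semialgebraic region in $\R^{2d}$. We would approximate it as follows.
\begin{enumerate}[label=(\roman*)]
\item Discretize the segment direction by an $\eta$-net of $O(\eta^{-(d-1)})$ directions on the sphere.
\item For each direction, project onto the orthogonal hyperplane, where the problem becomes stabbing the projected balls in order by a single point.
\item Handle the ordering constraint along the line by monotone interval bookkeeping.
\item Approximate the intersection of projected balls by a polytope with $\eta^{-O(d)}$ facets, so that it can be maintained incrementally when a vertex is appended.
\end{enumerate}
This yields a one-sided test: it accepts whenever distance $\delta$ is achievable, and when it accepts it certifies distance $(1+\eta)\delta$. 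Its cost is $\eta^{-\alpha}$ per appended vertex, where the exponent $\alpha=2(d-1)\floor{d/2}^2+d$ comes from the facet complexity of these approximate intersections.

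To make the greedy linear overall, we find each maximal prefix by exponential search followed by binary search. This amortizes to $O(\abs{P_i})$ test-vertex operations per piece. The extra $\log(1/\eta)$ factor comes from the precision search inside the test. I expect getting the test to be simultaneously one-sided with factor $(1+\eta)$ and incremental in $\eta^{-\alpha}$ time per vertex to be the main obstacle. My first attempt would use direction nets combined with polytope approximations of the balls.
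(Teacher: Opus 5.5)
\paragraph{Main gap: the algorithmic core.}
The paper does not prove this lemma. It imports it as a black box from the cited simplification work, so your sketch has to establish the running time on its own, and that is where it breaks.

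\emph{Step (i) fails as stated.} A fixed net of $O(\eta^{-(d-1)})$ directions cannot give a one-sided test. The error caused by an angular deviation $\theta$ scales with the length $L$ of the segment, and $L/\delta$ is unbounded. Concretely, let $P$ be a single edge of length $L\gg\delta/\eta$ whose direction makes angle at least $c\eta$ with every net direction; such a direction exists for a small constant $c$ by a volume argument. Then $P$ is at Fr\'echet distance $0$ from a segment. Yet every segment with a net direction has Hausdorff distance, hence Fr\'echet distance, at least $L\sin(c\eta)/2>(1+\eta)\delta$ from $P$: one endpoint of $P$ is that far from the supporting line. So your test rejects a prefix that the exchange argument needs it to accept, and the bound $t\le k-1$ is lost.

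\emph{What a repair still needs.} Any repair has to discretize relative to the data. For example, fix the start point on an $\eta\delta$-grid of the first ball and maintain the feasible directions from it. Even then the ordering constraint remains. For a fixed direction, a projected center $c$ is feasible only if $\max_{i\le j}\alpha_i(c)\le\beta_j(c)$ for all $j$, where $[\alpha_j(c),\beta_j(c)]$ is the chord of the $j$-th ball. The running upper envelope $\max_{i\le j}\alpha_i$ grows in complexity with $j$. ``Monotone interval bookkeeping'' does not explain how to keep this at size $\eta^{-O(d)}$ per appended vertex while retaining a one-sided $(1+\eta)$ guarantee. The exponent $\alpha=2(d-1)\floor{d/2}^2+d$ and the $\log(1/\eta)$ factor are asserted, not derived, and that derivation is precisely the content of the cited result.

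\paragraph{What holds, and smaller repairs.}
Your combinatorial half is sound. It combines four ingredients:
\begin{itemize}
\item cutting $P$ along an optimal $Q^*$;
\item heredity of ``within $\delta$ of a segment'' under passing to subcurves, since by \Cref{lem:restriction} a subcurve matches a subsegment;
\item the exchange argument;
\item matching each connector to the common point $p$ by convexity.
\end{itemize}
Together these give $\abs{P'}\le 2k-2$ at error $(1+\eta)\delta$. It needs three repairs.
\begin{itemize}
\item The case $k=1$ must be handled separately, by testing for one enclosing ball of radius $\delta$, computed exactly. Your greedy always yields $t\ge1$ and hence two vertices, exceeding $\max\{1,2k-2\}=1$.
\item Cut points must be allowed in edge interiors, as those of $Q^*$ generally are. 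The test must therefore report the furthest feasible point on the last edge and accept a partial first edge, not just answer yes or no at vertices.
\item Exponential search followed by binary search, with each test rebuilt from scratch, costs $O(\abs{P_i}\log\abs{P_i})$ per piece, not $O(\abs{P_i})$. That adds a $\log\abs{P}$ factor absent from the statement. With a genuinely incremental test you should simply scan forward to the first rejection. This also sidesteps any non-monotonicity of the relaxed test: one-sidedness plus heredity of exact feasibility shows nothing beyond the first rejected vertex is $\delta$-feasible.
\end{itemize}
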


\begin{remark}[Endpoint augmentation]
\label{rem:endpoint-augmentation}
Whenever a simplification of a subcurve $P[x,y]$ is used as an auxiliary
curve, we add the endpoints $x$ and $y$ if they are not already present.
This endpoint augmentation increases the size by at most two and preserves
the same Fr\'echet error bound.  In the continuous setting, the added
connector at an endpoint is matched while the $P$-side stays fixed at that
endpoint; every point of the connector lies within the matched distance by
convexity.  In the discrete setting, the added endpoint vertex is followed
by a step to the old first or last vertex of the simplification while the
corresponding endpoint of $P$ stays fixed.
\end{remark}

The discrete algorithm requires simplifications of many subcurves of the
same host block, together with the exact value of $k_{\mathrm d}^*$ for
each of them.  For a single curve, an exact near-linear time algorithm is
classical~\cite{bereg2008simplifying}; \Cref{lem:batched-discrete-simplifications}
extends it to all $\Theta(N^2)$ vertex subcurves at once: after
$\widetilde{O}_d(N^2)$ preprocessing, each exact value
$k_{\mathrm d}^*(P[i,j],\delta)$ is an $O(1)$-time lookup, and a
simplification of exactly this size at error exactly $\delta$ is
implicitly represented.  The proof rests on a characterization of
$k_{\mathrm d}^*$ by ball partitions (\Cref{lem:ball-partition}), which
we also use on its own later, and a greedy optimality statement
(\Cref{lem:greedy-domination}).

Call a set of consecutive vertices of $P$ a \emph{$\delta$-block} if it is
contained in some ball of radius $\delta$.  Every subblock of a
$\delta$-block is a $\delta$-block.

\begin{lemma}[Ball partition characterization]
\label{lem:ball-partition}
Let $P=\langle p_1,\ldots,p_N\rangle$ be a point sequence in $\R^d$, and let
$\delta>0$.  Fix $1\le i\le j\le N$ and $k\ge 1$.  Then
$k_{\mathrm d}^*(P[i,j],\delta)\le k$ holds if and only if
$p_i,\ldots,p_j$ can be partitioned into at most $k$ contiguous nonempty
$\delta$-blocks.  Moreover, suppose that $\ddF(P[i,j],Z)\le\delta$ for a
point sequence $Z=(z_1,\ldots,z_k)$.  Then such a partition into at most $k$
blocks exists in which each block is contained in $\mathbb B(z_s,\delta)$ for some
$s$, distinct blocks use distinct indices $s$, and these indices increase
along the partition.
\end{lemma}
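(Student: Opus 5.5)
We prove the two directions separately and get the ``moreover'' clause from the forward direction.

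\emph{Forward direction with the refined partition.} Suppose $\ddF(P[i,j],Z)\le\delta$ with $Z=(z_1,\ldots,z_k)$, witnessed by a discrete matching $C=((i_1,j_1),\ldots,(i_L,j_L))$ from $(i,1)$ to $(j,k)$. For each vertex $p_a$ with $i\le a\le j$, set $\phi(a)$ to be the smallest index $s$ such that $(a,s)\in C$. Monotonicity of $C$ makes $\phi$ nondecreasing in $a$, and we have $\norm{p_a-z_{\phi(a)}}\le\delta$. The level sets $\{a:\phi(a)=s\}$ over the values $s$ attained by $\phi$ are contiguous, nonempty, and pairwise disjoint. Their indices $s$ are distinct and increase along $P$. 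Each level set lies in $\mathbb B(z_s,\delta)$, so it is a $\delta$-block, and there are at most $k$ of them. This proves the ``moreover'' statement. If $k_{\mathrm d}^*(P[i,j],\delta)\le k$, pick an optimal $Z$ with $\abs{Z}=k'\le k$ and apply the same argument, which gives the forward implication.

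\emph{Reverse direction.} Suppose $p_i,\ldots,p_j$ is partitioned into contiguous $\delta$-blocks $B_1,\ldots,B_{k'}$ with $k'\le k$. Each $B_t$ lies in some ball $\mathbb B(c_t,\delta)$; choose such a center $c_t$. Let $Z=(c_1,\ldots,c_{k'})$. Build the matching that walks through $B_1$ paired with $c_1$ using steps $(1,0)$, then takes a diagonal step $(1,1)$ from the last vertex of $B_t$ paired with $c_t$ to the first vertex of $B_{t+1}$ paired with $c_{t+1}$, and so on. Every matched pair consists of a point of $B_t$ and its center $c_t$, so it lies within distance $\delta$. Hence $\ddF(P[i,j],Z)\le\delta$, and therefore $k_{\mathrm d}^*(P[i,j],\delta)\le k'\le k$.

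\emph{Main obstacle.} The only delicate point is defining $\phi$ so that the level sets are genuinely disjoint and contiguous, since a vertex $p_a$ may be matched to several $z_s$. Taking the minimum matched index handles this, because monotonicity of the matching gives $\phi(a)\le\phi(a+1)$. One should also note that some $z_s$ may receive no block. That is harmless, because the statement only requires at most $k$ blocks and increasing indices.
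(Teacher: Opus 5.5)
Your proof is correct and follows the paper's argument. The reverse direction is the same center-pairing construction, and your level sets of $\phi(a)=\min\{s:(a,s)\in C\}$ coincide exactly with the paper's ranges $[m_{s-1}+1,m_s]$, where $m_s=\max I_s$ and $I_s$ is the set of indices matched with $z_s$. Your definition is slightly cleaner, because monotonicity of $\phi$ follows directly from monotonicity of $C$ and replaces the paper's auxiliary claim $\min I_s\le m_{s-1}+1$; you should still note explicitly that every $a\in[i,j]$ occurs in $C$, so that $\phi$ is defined everywhere.
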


\begin{proof}
($\Leftarrow$)  Let $B_1,\ldots,B_t$ with $t\le k$ be the blocks in order.
For each $r$, pick a center $c_r$ with $B_r\subseteq \mathbb B(c_r,\delta)$.  Pair
every index of $B_r$ with $c_r$.  Inside $B_r$, only the index on the $P$
side advances.  From the last index of $B_r$, both indices advance
simultaneously to the first index of $B_{r+1}$ paired with $c_{r+1}$.  The
result is a monotone matching between $P[i,j]$ and $(c_1,\ldots,c_t)$ in
which every pair is within distance $\delta$.  Hence
$k_{\mathrm d}^*(P[i,j],\delta)\le t\le k$.

($\Rightarrow$, together with the moreover part)  Let $C$ be a monotone
matching that certifies $\ddF(P[i,j],Z)\le\delta$.  For $s\in[k]$, let $I_s$
be the set of indices matched with $z_s$.  Along $C$, both indices are nondecreasing, 
and each step advances each index by at most one.  The $Z$-side index starts at $1$ and ends at $k$
and never skips a value, so $I_s$ is nonempty.  Moreover, the pairs with
$Z$-side index $s$ form a contiguous run of $C$, along which the $P$-side
index advances by at most one per step, so $I_s$ consists of consecutive
integers.  Every $a\in I_s$ satisfies $\norm{p_a-z_s}\le\delta$.  Set $m_0:=i-1$ and
$m_s:=\max I_s$ for $s\in[k]$.  We claim that $\min I_s\le m_{s-1}+1$ for
every $s$.  For $s=1$, the claim holds because $C$ starts at the pair
$(i,1)$.  For $s\ge 2$, consider the step of $C$ that leaves the pair
$(m_{s-1},s-1)$.  By the maximality of $m_{s-1}$, this step advances the
index on the $Z$ side.  Hence the next pair is $(m_{s-1},s)$ or
$(m_{s-1}+1,s)$, and the claim follows.  The claim gives
$[m_{s-1}+1,m_s]\subseteq I_s$ whenever $m_s>m_{s-1}$.  The nonempty ranges
$[m_{s-1}+1,m_s]$ for $s=1,\ldots,k$ are contiguous and appear in order.
They jointly cover $[i,j]$ because $m_k=j$.  The range for index $s$ is
contained in $\mathbb B(z_s,\delta)$.  Deleting the empty ranges yields the asserted
partition into at most $k$ $\delta$-blocks.
\end{proof}

\begin{corollary}
\label{cor:kstar-witness}
If $\ddF(A,Z)\le\delta$ for point sequences $A$ and $Z$, then
$k_{\mathrm d}^*(A,\delta)\le\abs{Z}$.
\end{corollary}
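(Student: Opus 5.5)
The plan is to read the corollary off the forward direction of \Cref{lem:ball-partition}, using $Z$ itself as the comparison sequence. Write $A=\langle a_1,\ldots,a_N\rangle$ and $Z=(z_1,\ldots,z_k)$ with $k=\abs{Z}\ge 1$. Both are nonempty point sequences, so this is harmless. View $A$ as $A[1,N]$, i.e.\ take $i=1$ and $j=N$ in the lemma.

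First I apply the ``moreover'' part of \Cref{lem:ball-partition} to the hypothesis $\ddF(A[1,N],Z)\le\delta$. It yields a partition of $a_1,\ldots,a_N$ into at most $k$ contiguous nonempty blocks. Each block is contained in some ball $\mathbb B(z_s,\delta)$, so each block is a $\delta$-block.

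Next I invoke the ($\Leftarrow$) direction of the same lemma with this $k$. A partition into at most $k$ contiguous nonempty $\delta$-blocks gives $k_{\mathrm d}^*(A,\delta)\le k=\abs{Z}$.

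There is no real obstacle here. The only point to check is that the lemma's hypotheses hold: $k\ge1$, and the subcurve $A[i,j]$ may be all of $A$.

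As a sanity check, the conclusion also follows directly from the definition of $k_{\mathrm d}^*$. That definition minimizes $\abs{Q}$ over all curves $Q$ with $\ddF(A,Q)\le\delta$, and $Q=Z$ is admissible. Either route gives the bound.
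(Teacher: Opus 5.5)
Your proposal is correct and is the paper's proof: apply the ``moreover'' part of \Cref{lem:ball-partition} to $A$ and $Z$ to get at most $\abs{Z}$ contiguous $\delta$-blocks, then use the ($\Leftarrow$) direction with $k=\abs{Z}$. Your closing remark is also a complete proof by itself: $Z$ is admissible in the minimization that defines $k_{\mathrm d}^*(A,\delta)$, so the lemma is not actually needed.
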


\begin{proof}
Apply the moreover part of \Cref{lem:ball-partition} to $A$ and $Z$, and
then apply the equivalence with $k=\abs{Z}$.
\end{proof}

\begin{lemma}[Greedy optimality]
\label{lem:greedy-domination}
Fix $1\le i\le j\le N$.  The greedy partition of $P[i,j]$ into
$\delta$-blocks is defined as follows: repeatedly remove the longest prefix
of the remaining sequence that is a $\delta$-block.  Let
$i-1=s_0<s_1<\cdots<s_q=j$ be its boundaries.  Let
$i-1=e_0<e_1<\cdots<e_p=j$ be the boundaries of an arbitrary partition of
$P[i,j]$ into contiguous nonempty $\delta$-blocks.  Then $e_r\le s_r$ for
all $r\le\min\{p,q\}$.  Consequently, $p\ge q$ and
\[
    q \;=\; k_{\mathrm d}^*(P[i,j],\delta).
\]
\end{lemma}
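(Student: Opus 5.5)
The plan is a standard greedy-stays-ahead induction on $r$, followed by a counting step that ties the number of greedy blocks to $k_{\mathrm d}^*$ through \Cref{lem:ball-partition}.

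\textbf{Base case and inductive step.} The base case $r=0$ is trivial, since $e_0=s_0=i-1$. For the step, suppose $e_{r-1}\le s_{r-1}$ with $r\le\min\{p,q\}$. The block $\{e_{r-1}+1,\ldots,e_r\}$ of the arbitrary partition is a $\delta$-block. I split on how $e_r$ compares with $s_{r-1}$.
\begin{itemize}
\item If $e_r\le s_{r-1}$, then $e_r\le s_{r-1}<s_r$, and we are done.
\item Otherwise $e_r>s_{r-1}\ge e_{r-1}$. Then $\{s_{r-1}+1,\ldots,e_r\}$ is a contiguous subblock of $\{e_{r-1}+1,\ldots,e_r\}$. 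Since every subblock of a $\delta$-block is a $\delta$-block, it is itself a $\delta$-block. It is also a prefix of the sequence remaining after the first $r-1$ greedy steps, so greedy's choice of the longest such prefix gives $s_r\ge e_r$.
\end{itemize}
The only subtlety is that the greedy prefix at step $r$ starts at $s_{r-1}+1$ and not at $e_{r-1}+1$. This is exactly why we pass to the subblock rather than compare blocks directly.

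\textbf{$p\ge q$.} Suppose $p<q$. Applying the inductive claim at $r=p$ gives $j=e_p\le s_p<s_q=j$, a contradiction. Hence $p\ge q$.

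\textbf{$q=k_{\mathrm d}^*(P[i,j],\delta)$.} I use the equivalence in \Cref{lem:ball-partition} in both directions.
\begin{itemize}
\item The greedy partition is a partition into $q$ contiguous nonempty $\delta$-blocks, so the ($\Leftarrow$) direction gives $k_{\mathrm d}^*\le q$. Nonemptiness holds because each single point is a $\delta$-block (take the ball centered at the point), so every greedy step makes progress.
\item Let $k=k_{\mathrm d}^*(P[i,j],\delta)$. The ($\Rightarrow$) direction provides a partition into $p\le k$ $\delta$-blocks, and by the previous step $q\le p\le k$.
\end{itemize}
Together these give equality.

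\textbf{Main obstacle.} None of the steps is hard. The one place needing care is the case split in the inductive step, in particular ensuring the subblock $\{s_{r-1}+1,\ldots,e_r\}$ is nonempty and is a genuine prefix of the remaining sequence.
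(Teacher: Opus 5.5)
Your proof is correct and follows the paper's argument: a greedy-stays-ahead induction that compares the segment from $s_{r-1}+1$ to $e_r$ (a subblock of the $r$-th arbitrary block) against the greedy maximality, then the same contradiction $j=e_p\le s_p<s_q=j$, then the ball-partition characterization to get $q=k_{\mathrm d}^*(P[i,j],\delta)$. The only difference is cosmetic: the paper argues the inductive step by contradiction through the non-$\delta$-block $P[s_{r-1}+1,s_r+1]$, treating $r=1$ and $s_r=j$ separately, whereas you apply the longest-prefix property directly after a case split on whether $e_r\le s_{r-1}$.
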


\begin{proof}
We prove $e_r\le s_r$ by induction on $r$.  If $s_r=j$, the claim is
trivial.  So assume $s_r<j$.  By the maximality of the $r$-th greedy block,
$P[s_{r-1}+1,s_r+1]$ is not a $\delta$-block.  Consider first $r=1$, and
suppose that $e_1>s_1$.  Then $P[i,s_1+1]$ is a subblock of the
$\delta$-block $P[i,e_1]$, so it is itself a $\delta$-block.  This is a
contradiction.  Now let $r>1$, assume $e_{r-1}\le s_{r-1}$, and suppose that
$e_r>s_r$.  Then
\[
    P[s_{r-1}+1,\,s_r+1]\;\subseteq\;P[e_{r-1}+1,\,e_r].
\]
The right side is the $r$-th block of the given partition, so the left side
is a $\delta$-block.  This is again a contradiction, and the domination
follows.  If $p<q$, then $j=e_p\le s_p<s_q=j$, which is absurd.  Hence every
partition of $P[i,j]$ into contiguous nonempty $\delta$-blocks has at least
$q$ blocks.  The greedy partition is itself such a partition, so $q$ is the
minimum number of blocks.  By \Cref{lem:ball-partition}, this minimum equals
$k_{\mathrm d}^*(P[i,j],\delta)$.
\end{proof}

\begin{lemma}[Batched discrete simplifications]
\label{lem:batched-discrete-simplifications}
Let $P=(p_1,\ldots,p_N)$ be a point sequence in $\R^d$ for fixed
$d$, and let $\delta>0$.  One can preprocess $P$ deterministically in $\widetilde O_d(N^2)$ time and
$O(N^2)$ space so that the following holds for every $1\le i\le j\le N$.
\begin{enumerate}
    \item The exact value $k_{\mathrm d}^*(P[i,j],\delta)$ is returned in
    $O(1)$ time.
    \item The data structure implicitly represents a simplification
    $\zeta_{i,j}$ with
    \[
        \ddF(P[i,j],\zeta_{i,j})\le\delta,
        \qquad
        \abs{\zeta_{i,j}} \;=\; k_{\mathrm d}^*(P[i,j],\delta).
    \]
    The curve $\zeta_{i,j}$ is represented as a prefix of a stored center
    sequence, together with a compact monotone block matching.  Both can be
    materialized in $O(\abs{\zeta_{i,j}})$ time.
\end{enumerate}
\end{lemma}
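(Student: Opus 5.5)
We plan to build everything on the greedy partition of \Cref{lem:greedy-domination}, which by that lemma realizes $k_{\mathrm d}^*(P[i,j],\delta)$ exactly. The first step computes, for every start index $a\in[N]$, the greedy \emph{first-block end}
\[
    g(a)\;\defeq\;\max\set{b\ge a:\ \{p_a,\ldots,p_b\}\text{ is a }\delta\text{-block}}.
\]
Since subblocks of $\delta$-blocks are $\delta$-blocks, $g$ is well defined and nondecreasing in $a$. We compute all values of $g$ by a two-pointer scan: advance $b$ while $p_a,\ldots,p_{b+1}$ fits in a ball of radius $\delta$, then increment $a$. Each membership test is a minimum enclosing ball computation on the current window. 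For fixed $d$, this runs in $O_d(N)$ time by Megiddo/Welzl, so the total cost is $O_d(N^2)$ even if we recompute from scratch. We also record a center $c(a)$ of an enclosing ball for the window $[a,g(a)]$. Any subwindow $[a,b]$ with $b\le g(a)$ is contained in the same ball, so $c(a)$ serves for all greedy blocks starting at $a$.

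The second step fills the table. The greedy partition of $P[i,j]$ is the chain $i=a_1$, $a_{r+1}=g(a_r)+1$, truncated at the first $a_r$ with $g(a_r)\ge j$; its last block is $[a_r,j]$. For each fixed $i$ we walk this chain once, from $i$ up to $N$. For every $j$ we then set $k_{\mathrm d}^*(P[i,j],\delta)$ equal to the number of chain starts $\le j$, filling row $i$ in $O(N)$ time. Over all rows this is $O(N^2)$ time and space, and each query is an $O(1)$ lookup. Correctness of the value is exactly \Cref{lem:greedy-domination}.

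The third step represents $\zeta_{i,j}$. For each $i$ we store the center sequence $(c(a_1),c(a_2),\ldots)$ along the chain from $i$, using $O(N)$ words per row and $O(N^2)$ overall. Then $\zeta_{i,j}$ is the prefix of length $k_{\mathrm d}^*(P[i,j],\delta)$ of this sequence. The compact block matching consists of the breakpoints $a_r$ together with the final endpoint $j$. The $(\Leftarrow)$ construction in the proof of \Cref{lem:ball-partition} turns a partition with centers into a monotone matching of cost at most $\delta$, which gives $\ddF(P[i,j],\zeta_{i,j})\le\delta$ and $\abs{\zeta_{i,j}}=k_{\mathrm d}^*$. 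Materialization takes $O(\abs{\zeta_{i,j}})$ time by reading the stored prefixes.

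The only delicate point is the first step. The greedy block's enclosing-ball center must be exact, or at least the $\delta$-block test must be exact, so that $k_{\mathrm d}^*$ is exact rather than approximate. We will handle this by using an exact minimum enclosing ball algorithm in fixed dimension; in the real-RAM model it runs in linear time, which is within the $\widetilde O_d(N^2)$ budget. Monotonicity of $g$ ensures that the two-pointer scan never needs to move backwards.
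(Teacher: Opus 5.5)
Your proposal is correct, but it takes a somewhat different route from the paper.

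\textbf{How the paper does it.} The paper handles each start index $i$ separately. It recomputes the greedy run of $P[i,N]$ from scratch, finding each block boundary by exponential search followed by binary search with exact smallest-enclosing-ball tests. This costs $O(L_r\log N)$ per block and $O(N^2\log N)$ overall, and it computes a fresh center for every block of every run.

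\textbf{How your route differs.} You use the fact that the greedy partition is memoryless: the next block start depends only on the current start. So every run is an iterate of the single successor map $a\mapsto g(a)+1$. Because $g$ is nondecreasing, one two-pointer sweep computes all of $g$ with $O(N)$ exact enclosing-ball tests on $O(N)$ points each. A single center $c(a)$ per start index then serves every run that passes through $a$.

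\textbf{What each approach buys.} Your version removes the $\log N$ factor, giving $O_d(N^2)$ instead of $O_d(N^2\log N)$, and it makes the sharing between rows explicit. The paper's galloping search is self-contained per start and costs only $O((N-i+1)\log N)$ for a single run. Your sweep is a global computation that can cost $\Theta(N^2)$ up front. For the all-pairs table required here, yours is the more economical choice.

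\textbf{Where the two proofs coincide.} The remaining steps match the paper. Both need that the greedy partition of $P[i,j]$ is the chain from $i$ truncated at $j$. The paper proves this as ``taking longest prefixes commutes with truncation.'' For you it is immediate, since the longest $\delta$-block prefix of $P[a_r,j]$ ends at $\min\{g(a_r),j\}$ because subblocks of $\delta$-blocks are $\delta$-blocks; you should state this in one line. Exactness of the value then follows from \Cref{lem:greedy-domination}, and the matching comes from the $(\Leftarrow)$ direction of \Cref{lem:ball-partition}.

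\textbf{One point to fix.} The statement requires deterministic preprocessing, so you should commit to a deterministic linear-time enclosing-ball algorithm for fixed $d$; the paper cites Chazelle and Matou\v{s}ek. Your ``Megiddo/Welzl'' is ambiguous, and Welzl's algorithm alone gives the time bound only in expectation.
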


\begin{proof}
The subcurve $P[b,e]$ is a $\delta$-block if and only if its smallest
enclosing ball has radius at most $\delta$.  Computing the smallest
enclosing ball is an LP-type problem of combinatorial dimension $d+1$.  For
fixed $d$, its center and radius can be computed in time linear in the
number of points, with a multiplicative constant that depends only on $d$: deterministically by \cite{chazelle1996linear}, or by the simpler
randomized algorithm of \cite{welzl2005smallest} in expected linear time.  With the
randomized algorithm, the preprocessing bound below holds in expectation.
The resulting feasibility test is exact.  It is also monotone: if $P[b,e]$
is a $\delta$-block, then so is $P[b,e']$ for every $e'\in[b,e]$.

\emph{Preprocessing.}
Fix a starting index $i$.  We compute the boundaries
$i-1=s_0<s_1<\cdots<s_q=N$ of the greedy partition of $P[i,N]$.  Each
boundary is found by exponential search followed by binary search.
Starting from $b=s_{r-1}+1$, we test the prefixes ending at
$b,\,b+1,\,b+3,\,b+7,\ldots$ until the first infeasible endpoint or the end
of the curve.  We then binary search for the maximal feasible endpoint in
the last gap.  This takes $O(\log N)$ feasibility tests, and each test runs
on a window of length at most twice the resulting block.  Hence the $r$-th
block costs $O(L_r\log N)$ time, where $L_r=s_r-s_{r-1}$, and the whole run
from $i$ costs $O((N-i+1)\log N)$ time.  For this start we store three
items.  The first item is the boundary sequence.  The second item is the
sequence of centers $y_1,\ldots,y_q$, where $y_r$ is the center of the
exact smallest enclosing ball of the greedy block $P[s_{r-1}+1,s_r]$; each
center takes one additional computation is linear in the length of the block.  The third item is
the table
\[
    t(i,j)\;=\;\min\{\,t : s_t\ge j\,\}
    \qquad\text{for all } j\ge i,
\]
filled by one scan of the boundaries.  Summing over all starting indices $i$ gives $O(N^2\log N)$ time, where the
hidden constant depends only on $d$, that is, $\widetilde O_d(N^2)$ time,
and $O(N^2)$ space in total.

\emph{Query (1).}
Write $t=t(i,j)$.  We claim that the greedy partition of $P[i,j]$ has the
boundaries $\min\{s_r,j\}$ for $r=1,\ldots,t$.  For $r<t$, by induction on $r$, the $r$-th block
of $P[i,j]$ coincides with the $r$-th block of the run from $i$.  
This block is a $\delta$-block, and its maximality certificate is
inherited: $P[s_{r-1}+1,s_r+1]$ is not a $\delta$-block, and
$s_r+1\le s_{t-1}+1\le j$, so the certificate lies inside $[i,j]$.  The
final block $P[s_{t-1}+1,j]$ is a subblock of the greedy block
$P[s_{t-1}+1,s_t]$, so it is a $\delta$-block, and it ends at $j$.  In
other words, taking longest $\delta$-block prefixes commutes with
truncation at $j$.  Applying \Cref{lem:greedy-domination} to $P[i,j]$ gives
\[
    k_{\mathrm d}^*(P[i,j],\delta)\;=\;t(i,j).
\]
The query is one array lookup.

\emph{Query (2).}
Set $\zeta_{i,j}:=(y_1,\ldots,y_t)$ with $t=t(i,j)$.  Its block matching
pairs
\[
    P[s_{r-1}+1,\,\min\{s_r,j\}]
    \quad\text{with } y_r,
    \qquad r=1,\ldots,t.
\]
Every full greedy block lies in $\mathbb B(y_r,\delta)$, because the block is a
$\delta$-block and $y_r$ is the center of its smallest enclosing ball.  The
truncated final block is a subset of the $t$-th full block, so the same
center serves.  By the construction in the ($\Leftarrow$) direction of
\Cref{lem:ball-partition}, $\ddF(P[i,j],\zeta_{i,j})\le\delta$.  By
Query~(1), $\abs{\zeta_{i,j}}=t(i,j)=k_{\mathrm d}^*(P[i,j],\delta)$.  The
prefix and the matching intervals are read off the stored arrays in
$O(\abs{\zeta_{i,j}})$ time.
\end{proof}

%----------------------------------------------------------------------
\subsection{Reachability primitives}
\label{subsec:local-propagation}
%----------------------------------------------------------------------

We repeatedly apply the standard free-space dynamic program on small
subproblems.  Given curves $P,Q$, a threshold $r$, and initial reachable
portions on the left and bottom boundaries of $P\times Q$, standard local
free-space propagation computes the reachable portions on the right and top
boundaries in $O(\abs{P}\abs{Q})$ time~\cite{alt1995computing}.  The discrete
analogue is the usual propagation in the monotone grid graph and has the same
running time.

We also use a free-start/free-end variant.  Initialize every free point on the
bottom boundary $P\times\{q_1\}$ as a source and run the same propagation.
A point $(y,q_{\abs Q})$ is reachable if and only if there exists
$x\le_P y$ such that
\[
    \dF(P[x,y],Q)\le r.
\]
Thus one computation finds a subcurve of $P$ matching all of $Q$, if one
exists, in $O(\abs{P}\abs{Q})$ time.  The discrete analogue initializes all
feasible vertices on the first row.

The discrete algorithm additionally uses the following oracle.

\begin{theorem}[Planar reachability oracle~\cite{holm2015planar}]
\label{thm:planar-reachability}
A planar directed graph with $N$ vertices can be preprocessed in $O(N)$ time
and space so that reachability between any two vertices can be tested in
$O(1)$ time.
\end{theorem}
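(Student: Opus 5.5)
This statement, \Cref{thm:planar-reachability}, is the Holm--Rotenberg--Thorup oracle. The paper imports it as a black box, so the plan is to recall the structure of their argument rather than rebuild it, and to say how I would reconstruct it.

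\textbf{Step 1: reduce to acyclic graphs.} Contract strongly connected components, computed in $O(N)$ time by Tarjan's algorithm. Contracting connected subgraphs preserves planarity, and reachability between two vertices equals reachability between their components. For the discrete free-space graph used here the graph is already a DAG, so nothing needs to be done.

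\textbf{Step 2: separator decomposition.} Use Thorup's decomposition. Pick a spanning tree rooted at a source and build a recursive separator hierarchy in which every separator is a constant number of directed paths. This relies on the fact that a planar digraph has a \emph{layered} separator made of $O(1)$ dipaths from each layer. Every $u\to v$ path then meets some separator path $Q$ at the first level where $u$ and $v$ are split. Along $Q$, store for each vertex $u$ the first vertex of $Q$ reachable from $u$, written $\mathrm{to}_Q(u)$, and the last vertex of $Q$ that can reach $u$, written $\mathrm{from}_Q(u)$. The query $u\to v$ then reduces to the check $\mathrm{to}_Q(u)\le_Q \mathrm{from}_Q(v)$ over the $O(1)$ separator paths at each relevant level.

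Done naively, this gives $O(\log N)$ query time and $O(N\log N)$ space.

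\textbf{Step 3: reach $O(1)$ query and $O(N)$ preprocessing.} This is the main obstacle, and it is the content of Holm--Rotenberg--Thorup.

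For the query, the first thing I would try is to use planarity, in the spirit of the crossing argument of \Cref{lem:crossing}, to show that for fixed $u$ only $O(1)$ levels can matter. Combined with a least-common-ancestor lookup in the separator tree, this identifies the relevant level in $O(1)$ time.

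For the space and preprocessing, I would use an $r$-division with $r=\mathrm{polylog}\,N$. The top-level structure is built only on the boundary vertices, which costs $O(N)$ after amortization. Queries inside a piece are answered recursively, and at the bottom level pieces are small enough that their answers can be precomputed and stored in tables indexed by word-sized encodings.

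I expect the delicate part to be proving that the $\mathrm{to}_Q$ and $\mathrm{from}_Q$ labels are monotone along each separator path. That monotonicity is what allows them to be compressed and built in linear total time.

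Since the paper only invokes this theorem, the appropriate justification in the text is the citation to \cite{holm2015planar}.
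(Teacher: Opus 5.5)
Your proposal matches the paper's treatment: the paper gives no proof of \Cref{thm:planar-reachability} and imports it as a black box from \cite{holm2015planar}, exactly as you conclude, so your reconstruction sketch is optional background rather than a load-bearing step. One historical correction to that sketch: constant query time with $O(N\log N)$ space was already achieved by Thorup's earlier separator-based oracle, so the actual contribution of \cite{holm2015planar} is bringing space and construction time down to $O(N)$ while keeping $O(1)$ queries, not achieving constant query time itself.
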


These primitives are applied only to local curves or preprocessed auxiliary
free-space graphs; the algorithms never run the quadratic dynamic programming algorithm on
the full input diagram.

\paragraph{Certified simplifications.}
Whenever a later routine needs an explicit matching between a curve $P$ and a
simplification $\zeta$, we recover one using the corresponding local dynamic programming algorithm. 
 In the continuous setting, if $\dF(P,\zeta)\le r$, standard
free-space propagation with predecessor information recovers a monotone
free-space path $\Gamma$ in $O(\abs{P}\abs{\zeta})$ time.  In the discrete
setting, if $\ddF(P,\zeta)\le r$, backtracking the discrete dynamic programming algorithm
recovers a monotone grid path $\Gamma$ within the same time bound.  In either
case, we store $\Gamma$ in compressed form of size $O(\abs{P}+\abs{\zeta})$.

Images and preimages are taken with respect to this stored matching path.  In
the continuous setting, write
\[
    \Gamma(t)=(\Gamma_P(t),\Gamma_\zeta(t)),
\]
where both coordinates are nondecreasing.  For an interval $X\subseteq P$ and
an interval $Y\subseteq \zeta$, define
\[
    M_{P,\zeta}(X)
    :=
    \Gamma_\zeta\bigl(\Gamma_P^{-1}(X)\bigr),
    \qquad
    M^{-1}_{P,\zeta}(Y)
    :=
    \Gamma_P\bigl(\Gamma_\zeta^{-1}(Y)\bigr).
\]
The discrete definitions are identical, with $\Gamma$ viewed as a monotone
grid path and $X,Y$ as contiguous vertex intervals.

By monotonicity of $\Gamma$, the image and preimage of an interval are again
intervals, and they can be computed by a linear scan of the stored path.  For
a single point $x\in P$, the set $M_{P,\zeta}(x)$ may be an interval of
$\zeta$; when a single representative is needed, we choose any
$\bar x\in M_{P,\zeta}(x)$.

If $x\le_P y$, $\bar x\in M_{P,\zeta}(x)$, and
$\bar y\in M_{P,\zeta}(y)$ occur in this order along $\Gamma$, then the
restricted path certifies
\[
    \dF\bigl(P[x,y],\zeta[\bar x,\bar y]\bigr)\le r.
\]
The analogous statement holds in the discrete setting: if $i\le j$ and
$\bar i\in M_{P,\zeta}(i)$, $\bar j\in M_{P,\zeta}(j)$ occur in order along
the stored grid path, then
\[
    \ddF\bigl(P[i,j],\zeta[\bar i,\bar j]\bigr)\le r.
\]

%----------------------------------------------------------------------
\subsection{Blocks and approximate reachable sets}
\label{subsec:block-framework}
%----------------------------------------------------------------------

Let $\mu_1$ and $\mu_2$ be block-size parameters.  Partition $\tau$ into
consecutive blocks of complexity $\Theta(\mu_1)$ and $\sigma$ into consecutive
blocks of complexity $\Theta(\mu_2)$.  Standard rounding and padding affect
only constant factors.  Write
\[
    \tau_k=\tau[a_k,a_{k+1}],
    \qquad
    \sigma_l=\sigma[b_l,b_{l+1}]
\]
for one pair of blocks, where consecutive blocks share their boundary point.

The four boundaries of $\tau_k\times\sigma_l$ are
\[
\begin{aligned}
    L_{k,l}&=\set{a_k}\times\sigma_l,
    &\qquad
    R_{k,l}&=\set{a_{k+1}}\times\sigma_l,\\
    B_{k,l}&=\tau_k\times\set{b_l},
    &
    T_{k,l}&=\tau_k\times\set{b_{l+1}}.
\end{aligned}
\]
The left and bottom boundaries are incoming, while the right and top
boundaries are outgoing.  In the discrete setting, the same notation denotes
the corresponding boundary vertices of the grid block.

Fix $\delta>0$ and $\beta\ge1$.  A set on a block boundary is
\emph{complete} if it contains every globally $\delta$-reachable point on that
boundary, and is \emph{$\beta$-sound} if every point it contains is globally
$\beta\delta$-reachable.  When such a set is used as input to a local
propagation at threshold $\delta$, we denote by $A^\delta$ its intersection
with the corresponding $\delta$-free boundary portion.  This clipping preserves
all sources that can participate in a local $\delta$-witness through the
block, and preserves $\beta$-soundness because $A^\delta\subseteq A$.

We use without further mention that taking unions of complete and
$\beta$-sound boundary sets preserves both properties.

In the continuous algorithm, boundary sets are represented by intervals on
boundary edges.  In the discrete algorithm, they are represented by subsets
of boundary vertices.  We write \REACH{} for a local block update that receives
sets on $L_{k,l}\cup B_{k,l}$ and returns sets on
$R_{k,l}\cup T_{k,l}$.  Each implementation of \REACH{} will map complete and
$\beta$-sound incoming sets to complete and $\beta$-sound outgoing sets.

%% file: continuous.tex
%======================================================================
\section{Continuous Fr\'echet Algorithm}
\label{sec:continuous}
%======================================================================

We prove the continuous result by a fixed-threshold gap-decision procedure.
The decision-to-approximation conversion is given at the end of the section.

\begin{theorem}[Continuous gap decision]
\label{thm:continuous-gap}
Let $\tau$ and $\sigma$ be polygonal curves in fixed dimension, with
$\abs{\tau}=n$ and $\abs{\sigma}=m\le n$.  For any $\epsilon>0$ and threshold
$\delta>0$, there is a randomized gap-decision procedure that accepts if
$\dF(\tau,\sigma)\le\delta$ and rejects if
$\dF(\tau,\sigma)>(5+\epsilon)\delta$, with high probability, in
$\widetilde O_{d,\epsilon}(n m^{8/9})$ time.
\end{theorem}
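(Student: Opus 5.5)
The plan is to run the block induction of \Cref{subsec:block-framework} over the grid of block pairs $\tau_k\times\sigma_l$ in topological order. We initialize the boundary sets at the global origin, and we accept exactly when the corner $(v_n,w_m)$ lies in the final stored set. Set $\beta=5+\epsilon$ and the internal accuracy $\eta=\epsilon/6$, so that $5+5\eta+\eta^2\le 5+\epsilon$. Correctness then reduces to one claim: a single \REACH{} update maps complete, $\beta$-sound incoming sets to complete, $\beta$-sound outgoing sets. Completeness at the corner gives acceptance when $\dF\le\delta$. Soundness gives rejection when $\dF>\beta\delta$. Soundness does not compound across blocks, because a local certificate at threshold $\beta\delta$ concatenates with a global one at threshold $\beta\delta$ at their maximum. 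I would split \REACH{} into the four witness classes. For left-to-right, left-to-top and bottom-to-right witnesses, the host side is pinned to a prefix, a suffix, or all of $\tau_k$, and it matches an $O(\mu_2)$-vertex portion of $\sigma_l$. By \Cref{lem:simplification} that host piece has a simplification of size $O(\mu_2)$ at error $(1+\eta)\delta$. I run local propagation at threshold $(2+\eta)\delta$ against the simplification, and then transfer back, for soundness about $(3+2\eta)\delta\le\beta\delta$, at cost $O(\mu_2^2)$ per block pair plus precomputed simplifications. Completeness in this step is the triangle inequality.

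The bottom-to-top class is the heart of the proof. I would implement it as in the overview. Split $\sigma_l$ into pieces of $\mu_3$ vertices. For each fine macro $M$ of $\tau_k$ (scale $\mu_3$) and each coarse macro (scale $\mu_2$), build the auxiliary curve $Q_M$: the endpoint-augmented simplification (\Cref{rem:endpoint-augmentation}) of the maximal suffix ending at the left boundary of $M$, then $M$ itself, then the maximal prefix starting at its right boundary, with each maximal piece chosen so that its simplification has $O(s)$ vertices. The key structural lemma says: if $Z$ has $O(s)$ vertices and marks $M$, then some subcurve of $Q_M$ is within $(2+\eta)\delta$ of $Z$. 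The proof goes as follows. By \Cref{lem:restriction}, the parts of the witness outside $M$ match subcurves of $Z$, so they have $k_{\mathrm c}^*=O(s)$. Maximality therefore confines the witness to the covered range, and that range is within $(1+\eta)\delta$ of $Q_M$; the triangle inequality finishes it. A free-start/free-end propagation of $Q_M$ against $Z$ locates such a $\pi$ in $O(s|Z|)$ time. Given a surrogate of quality $q=2+\eta$, I would prove the transfer guarantee. A query at threshold $(1+q)\delta$ returns every true endpoint, and every returned endpoint is certified against $\pi$ at $(1+\eta)(1+q)\delta$. Converting back to $\sigma'$ gives local soundness $(1+\eta)(3+\eta)+2+\eta=5+5\eta+\eta^2$. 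In the sequential branch, completeness comes from cutting each witness at piece boundaries with \Cref{lem:restriction}. Soundness survives chaining because of max-concatenation. In the sparse fallback, a failed piece is non-dense with high probability. One $O(\mu_1\mu_3)$ propagation then finds its fewer than $\omega$ marked fine macros, and the coarse macros containing them include one that any full witness marks.

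The transfer structure is the step I expect to be the main obstacle, both in correctness and in the accounting. For each auxiliary curve and host block, build a balanced binary decomposition of its edges. For every canonical interval, and every portal on each source edge of $\tau_k$ (with $O_{\eta}(1)$ portals per edge, spaced so that rounding a source to a portal costs a factor of $1+\eta$), store the reachable top-row intervals. A query splits $\pi$ into $O(\log\mu_3)$ canonical intervals and merges the retrieved arrays greedily in source order. \Cref{lem:crossing} justifies the merge: a source further left dominates one further right, so only $\widetilde O(\mu_1)$ work is needed per interval. The delicate point is making the rounding loss appear only once, as the factor $1+\eta$ on the whole threshold, and not once per chained interval. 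I would handle this by rounding only the sources at the initial row of each query and keeping intermediate rows exact, and I would check whether the tables must be built at a slightly inflated threshold.

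The final step is the running-time accounting. Preprocessing per host block costs $\widetilde O_{\epsilon}(\mu_1^2)$ for the simplifications and tables, once per row, for a total of $(n/\mu_1)\cdot\widetilde O(\mu_1^2)$. Per block pair, the sequential branch costs $(\mu_2/\mu_3)\bigl(\tfrac{\mu_1}{\mu_3\omega}\log n\cdot\mu_3^2+\widetilde O(\mu_1)\bigr)$, the fallback costs $\mu_1\mu_3+\omega\mu_2^2+\widetilde O(\mu_1)$, and the local cases cost $O(\mu_2^2)$. Multiplying by the $(n/\mu_1)(m/\mu_2)$ block pairs, I would choose polynomial parameters $\mu_1,\mu_2,\mu_3,\omega$ in $m$ that balance these terms. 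I would solve the resulting linear program in the exponents, expecting the optimum to give $\widetilde O_{d,\epsilon}(nm^{8/9})$. A union bound over polynomially many sampling events gives the high-probability statement.
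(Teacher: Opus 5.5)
Your architecture matches the paper's: block induction with $\beta=5+\epsilon$ and $\eta=\epsilon/6$, simplification-based local classes at $(2+\eta)\delta$, auxiliary curves $Q_M$ at scales $\mu_3$ and $\mu_2$, sampling with a sparse fallback, and dyadic portal tables merged greedily via \Cref{lem:crossing}. The correctness side is fine in outline; one detail is that witnesses may start or end mid-edge while the maximal suffixes and prefixes are vertex-aligned, and the paper handles this with short connector curves.

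\textbf{The gap: preprocessing cost.} The error is in the running-time accounting, which is exactly the step that fixes the exponent. You charge $\widetilde O_\epsilon(\mu_1^2)$ per host block for preprocessing, but the portal tables you describe cost $\widetilde O(\mu_1^3/\eta)$:
\begin{itemize}
\item A canonical interval $J$ has $O(\mu_1/\eta)$ portal sources.
\item Each portal needs its own free-space DP on $T\times J$ of cost $O(\mu_1|J|)$, so one interval costs $O(\mu_1^2|J|/\eta)$.
\item Over the canonical intervals of all $\mu_1/\mu_3$ fine auxiliary curves we have $\sum_J|J|=\widetilde O(\mu_1)$, and likewise for the $\mu_1/\mu_2$ coarse ones.
\end{itemize}
So total preprocessing is $\widetilde O(n\mu_1^2/\eta)$, not $\widetilde O(n\mu_1)$.

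This is not cosmetic. With your figure, balancing $nm(\mu_3/\mu_2+1/\mu_3+1/\omega+\omega\mu_2/\mu_1)$ against $n\mu_1$ gives $\mu_1=m^{4/5}$ and $nm^{4/5}$. That is the paper's \emph{discrete} bound, where the planar reachability oracle really does save the factor $\mu_1$. So the $nm^{8/9}$ you ``expect'' does not follow from your own inputs. With the correct $n\mu_1^2$ term, choose $\mu_1=m^{4/9}$, $\mu_2=m^{2/9}$, $\mu_3=\omega=m^{1/9}$. Then every dominant term equals $nm^{8/9}$, and the local terms $nm/\mu_2$ and $nm\,\mu_2/\mu_1$ are only $nm^{7/9}$. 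You need to carry this balance out explicitly rather than leave it as an expectation.

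\textbf{The rounding concern.} The ``delicate point'' you flag, rounding losses compounding over chained canonical intervals, is not an actual obstacle, and your proposed remedies are not what is needed. Intermediate rows cannot be kept exact with precomputed portal tables. Inflating the table threshold level by level is exactly what \emph{would} make the losses compound. Instead:
\begin{itemize}
\item Build every table at the same threshold $\rho$.
\item Clip each retrieved portal array at the true leftmost feasible source $\ell_i$ on its starting edge.
\item Since $T[p_i,\ell_i]$ is a segment of length at most $\eta\rho$, each step then satisfies $\mathcal T^*_Q(J,\rho,S)\subseteq\mathsf{Transfer}_J(\rho,S)\subseteq\mathcal T^*_Q(J,(1+\eta)\rho,S)$. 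Soundness here is relative to the actual, unrounded input set $S$.
\end{itemize}
Completeness of the chain follows by cutting a width-$\rho$ witness at the canonical boundaries. Soundness follows by concatenating the width-$(1+\eta)\rho$ certificates at their maximum (\Cref{lem:frechet-concat}). This is the same argument you already use across pieces, so the loss is $1+\eta$ once, as in \Cref{thm:canonical-transfer}.
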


It suffices to consider $\eps\in(0,1/2]$, for larger $\eps$, run the
procedure with accuracy parameter $1/2$.  Throughout the section we fix
$\delta>0$ and set the internal accuracy parameter $\eta\defeq\eps/6$,
so that
\[
    (1+\eta)(3+\eta)\delta+(2+\eta)\delta
    =
    (5+5\eta+\eta^2)\delta
    \le
    (5+\eps)\delta,
\]
and we set $\beta\defeq 5+\eps$.

%----------------------------------------------------------------------
\subsection{Fixed-threshold block dynamic program}
\label{subsec:continuous-block-dp}
%----------------------------------------------------------------------

The algorithm processes the block pairs $\tau_k\times\sigma_l$ in any
topological order of the block grid from \Cref{subsec:block-framework}.  When
a block pair is processed, the reachable interval arrays stored on its
incoming boundaries $L_{k,l}\cup B_{k,l}$ are passed to the local update
\REACH{}, and the returned arrays are stored on $R_{k,l}\cup T_{k,l}$.
Shared boundaries are represented once and are updated by taking unions.

On the two outer incoming boundaries $\{v_1\}\times\sigma$ and
$\tau\times\{w_1\}$, the algorithm stores the exact portions reachable from
$(v_1,w_1)$ at threshold $\delta$, computed by one-dimensional free-space
propagation.  Every other boundary set is produced by a block update.  After
all blocks have been processed, the algorithm accepts exactly when
$(v_n,w_m)$ is stored.

\begin{lemma}[Block induction]
\label{lem:continuous-block-induction}
Assume that each local \REACH{} update maps complete and $\beta$-sound
incoming sets to complete and $\beta$-sound outgoing sets.  Then the block
dynamic program accepts whenever $\dF(\tau,\sigma)\le\delta$ and rejects
whenever $\dF(\tau,\sigma)>\beta\delta$.
\end{lemma}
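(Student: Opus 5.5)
The plan is an induction over the topological order of the block grid. The invariant is that, at every moment, each stored boundary set is complete and $\beta$-sound. Every boundary is either one of the two outer incoming boundaries or the outgoing boundary of some block pair, so the invariant can be checked boundary by boundary.

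\emph{Base case.} On $\{v_1\}\times\sigma$ and $\tau\times\{w_1\}$ the algorithm stores exactly the portions $\delta$-reachable from $(v_1,w_1)$. Any monotone path that starts at $(v_1,w_1)$ and ends on one of these boundaries stays on it. Hence these stored sets are exactly the globally $\delta$-reachable points there, which gives completeness. Since $\beta\ge1$, they are also $\beta$-sound.

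\emph{Inductive step.} When $\tau_k\times\sigma_l$ is processed, its incoming boundaries $L_{k,l}$ and $B_{k,l}$ are either outer boundaries or outgoing boundaries of already-processed predecessors. The left neighbour $(k-1,l)$ and the bottom neighbour $(k,l-1)$ precede $(k,l)$ in any topological order. I will argue that the stored set on each incoming boundary is final at this moment. A shared boundary receives contributions only from the block pair on its other side, plus possibly a corner contribution shared with a diagonal neighbour. Every such contributor precedes $(k,l)$, so the stored set is complete for every globally $\delta$-reachable point on it. Each contributed piece is $\beta$-sound, so the union is $\beta$-sound (unions preserve both properties, as noted in \Cref{subsec:block-framework}). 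The hypothesis on \REACH{} then makes the outgoing sets on $R_{k,l}\cup T_{k,l}$ complete and $\beta$-sound. Uniting them into the stored sets preserves both properties.

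The one subtlety is the corner points. A point such as $(a_{k+1},b_{l+1})$ lies on the boundaries of several block pairs, and its global reachability may be witnessed through different blocks. Completeness must hold for the union over all contributors, not for each contributor individually. I would handle this by defining completeness of a boundary to mean completeness of the union of all contributions, and by checking that every globally $\delta$-reachable boundary point lies on an outgoing boundary of some block traversed by a witness path. This is the step needing most care, though it is routine.

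\emph{Conclusion.} The terminal point $(v_n,w_m)$ lies on the outgoing boundaries of the last block pair, which is processed last. If $\dF(\tau,\sigma)\le\delta$, then $(v_n,w_m)$ is globally $\delta$-reachable, so completeness guarantees it is stored and the algorithm accepts. Conversely, if the algorithm accepts, then $(v_n,w_m)$ is stored, and $\beta$-soundness gives $\dF(\tau,\sigma)\le\beta\delta$. By contraposition, the algorithm rejects whenever $\dF(\tau,\sigma)>\beta\delta$.
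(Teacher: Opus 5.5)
Your proof is correct and follows the same route as the paper. It inducts over the topological block order with the invariant that every stored boundary set is complete and $\beta$-sound, starts from the exact one-dimensional propagation on the two outer boundaries, carries the invariant through the \REACH{} hypothesis and through unions, and reads off the answer at $(v_n,w_m)$.

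The corner issue you flag needs no separate treatment. Completeness is defined over a whole boundary, corners included, so the \REACH{} hypothesis applied to the predecessor block already makes each incoming set complete. You therefore need neither a redefinition of completeness nor finality of the incoming sets. Finality is in fact not quite true at the corner $(a_k,b_{l+1})$, which is also touched by the incomparable block $\tau_{k-1}\times\sigma_{l+1}$.
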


\begin{proof}
We induct over the block order.  The invariant is that every stored boundary
set is complete and $\beta$-sound.  It holds initially because the two outer
incoming boundaries are computed exactly at threshold $\delta$.

Consider a block $\tau_k\times\sigma_l$ when it is processed.  Its incoming
sets have already been initialized or produced, hence satisfy the invariant.
By assumption, the \REACH{} call produces complete and $\beta$-sound outgoing
sets.  Taking unions on shared boundaries preserves completeness and
$\beta$-soundness.

If $\dF(\tau,\sigma)\le\delta$, then the final pair $(v_n,w_m)$ is globally
$\delta$-reachable and is stored by completeness.  Conversely, if the
algorithm accepts, then $\beta$-soundness of the final stored set gives a
monotone Fr\'echet matching of cost at most $\beta\delta$ from
$(v_1,w_1)$ to $(v_n,w_m)$.  The local certificates concatenate in block
order, and the cost of the concatenated matching is the maximum cost of its
pieces, not their sum.
\end{proof}

%----------------------------------------------------------------------
\subsection{Local boundary propagation}
\label{subsec:continuous-local-boundary}
%----------------------------------------------------------------------

Fix a block pair $\tau_k\times\sigma_l$, and let
$A_L\subseteq L_{k,l}$ and $A_B\subseteq B_{k,l}$ be the incoming arrays.  A
\emph{local $\delta$-witness} for a point on $R_{k,l}\cup T_{k,l}$ is a
monotone free-space path at threshold $\delta$, contained in the block, from
a globally $\delta$-reachable incoming point to that outgoing point.

The four witness classes are determined by their incoming and outgoing
boundaries:
\[
\begin{array}{c|c|c}
\text{class} & \text{source and target} & \text{curve condition}\\ \hline
\text{left-to-right local propagation}
& (a_k,p)\to(a_{k+1},q)
& \dF(\tau_k,\sigma_l[p,q])\le\delta
\\[1mm]
\text{bottom-to-right suffix propagation}
& (x,b_l)\to(a_{k+1},q)
& \dF(\tau_k[x,a_{k+1}],\sigma_l[b_l,q])\le\delta
\\[1mm]
\text{left-to-top prefix propagation}
& (a_k,p)\to(y,b_{l+1})
& \dF(\tau_k[a_k,y],\sigma_l[p,b_{l+1}])\le\delta
\\[1mm]
\text{auxiliary-transfer propagation}
& (x,b_l)\to(y,b_{l+1})
& \dF(\tau_k[x,y],\sigma_l)\le\delta .
\end{array}
\]
In each row, the source is required to be globally $\delta$-reachable.  The
first two classes contribute to the right boundary, and the last two
contribute to the top boundary.

\begin{lemma}[Witness classification]
\label{lem:continuous-witness-classification}
Every globally $\delta$-reachable point of $R_{k,l}\cup T_{k,l}$ has a local
witness in one of the four classes above.
\end{lemma}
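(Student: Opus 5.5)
Take a globally $\delta$-reachable outgoing point $(y,b)$ on $R_{k,l}\cup T_{k,l}$. By definition there is a monotone free-space path $\Gamma$ at threshold $\delta$ in the full diagram of $\tau\times\sigma$ from $(v_1,w_1)$ to that point. The plan is to cut $\Gamma$ at the last moment it lies on the incoming boundary $L_{k,l}\cup B_{k,l}$ and use the suffix after the cut as the local witness.

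\textbf{Step 1: the entry point.} Both coordinates of $\Gamma$ are nondecreasing. The endpoint lies in the closed block $[a_k,a_{k+1}]\times[b_l,b_{l+1}]$, and the start $(v_1,w_1)$ has coordinates at most $(a_k,b_l)$. Hence $\Gamma$ meets the set $\{(x,w):x\ge a_k,\ w\ge b_l\}$ in a final segment that is closed, by continuity. The first point $e$ of this segment has $x=a_k$ or $w=b_l$. By monotonicity, every later point stays inside the block, because neither coordinate can exceed that of the endpoint. So $e\in L_{k,l}\cup B_{k,l}$.

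\textbf{Step 2: the witness.} The prefix of $\Gamma$ up to $e$ shows that $e$ is globally $\delta$-reachable. The suffix from $e$ is a monotone free-space path contained in the block. It is therefore a local $\delta$-witness.

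\textbf{Step 3: the class.} The class is read off from which boundaries contain $e$ and the target. If $e\in L_{k,l}$, so $e=(a_k,p)$, and the target is on $R_{k,l}$, the restricted path certifies $\dF(\tau_k,\sigma_l[p,q])\le\delta$, which is the left-to-right class. The other three combinations give the other three curve conditions in the same way, again because the path restricted to a parameter subinterval is a matching between the corresponding subcurves.

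\textbf{Corner points.} Points such as $(a_k,b_l)$ and $(a_{k+1},b_{l+1})$ lie on two boundaries at once. For these we choose either admissible labeling, and both curve conditions hold. For example, a target at $(a_{k+1},b_{l+1})$ with source $(x,b_l)$ satisfies both the bottom-to-right condition with $q=b_{l+1}$ and the auxiliary-transfer condition with $y=a_{k+1}$. So nothing breaks, but the classes should be read as covering rather than partitioning.

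\textbf{Main obstacle.} The only real care is the rigor of Step 1: that the final segment is a connected closed subpath which stays inside the block. Monotonicity does all the work here, because the path enters the quadrant $\{x\ge a_k,\ w\ge b_l\}$ once and never leaves it. The discrete version is the same argument, with the last grid vertex having $i\le a_k$ or $j\le b_l$.
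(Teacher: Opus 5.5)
Your proof is correct and follows the paper's argument. The paper likewise takes the maximal suffix of the global path that lies in the closed block, uses monotonicity to place its first point on $L_{k,l}\cup B_{k,l}$, reads the class off the entry and exit sides, and treats the corner $(a_{k+1},b_{l+1})$ as belonging to a class on each side; your quadrant formulation simply makes explicit the continuity argument the paper leaves implicit, namely that the suffix is one closed segment whose first point has $\tau$-position $a_k$ or $\sigma$-position $b_l$.
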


\begin{proof}
Let $z\in R_{k,l}\cup T_{k,l}$ be globally $\delta$-reachable, and let
$\Gamma$ be a monotone free-space path from $(v_1,w_1)$ to $z$.  Let $s$ be
the first point of the maximal suffix of $\Gamma$ contained in the closed
block $\tau_k\times\sigma_l$.  Monotonicity implies
$s\in L_{k,l}\cup B_{k,l}$; entry through an outgoing side can occur only at
a corner that also belongs to an incoming side. The subpath from $s$ to $z$ is a local witness.  
Its incoming and outgoing sides place it in one of the classes in the table.  At the shared corner
of $R_{k,l}$ and $T_{k,l}$, the right-exiting and top-exiting curve
conditions coincide.  A witness ending at this corner therefore belongs to
a class on each side.
\end{proof}

We now construct the contributions $I_{\mathrm{LR}}$, $I_{\mathrm{BR}}$, and
$I_{\mathrm{LT}}$ for the first three classes.  Identify $A_L$ and $A_B$
with their projections onto $\sigma_l$ and $\tau_k$, respectively, and clip
them to the $\delta$-free portions
$A_L^\delta\defeq\set{p\in A_L:\norm{a_k-p}\le\delta}$ and
$A_B^\delta\defeq\set{x\in A_B:\norm{x-b_l}\le\delta}$.  This clipping
removes no source of a true $\delta$-witness.

Whenever a simplification $\zeta$ of a subcurve $P$ is used, its endpoints
are added if necessary; endpoint augmentation increases the size by at most
two and preserves the Fr\'echet error bound by convexity.  We recover a
certified monotone matching path $M_P$ as in
\Cref{subsec:local-propagation}.  For a set $X\subseteq P$, let $M_P(X)$ be
its projection onto $\zeta$ along this path, and define $M_P^{-1}$
symmetrically.  If $(x,\bar x)$ and $(y,\bar y)$ occur in this order on
$M_P$, then $\dF\bigl(P[x,y],\zeta[\bar x,\bar y]\bigr)\le(1+\eta)\delta$.
The projection of an interval is an interval, so all images and preimages
used below are ordered interval lists of total linear complexity and are
obtained by one scan of the stored path.

For each host block $\tau_k$, retain the following simplifications, using a
sufficiently large constant hidden in the $O(\mu_2)$ bounds, an
endpoint-augmented simplification $\zeta_k^{\mathrm{all}}$ of all of
$\tau_k$, provided it has $O(\mu_2)$ vertices; the simplification
$\zeta_k^{\mathrm{suf}}$ of the longest vertex-aligned suffix
$P_k^{\mathrm{suf}}=\tau_k[s_k,a_{k+1}]$ whose endpoint-augmented
simplification has $O(\mu_2)$ vertices; and the simplification
$\zeta_k^{\mathrm{pre}}$ of the analogous longest vertex-aligned prefix
$P_k^{\mathrm{pre}}=\tau_k[a_k,t_k]$.  The preprocessing cost is included in
\Cref{subsec:continuous-completion}.  Set
$r_{\mathrm{loc}}\defeq(2+\eta)\delta$.

For left-to-right local propagation, run standard local free-space
propagation on $\zeta_k^{\mathrm{all}}\times\sigma_l$ at threshold
$r_{\mathrm{loc}}$, initialized by $A_L^\delta$ on the left boundary, and
let $W_{\mathrm{LR}}\subseteq\sigma_l$ be the reached right-boundary
portion; if the simplification was not retained, the output is empty.  For
bottom-to-right suffix propagation, define
$S_{\mathrm{suf}}\defeq M_k^{\mathrm{suf}}\bigl(A_B^\delta\cap
P_k^{\mathrm{suf}}\bigr)$, run local propagation on
$\zeta_k^{\mathrm{suf}}\times\sigma_l$ at threshold $r_{\mathrm{loc}}$,
initialized by $S_{\mathrm{suf}}$ on the bottom boundary, and let
$W_{\mathrm{BR}}\subseteq\sigma_l$ be the reached right-boundary portion.
Every source is feasible, if $x\in A_B^\delta$ and
$\bar x\in M_k^{\mathrm{suf}}(x)$, then
$\norm{\bar x-b_l}\le\norm{\bar x-x}+\norm{x-b_l}\le(2+\eta)\delta$.
For left-to-top prefix propagation, run local propagation on
$\zeta_k^{\mathrm{pre}}\times\sigma_l$ at threshold $r_{\mathrm{loc}}$,
initialized by $A_L^\delta$ on the left boundary, and let
$W_{\mathrm{LT}}\subseteq\zeta_k^{\mathrm{pre}}$ be the reached top-boundary
portion.  Define
\[
\begin{aligned}
    I_{\mathrm{LR}}
    &\defeq
    \set{(a_{k+1},q)\in R_{k,l}:
          q\in W_{\mathrm{LR}},\ \norm{a_{k+1}-q}\le\delta},\\
    I_{\mathrm{BR}}
    &\defeq
    \set{(a_{k+1},q)\in R_{k,l}:
          q\in W_{\mathrm{BR}},\ \norm{a_{k+1}-q}\le\delta},\\
    I_{\mathrm{LT}}
    &\defeq
    \set{(y,b_{l+1})\in T_{k,l}:
          y\in(M_k^{\mathrm{pre}})^{-1}(W_{\mathrm{LT}}),\
          \norm{y-b_{l+1}}\le\delta}.
\end{aligned}
\]

\begin{lemma}[Local boundary propagation]
\label{lem:local-boundary-propagation}
Assume that $A_L$ and $A_B$ are complete and $\beta$-sound.  Then
$I_{\mathrm{LR}}\cup I_{\mathrm{BR}}\cup I_{\mathrm{LT}}$ contains every
outgoing point with a left-to-right, bottom-to-right, or left-to-top
$\delta$-witness.  Every point in this union is globally
$\beta\delta$-reachable.  After host-block preprocessing, the construction
takes $\widetilde O(\mu_1+\mu_2^2)$ time per block pair.
\end{lemma}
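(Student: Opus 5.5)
We prove the three assertions (completeness, soundness, running time) separately, treating the three classes in parallel. Throughout, $\zeta$ stands for the relevant retained simplification $\zeta_k^{\mathrm{all}}$, $\zeta_k^{\mathrm{suf}}$ or $\zeta_k^{\mathrm{pre}}$, with certified matching path $M$ of cost at most $(1+\eta)\delta$.

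\emph{Completeness.} Take a left-to-right witness, so $\dF(\tau_k,\sigma_l[p,q])\le\delta$ with $p\in A_L^\delta$ by completeness of $A_L$ and clipping.
\begin{itemize}
\item First we check that $\zeta_k^{\mathrm{all}}$ was retained. The $\sigma$-side curve has $O(\mu_2)$ vertices, so $k_{\mathrm c}^*(\tau_k,\delta)=O(\mu_2)$. \Cref{lem:simplification} together with endpoint augmentation then gives $O(\mu_2)$ vertices, for a suitably large hidden constant.
\item The triangle inequality gives $\dF(\zeta_k^{\mathrm{all}},\sigma_l[p,q])\le(2+\eta)\delta=r_{\mathrm{loc}}$. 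Since $\zeta_k^{\mathrm{all}}$ starts at $a_k$ by augmentation, the propagation initialized at $p$ reaches $q$.
\item The condition $\norm{a_{k+1}-q}\le\delta$ holds because $(a_{k+1},q)$ is $\delta$-free.
\end{itemize}

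For a bottom-to-right witness $\dF(\tau_k[x,a_{k+1}],\sigma_l[b_l,q])\le\delta$, the same size argument shows that the suffix from the vertex preceding $x$ admits an $O(\mu_2)$ simplification. By maximality of $P_k^{\mathrm{suf}}$ we get $x\in P_k^{\mathrm{suf}}$. This step needs care: we use monotonicity of $k_{\mathrm c}^*$ under taking subcurves up to an additive constant, absorbed into the hidden constant, when extending $x$ to the preceding vertex.

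Then pick $\bar x\in M_k^{\mathrm{suf}}(x)$. The restricted path gives $\dF(\tau_k[x,a_{k+1}],\zeta[\bar x,\mathrm{end}])\le(1+\eta)\delta$, so the triangle inequality gives $\dF(\zeta[\bar x,\mathrm{end}],\sigma_l[b_l,q])\le r_{\mathrm{loc}}$, and propagation from $\bar x\in S_{\mathrm{suf}}$ reaches $q$.

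For left-to-top, symmetrically $y\in P_k^{\mathrm{pre}}$. For any $\bar y\in M_k^{\mathrm{pre}}(y)$, propagation reaches $\bar y$, so $y\in (M_k^{\mathrm{pre}})^{-1}(W_{\mathrm{LT}})$.

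\emph{Soundness.} The outputs are certified against $\zeta$ at threshold $r_{\mathrm{loc}}$, and we transport them back to $\tau_k$ along $M$.
\begin{itemize}
\item For $I_{\mathrm{LR}}$: $\dF(\zeta_k^{\mathrm{all}},\sigma_l[p,q])\le r_{\mathrm{loc}}$ and $\dF(\tau_k,\zeta_k^{\mathrm{all}})\le(1+\eta)\delta$, so $\dF(\tau_k,\sigma_l[p,q])\le(3+2\eta)\delta\le\beta\delta$.
\item We then concatenate with the global $\beta\delta$-path to $(a_k,p)$, given by soundness of $A_L$, at maximum rather than summed cost.
\item For $I_{\mathrm{BR}}$: a reached $q$ comes from some $\bar x\in M(x)$ with $x\in A_B^\delta$. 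The restricted matching between $\tau_k[x,a_{k+1}]$ and $\zeta[\bar x,\mathrm{end}]$ gives the same $(3+2\eta)\delta$ bound.
\item For $I_{\mathrm{LT}}$: for $y$ in the preimage of a reached $\bar y$, the pair $(y,\bar y)$ lies on $M$. The prefix $\tau_k[a_k,y]$ is matched to $\zeta[\mathrm{start},\bar y]$ within $(1+\eta)\delta$, giving the same bound.
\end{itemize}
Since $3+2\eta\le 5+\eps$, all outputs are $\beta$-sound.

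\emph{Time.} Each propagation runs on an $O(\mu_2)\times O(\mu_2)$ diagram, costing $O(\mu_2^2)$. Clipping $A_B$, projecting it through the stored path, and pulling back $W_{\mathrm{LT}}$ each take one linear scan, costing $O(\mu_1+\mu_2)$ given interval-list representations of linear size.

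\emph{Main obstacle.} The delicate step is the maximality argument placing $x\in P_k^{\mathrm{suf}}$ and $y\in P_k^{\mathrm{pre}}$. The bound of \Cref{lem:simplification} is only a factor-$2$ approximation of $k_{\mathrm c}^*$, and $P_k^{\mathrm{suf}}$ is vertex-aligned while $x$ may be interior to an edge. We will handle this by defining the retention threshold as $c\mu_2$ with $c$ larger than twice the constant bounding $k_{\mathrm c}^*$ of any witness extension to a vertex, plus augmentation overhead. Extending by a partial edge changes $k_{\mathrm c}^*$ by at most one, since we can prepend a vertex.
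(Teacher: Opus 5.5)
Your proposal is correct and follows the paper's proof essentially step for step. Completeness comes from the triangle inequality through the retained simplification together with the maximality argument for $P_k^{\mathrm{suf}}$ and $P_k^{\mathrm{pre}}$. Soundness at $(3+2\eta)\delta\le\beta\delta$ comes from composing with the stored matching, and the time bound from $O(\mu_2^2)$ propagations plus linear scans.

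The one small difference is how you cover the partial edge at $x$ (resp.\ $y$). You prepend $x^-$ to (resp.\ append $y^+$ to) an optimal $\delta$-simplification, which raises $k_{\mathrm c}^*$ by at most one. The paper instead writes down the explicit curve $R_x=\tau_k[x^-,x]\circ\overline{xb_l}\circ\sigma_l[b_l,q]$. Both work, but what you actually use is an extension bound, not ``monotonicity of $k_{\mathrm c}^*$ under taking subcurves'' as your bottom-to-right paragraph phrases it.
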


\begin{proof}
For a left-to-right witness from $(a_k,p)$ to $(a_{k+1},q)$, completeness
gives $p\in A_L^\delta$.  Moreover,
$\dF(\tau_k,\sigma_l[p,q])\le\delta$.  Since
$\sigma_l[p,q]$ has $O(\mu_2)$ vertices, the retained full-block
simplification exists, and the triangle inequality gives
$\dF(\zeta_k^{\mathrm{all}},\sigma_l[p,q])\le(2+\eta)\delta$, so the
propagation reaches $q$ and $(a_{k+1},q)\in I_{\mathrm{LR}}$.

For a bottom-to-right witness starting at $(x,b_l)$, we have
$\norm{x-b_l}\le\delta$ and
$\dF(\tau_k[x,a_{k+1}],\sigma_l[b_l,q])\le\delta$.  If $x$ is a vertex, set $x^-=x$; otherwise let
$x^-$ be the preceding host vertex on the edge containing $x$.  The curve
$R_x\defeq\tau_k[x^-,x]\circ\overline{xb_l}\circ\sigma_l[b_l,q]$ has
$O(\mu_2)$ vertices and satisfies $\dF(\tau_k[x^-,a_{k+1}],R_x)\le\delta$:
match $\tau_k[x^-,x]$ identically, traverse the connector
$\overline{xb_l}$---which lies inside $\mathbb B(x,\delta)$ by convexity---while the
host side stays at $x$, and then use the witness matching between
$\tau_k[x,a_{k+1}]$ and $\sigma_l[b_l,q]$.  Hence the vertex-aligned suffix
$\tau_k[x^-,a_{k+1}]$ is admissible for the definition of
$P_k^{\mathrm{suf}}$, and maximality gives $s_k\le_T x^-\le_T x$, so
$x\in P_k^{\mathrm{suf}}$.  Choose any $\bar x\in M_k^{\mathrm{suf}}(x)$.
The restricted simplification matching and the witness imply
\[
    \dF\bigl(
        \zeta_k^{\mathrm{suf}}
        [\bar x,\operatorname{last}(\zeta_k^{\mathrm{suf}})],
        \sigma_l[b_l,q]
    \bigr)
    \le (2+\eta)\delta,
\]
so the suffix propagation reaches $q$ and
$(a_{k+1},q)\in I_{\mathrm{BR}}$.

For a left-to-top witness ending at $(y,b_{l+1})$, we have
$\norm{y-b_{l+1}}\le\delta$ and
$\dF(\tau_k[a_k,y],\sigma_l[p,b_{l+1}])\le\delta$.  If $y$ is a vertex, set $y^+=y$; otherwise let
$y^+$ be the next host vertex on the edge containing $y$.  The curve
$R_y\defeq\sigma_l[p,b_{l+1}]\circ\overline{b_{l+1}y}\circ\tau_k[y,y^+]$ has
$O(\mu_2)$ vertices and satisfies $\dF(\tau_k[a_k,y^+],R_y)\le\delta$, match
$\tau_k[a_k,y]$ to $\sigma_l[p,b_{l+1}]$, traverse the connector while the
host side stays at $y$, and then match $\tau_k[y,y^+]$ identically.
Therefore the vertex-aligned prefix $\tau_k[a_k,y^+]$ is admissible for the
definition of $P_k^{\mathrm{pre}}$, so maximality gives
$y\le_T y^+\le_T t_k$.  The restricted simplification matching places some
$\bar y\in M_k^{\mathrm{pre}}(y)$ in $W_{\mathrm{LT}}$, hence
$(y,b_{l+1})\in I_{\mathrm{LT}}$.

For soundness, every inserted point has a certificate at threshold
$(2+\eta)\delta$ in a simplified diagram.  Composing it with the
corresponding simplification matching gives a local certificate in the
original block of cost at most
$(3+2\eta)\delta\le(5+\eps)\delta=\beta\delta$.  Its source belongs to
$A_L$ or $A_B$ and is globally $\beta\delta$-reachable, so concatenating the
certificates proves global $\beta\delta$-reachability.

Each of the three local propagations has complexity $O(\mu_2^2)$.  Images
and preimages under the stored matchings are computed by linear scans in
$\widetilde O(\mu_1)$ total time.
\end{proof}

\subsection{Dyadic auxiliary transfer}
\label{subsec:continuous-dyadic-auxtrans}

Fix a host block $T=\tau_k$ and an auxiliary curve
$Q=(q_1,\ldots,q_t)$.  Let $e_1,\ldots,e_{N_T}$ be the host edges of $T$ in
curve order, where $N_T=O(\mu_1)$.  Internal host vertices are assigned to
their later incident edge, and the two endpoints of $T$ are assigned to
their unique incident edge.

An \emph{interval array} stores one possibly empty interval on each host
edge, and we identify the array with the subset of $T$ that it represents.
On a fixed row of a continuous free-space diagram, the reachable portion of
a host edge is a suffix of its free interval, after reaching one point, a
matching may advance along the host edge while the other curve stays fixed.
All transfer sets below are represented by interval arrays.  For a subcurve
$\pi\subseteq Q$, a threshold $\rho>0$, and an interval array $S$ on $T$,
define the ideal transfer set
\[
    \mathcal T_Q^*(\pi,\rho,S)
    \defeq
    \set{
        y\in T:
        \exists x\in S,\ x\le_T y,\
        \dF(T[x,y],\pi)\le \rho
    }.
\]
All transfer tables below are built for a fixed threshold $\rho$; in
\Cref{subsec:continuous-auxiliary-propagation}, this threshold is always
$(3+\eta)\delta$.

\begin{lemma}[Fixed-interval auxiliary transfer]
\label{lem:interval-transfer}
Fix a threshold $\rho>0$, and let $J=Q[q_a,q_b]$ be a vertex-to-vertex
interval of $Q$.  One can build a data structure for this interval and
threshold with preprocessing time $O(\mu_1^2|J|/\eta)$ and query time
$O(\mu_1)$.  Given an interval array $S$ on $T$, let
$\mathsf{Transfer}_J(\rho,S)$ be the returned interval array.  Then
\[
    \mathcal T_Q^*(J,\rho,S)
    \subseteq
    \mathsf{Transfer}_J(\rho,S)
    \subseteq
    \mathcal T_Q^*(J,(1+\eta)\rho,S).
\]
\end{lemma}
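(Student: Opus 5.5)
The plan is to discretize each host edge by portals, precompute from each portal the reachable top-row interval array in the free-space diagram of $T\times J$ at threshold $(1+\eta)\rho$, and answer a query by a greedy merge in source order, justified by \Cref{lem:crossing}.

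\emph{Portals.} On each host edge $e_i$, restrict to the $\rho$-free interval of the bottom row $T\times\{q_a\}$, i.e.\ the points of $e_i$ within $\rho$ of $q_a$. Place $O(1/\eta)$ portals on it, so that every free point $x$ has a portal $p(x)$ with $p(x)\le_T x$, $p(x)$ on the same edge, and the segment $[p(x),x]$ lying within $(1+\eta)\rho$ of $q_a$. If the free interval is long compared with $\eta\rho$, I will use a grid of spacing $\eta\rho$ along it; the free interval is a chord of a ball of radius $\rho$, so its length is at most $2\rho$ and $O(1/\eta)$ portals suffice. Rounding $x$ down to $p(x)$ and letting the host walk from $p(x)$ to $x$ while $J$ stays at $q_a$ gives $\dF(T[p(x),y],J)\le\max(\dF(T[x,y],J),(1+\eta)\rho)$. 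For the edge containing the endpoint, I will add the free-interval endpoints as portals so that the first portal is at or before every free point.

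\emph{Preprocessing.} For each of the $O(\mu_1/\eta)$ portals $p$, run a standard free-space propagation on $T[p,\mathrm{end}]\times J$ at threshold $(1+\eta)\rho$, with the single source $(p,q_a)$. Store the reachable top-row interval array $R_p$, which has one suffix interval per host edge. Each run costs $O(\mu_1|J|)$, so the total is $O(\mu_1^2|J|/\eta)$, as claimed.

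\emph{Query.} For each host edge, if $S$ meets the free portion, take the smallest portal $p$ not exceeding the first such source point; call these the active portals, ordered $p_1<p_2<\cdots$. The output is $\bigcup_j R_{p_j}\cap\{y\ge \text{first source on that edge}\}$. Computing this naively costs $O(\mu_1^2)$, so I will merge greedily: sweep host edges left to right, keeping the earliest active portal $p^*$. By \Cref{lem:crossing}, any $y$ reachable from a later portal $p_j$ and lying at or before some point reachable from $p^*$ is also reachable from $p^*$. So the union of the arrays is obtained by taking $R_{p^*}$ on each edge up to its farthest reached point, and switching to a later portal's array only beyond that point.

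I expect this merge to be the main obstacle. Read naively, it still scans whole arrays, so the sweep must advance a single pointer over host edges, moving to the next portal's array only where the current one ends, to get $O(\mu_1)$ total time. Making this rigorous requires showing that a reachable set from a fixed source, restricted to the top row, is "interval-closed up to its maximum" in the sense used by the crossing argument. I will also handle the edge where sources start mid-edge by intersecting with the suffix beyond the first source.

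\emph{Correctness.} For completeness, if $y\in\mathcal T^*(J,\rho,S)$ via $x$, then $p(x)$ is active or dominated by an earlier active portal; the rounding bound and the crossing merge place $y$ in the output. For soundness, every output $y$ is $(1+\eta)\rho$-reachable from a portal $p\le_T x$ for some $x\in S$ on the same edge. I will replace $p$ by $x$ using the fact that row-$q_a$ reachability along an edge is a suffix, so the path from $p$ passes through a point of the bottom row at or after $x$, or $y$ is reachable from $x$ by crossing. I expect this to need a careful restriction, keeping only outputs beyond the source, which the per-edge clipping ensures.
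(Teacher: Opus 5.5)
Your architecture matches the paper's: portals on the $\rho$-free part of each host edge, one single-source propagation per portal, starting-edge clipping, and a left-to-right greedy merge justified by \Cref{lem:crossing}. Completeness also goes through. The gap is in soundness, and it comes from running the per-portal propagations at threshold $(1+\eta)\rho$.

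An output $y$ is then only known to satisfy $\dF(T[p,y],J)\le(1+\eta)\rho$ for a portal $p\le_T\ell$ with $\norm{p-\ell}\le\eta\rho$, where $\ell$ is the first source on the edge. Moving the start from $p$ to $\ell$ costs another $\eta\rho$, so you get $(1+2\eta)\rho$, not the stated $(1+\eta)\rho$. Your proposed way around this fails for two reasons:
\begin{itemize}
\item \Cref{lem:crossing} gives reachability of the inner endpoint from the \emph{earlier} start, never from the later one.
\item A monotone path from $(p,q_a)$ need not stay on row $q_a$ until host position $\ell$. It can advance along $J$ while the host sits in $[p,\ell)$.
\end{itemize}

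Here is a concrete counterexample. Take $\rho=1$ and $\eta=0.1$. Let the host edge run from $(-2,0)$ to $(2,0)$, followed by an edge to $(2,5)$. Let $J=((0,0),(-2.1,0),(2,0))$, let $S=[-0.95,1.5]$ on the first edge, and take the portal $p=(-1,0)$. At threshold $1.1$, a path from $(p,q_a)$ first matches $J$ from $(0,0)$ to $(-2.1,0)$ against the stationary point $p$. It then matches the rest linearly and reaches $y=(2,0)$, which lies on a later edge and survives clipping. However, for every $x'\in S$, every point of $T[x',y]$ is at distance at least $1.15$ from the vertex $(-2.1,0)$ of $J$. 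Hence $y\notin\mathcal T_Q^*(J,(1+\eta)\rho,S)$.

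The fix is the paper's: run the propagations at threshold $\rho$ itself.
\begin{itemize}
\item Completeness survives because $p$ and every source $x$ on the same edge lie in the convex set $e_i\cap\mathbb B(q_a,\rho)$. The bottom-row walk from $p$ to $x$ is therefore already free at threshold $\rho$, so your relaxation to $(1+\eta)\rho$ there was never needed.
\item Soundness then needs no crossing argument. Collapsing $T[p,\ell]$ onto $\ell$ gives $\dF(T[p,y],T[\ell,y])\le\norm{p-\ell}\le\eta\rho$. The triangle inequality then gives $\dF(T[\ell,y],J)\le(1+\eta)\rho$ with $\ell\in S$.
\end{itemize}

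The active portal must be the \emph{largest} portal not exceeding $\ell$, not the smallest. With the leftmost portal of the free interval, $\norm{p-\ell}$ can be as large as $2\rho$, and even the repaired bound fails.

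Finally, the merge does not need any ``interval-closed'' property of top-row reachable sets; such sets are unions of per-edge suffixes with gaps. It needs only that a point of a later array lying at or before the current frontier is reachable from the array that set the frontier. That is exactly \Cref{lem:crossing}, together with induction over the processing order.
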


\begin{proof}
For every host edge $e_i$, discretize the free interval
$e_i\cap\mathbb B(q_a,\rho)$ by portals at spacing at most $\eta\rho$,
including the left endpoint whenever the interval is nonempty.  The interval
length is at most $2\rho$, so each host edge receives $O(1/\eta)$ portals,
and every point $z\in e_i\cap\mathbb B(q_a,\rho)$ has a portal $p\le_T z$ on
the same edge with $\norm{p-z}\le\eta\rho$.  For every source edge $e_i$ and
portal $p$, run the standard free-space DP on $T\times J$ at threshold
$\rho$, initialized from the single source $(p,q_a)$, and store the
reachable interval array on the top row $q_b$; call it $R(i,p)$.  The array is indexed by host edge and stored together with its farthest
nonempty point.  The entry of any host edge is therefore accessed in
$O(1)$ time.
One DP run costs $O(\mu_1|J|)$ time, and there are $O(\mu_1/\eta)$ source
edge/portal pairs, giving $O(\mu_1^2|J|/\eta)$ preprocessing time and the
same asymptotic space.

Given a query array $S$, first clip the source on each edge to the points
that are free with $q_a$, setting $L_i\defeq S[e_i]\cap\mathbb B(q_a,\rho)$.
If $L_i=\emptyset$, edge $e_i$ contributes nothing.  Otherwise let $\ell_i$
be the leftmost point of $L_i$, choose the rightmost portal
$p_i\le_T\ell_i$, and retrieve $R(i,p_i)$.  The starting edge is clipped at
the true leftmost feasible source:
\[
    \widetilde R_i[e_h]=
    \begin{cases}
        \emptyset, & h<i,\\
        R(i,p_i)[e_i]\cap[\ell_i,\operatorname{end}(e_i)], & h=i,\\
        R(i,p_i)[e_h], & h>i.
    \end{cases}
\]
The output is the union of the nonempty arrays $\widetilde R_i$, computed by
a left-to-right greedy merge, maintain the current farthest covered point
$f$; when processing $\widetilde R_i$ with farthest point $f_i$, discard the
array if $f_i\le_T f$, and otherwise append only the portion of
$\widetilde R_i$ strictly after $f$ and update $f=f_i$.

The merge is correct by the crossing lemma.  Suppose an earlier array
$\widetilde R_j$ advanced the frontier to $f$, and let
$w\in\widetilde R_i$ with $w\le_T f$ and $j<i$.  The portals satisfy
$p_j\le_T p_i\le_T w\le_T f$, and there is a threshold-$\rho$ path from
$(p_i,q_a)$ to $(w,q_b)$ and one from $(p_j,q_a)$ to $(f,q_b)$.  By \Cref{lem:crossing}, $(w,q_b)$ is reachable from $(p_j,q_a)$ at
threshold $\rho$.  Since $j<i$, the starting-edge clipping for
$\widetilde R_j$ cannot remove $w$,  so $w\in\widetilde R_j$.  If $w$
lay after the frontier current when $\widetilde R_j$ was processed, it was
appended at that step.  Otherwise the same argument applies with an
earlier array in place of $\widetilde R_j$.  By induction on the
processing order, $w$ is already in the union.  Thus every point of
$\widetilde R_i$ at or before the frontier is already in the union.  If the stored farthest point of
$\widetilde R_i$ is not after the current frontier, the array is discarded
in $O(1)$ time; otherwise only the newly exposed portion after the frontier
is scanned.  Since the frontier moves monotonically along $T$, every host
edge is scanned $O(1)$ times during the entire merge, so the merge takes
$O(\mu_1)$ time.

For completeness, let $z\in S[e_i]$, $z\le_T w$, and
$\dF(T[z,w],J)\le\rho$.  Since $z$ is matched to $q_a$, we have $z\in L_i$
and $\ell_i\le_T z$.  The segment $T[p_i,z]$ lies on one host edge inside
the convex ball $\mathbb B(q_a,\rho)$, so it can be matched to the
stationary point $q_a$ at threshold $\rho$; concatenating with the matching
of $T[z,w]$ to $J$ shows $w\in R(i,p_i)$.  Since $w\ge_T z\ge_T\ell_i$,
starting-edge clipping keeps $w$, and the greedy merge places it in the
output.

For soundness, let $w$ be returned from $\widetilde R_i$.  Then
$w\ge_T\ell_i$ and $\dF(T[p_i,w],J)\le\rho$.  Because $p_i$ and $\ell_i$ lie
on the same host edge and $\norm{p_i-\ell_i}\le\eta\rho$, we have
$\dF(T[p_i,w],T[\ell_i,w])\le\eta\rho$, and the triangle inequality gives
$\dF(T[\ell_i,w],J)\le(1+\eta)\rho$.  Since $\ell_i\in S$, the returned
point belongs to $\mathcal T_Q^*(J,(1+\eta)\rho,S)$.
\end{proof}

\begin{lemma}[Fr\'echet concatenation]
\label{lem:frechet-concat}
Let $A_1,\ldots,A_h$ and $B_1,\ldots,B_h$ be curves whose consecutive
endpoints agree, so that the concatenations $A_1\circ\cdots\circ A_h$ and
$B_1\circ\cdots\circ B_h$ are defined.  If $\dF(A_i,B_i)\le r$ for every
$i$, then $\dF(A_1\circ\cdots\circ A_h,\,B_1\circ\cdots\circ B_h)\le r$.
\end{lemma}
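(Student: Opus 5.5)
The plan is to concatenate the individual matchings after reparametrizing them onto consecutive time intervals. The case $h=1$ is trivial, and the general case follows from $h=2$ by induction on $h$: $A_1\circ\cdots\circ A_h=(A_1\circ\cdots\circ A_{h-1})\circ A_h$, and likewise for the $B$'s. So I will prove only the two-piece case.

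\emph{Setting up the pieces.} Assume $\dF(A_1,B_1)\le r$ and $\dF(A_2,B_2)\le r$, where $A_1$ ends where $A_2$ starts and $B_1$ ends where $B_2$ starts. I parametrize each curve over $[0,1]$ and take matchings $(\rho_i,\varrho_i)$ for $i=1,2$ with
\[
\max_t\norm{A_i(\rho_i(t))-B_i(\varrho_i(t))}\le r.
\]
The definition takes a minimum over matchings. If one prefers to treat it as an infimum, I would either rely on the standard fact that the infimum is attained for polygonal curves, or run the argument with cost $r+\gamma$ for arbitrary $\gamma>0$ and let $\gamma\to0$.

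\emph{Gluing.} Parametrize $A=A_1\circ A_2$ so that $A_1$ occupies $[0,\tfrac12]$ and $A_2$ occupies $[\tfrac12,1]$, and do the same for $B$. Define $\rho(t)=\tfrac12\rho_1(2t)$ for $t\le\tfrac12$ and $\rho(t)=\tfrac12+\tfrac12\rho_2(2t-1)$ for $t\ge\tfrac12$. Define $\varrho$ analogously from $\varrho_1$ and $\varrho_2$.

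\emph{Checking the glued map is a matching.} At $t=\tfrac12$ the two branches agree, since $\rho_1(1)=1$ and $\rho_2(0)=0$ both give the value $\tfrac12$. Hence $\rho$ and $\varrho$ are continuous and nondecreasing, with endpoint values $0$ and $1$. For $t\le\tfrac12$ the leash length equals the $A_1$--$B_1$ leash at time $2t$. For $t\ge\tfrac12$ it equals the $A_2$--$B_2$ leash at time $2t-1$. The maximum over $t$ is therefore the maximum of the two costs, which is at most $r$.

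\emph{Expected obstacle.} The only delicate point is bookkeeping rather than anything substantive. The concatenation point must be well defined: this is exactly the hypothesis that consecutive endpoints agree. The glued maps must also stay continuous at the seam, which holds because each matching starts at parameter $0$ and ends at parameter $1$.
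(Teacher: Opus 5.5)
Your proposal is correct and follows the paper's proof: both reparametrize width-$r$ matchings for the pieces onto consecutive subintervals of $[0,1]$ and concatenate them, so the width of the result is the maximum piece width. Your reduction to $h=2$ by induction and your remark on infimum versus minimum only spell out bookkeeping that the paper handles in one step.
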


\begin{proof}
Choose a width-$r$ Fr\'echet matching for each pair $(A_i,B_i)$,
reparameterize these matchings on consecutive subintervals of $[0,1]$, and
concatenate them.  The resulting matching is monotone, and its width is the
maximum of the widths of the pieces, hence at most $r$.
\end{proof}

\begin{theorem}[Transfer through a preprocessed curve]
\label{thm:canonical-transfer}
Fix a threshold $\rho>0$.  For an auxiliary curve $Q$ of size $t$,
preprocessing the canonical intervals of a balanced binary decomposition of
$Q$ gives a data structure for this threshold in
$\widetilde O(\mu_1^2 t/\eta)$ time and space.  Given a subcurve
$\pi\subseteq Q$ and an interval array $S$, the query
$\mathsf{Transfer}_Q(\pi,\rho,S)$ takes $\widetilde O(\mu_1)$ time and
satisfies
\[
    \mathcal T_Q^*(\pi,\rho,S)
    \subseteq
    \mathsf{Transfer}_Q(\pi,\rho,S)
    \subseteq
    \mathcal T_Q^*(\pi,(1+\eta)\rho,S).
\]
\end{theorem}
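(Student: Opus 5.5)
The plan is to build a balanced binary tree over the edge sequence of $Q$ and, at every node, apply \Cref{lem:interval-transfer} to the vertex-to-vertex interval $J$ that the node represents. Then a query subcurve $\pi$ is decomposed into a short chain of pieces, and the fixed-interval transfers are composed in curve order.

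\emph{Preprocessing.} Each level of the tree partitions the $t-1$ edges, so the interval lengths $|J|$ on one level sum to $O(t)$. There are $O(\log t)$ levels. \Cref{lem:interval-transfer} costs $O(\mu_1^2|J|/\eta)$ per node, so the total preprocessing is $O(\mu_1^2 t\log t/\eta)=\widetilde O(\mu_1^2t/\eta)$ time and space.

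\emph{Query decomposition.} Write a continuous query $\pi=Q[x,y]$ as a partial first edge $Q[x,q_a]$, a vertex-to-vertex middle $Q[q_a,q_b]$, and a partial last edge $Q[q_b,y]$. The middle splits into $O(\log t)$ canonical intervals $J_1,\ldots,J_h$ in curve order. The partial edges are single segments, so a fresh free-space DP on $T\times$ segment handles each of them in $O(\mu_1)$ time. That DP is exact at threshold $\rho$ and is run from an interval-array source on the bottom row.

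So the query computes $S_0=S$, applies the partial-edge DP, then $\mathsf{Transfer}_{J_1}$, then $\mathsf{Transfer}_{J_2}$, and so on, ending with the final partial-edge DP. Each step takes $O(\mu_1)$ time, for $\widetilde O(\mu_1)$ in total. Degenerate cases are handled separately: if $\pi$ lies inside one edge, it is a single segment; if $\pi$ is a point, only the first row is used.

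\emph{Correctness.} Both inclusions go by induction along the chain, with the invariant that each intermediate array is sandwiched between ideal transfers of the prefix of $\pi$ processed so far.

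For completeness, take $y\in\mathcal T_Q^*(\pi,\rho,S)$, witnessed by $x\in S$ and a matching of $T[x,y]$ to $\pi$ of width at most $\rho$. By monotonicity, the breakpoints of $\pi$ (the vertices $q_a,\ldots,q_b$ between pieces) are matched to host points $x=z_0\le_T z_1\le_T\cdots\le_T z_{h+2}=y$. Restricting the matching gives width at most $\rho$ on each piece. Inductively $z_i$ lies in the $i$-th intermediate array, and the lower inclusion of \Cref{lem:interval-transfer} (or exactness of the segment DP) puts $z_{i+1}$ in the next one.

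For soundness, a returned $y$ traces back through the chain. Each step provides a predecessor point of the previous array together with a certificate: width at most $(1+\eta)\rho$ for canonical steps by the upper inclusion of \Cref{lem:interval-transfer}, and width at most $\rho$ for segment steps. The predecessors form a monotone sequence ending at some $x\in S$. Applying \Cref{lem:frechet-concat} with $r=(1+\eta)\rho$ to the consecutive host subcurves and the consecutive pieces of $\pi$ gives $\dF(T[x,y],\pi)\le(1+\eta)\rho$.

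The key point is that $\eta$ does not compound. Each step's upper bound is stated relative to the actual points of its input array, not relative to an earlier approximation, and concatenation takes a maximum rather than a sum. I expect the main obstacle to be this soundness bookkeeping: making sure the intermediate arrays are genuine interval arrays (a suffix of the free interval on each edge, as noted above), so that \Cref{lem:interval-transfer} applies verbatim at every step, and that the source clipping at $q_a$ of each canonical interval is consistent with the previous step's top row.
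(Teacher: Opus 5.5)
Your proposal is correct and follows the paper's proof essentially step for step: a balanced binary tree over the edges of $Q$ storing the structure of \Cref{lem:interval-transfer} at each node (total represented length $O(t\log t)$), and a query split $\pi=\alpha\circ J_1\circ\cdots\circ J_h\circ\gamma$ with exact single-segment DPs for the partial edges. Completeness comes from cutting a width-$\rho$ witness at the piece boundaries, and soundness from concatenating the per-step certificates via \Cref{lem:frechet-concat}, so the factor $1+\eta$ is paid once rather than compounding. The bookkeeping you flag at the end is benign. Every intermediate array is produced by a propagation at threshold $\rho$, so on each host edge it is a union of suffixes of that edge's convex free interval on the row of the shared breakpoint vertex. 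It is therefore again an interval array, and the next step's source clipping leaves it intact.
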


\begin{proof}
Build a balanced binary decomposition tree on the edge sequence of $Q$.  A
node spanning the edges from $q_{i}q_{i+1}$ through $q_{j-1}q_j$ represents
the curve $Q[q_i,q_j]$ and stores the interval data structure from
\Cref{lem:interval-transfer}.  The tree has $O(t)$ nodes, the sum of the
lengths of all represented curves is $O(t\log t)$, and by
\Cref{lem:interval-transfer} the total preprocessing time and space are
$\sum_J O(\mu_1^2|J|/\eta)=\widetilde O(\mu_1^2 t/\eta)$.

For a query subcurve $\pi=Q[x,y]$, decompose the edge range of its maximal
vertex-to-vertex portion into $O(\log t)$ disjoint canonical intervals, in
curve order.  Their represented curves $J_1,\ldots,J_h$ share consecutive
endpoints, and $\pi=\alpha\circ J_1\circ\cdots\circ J_h\circ\gamma$, where
$\alpha$ and $\gamma$ are possibly empty partial-edge pieces.  Process these
pieces in order, each partial-edge piece is handled by the standard
free-space DP on $T$ against a single segment, and each canonical interval
is handled by its stored interval table.  This takes $\widetilde O(\mu_1)$
time.

Completeness follows by cutting any width-$\rho$ witness matching at the
piece boundaries of $\pi$ and applying completeness of each processed piece.
For soundness, each partial edge step is exact at threshold $\rho$, and each
canonical-interval step is sound at threshold $(1+\eta)\rho$.  The existential 
witnesses guaranteed by the soundness of the processed pieces concatenate, 
and \Cref{lem:frechet-concat} shows that the width remains the maximum piece width
rather than accumulating over the $O(\log t)$ intervals.  Hence the
returned array is sound at threshold $(1+\eta)\rho$.
\end{proof}

\subsection{Macro surrogates}
\label{subsec:continuous-macro-surrogates}

Fix a scale $s\in\{\mu_3,\mu_2\}$ and partition the host block $T=\tau_k$
into consecutive macros of at most $s$ edges.  All simplifications in this
subsection are endpoint-augmented.  Fix a sufficiently large constant $C_0$ and put $B_s\defeq C_0 s$.  For every macro
boundary $a$, let $T[\bar a,a]$ be the longest vertex-aligned suffix ending
at $a$ whose endpoint-augmented simplification at threshold $\delta$ has at
most $B_s$ vertices, and let $T[a,\tilde a]$ be the longest vertex-aligned
prefix starting at $a$ with the same property.  Denote the two
simplifications by $\bar\xi_a$ and $\tilde\xi_a$, respectively; they satisfy
$|\bar\xi_a|,|\tilde\xi_a|\le B_s$,
$\dF(T[\bar a,a],\bar\xi_a)\le(1+\eta)\delta$, and
$\dF(T[a,\tilde a],\tilde\xi_a)\le(1+\eta)\delta$.

For an $s$-edge macro $M=T[a,b]$, define
$Q_M\defeq\bar\xi_a\circ T[a,b]\circ\tilde\xi_b$, with common endpoints
identified.  Then $|Q_M|\le 2B_s+(s+1)=O(s)$, and concatenating the two
simplification matchings with the identity matching on $T[a,b]$ gives
\begin{equation}
\label{eq:macro-domain-QM}
    \dF(T[\bar a,\tilde b],Q_M)
    \le
    (1+\eta)\delta .
\end{equation}
A macro $M$ is \emph{marked} by a curve $Z$ if some subcurve $P\subseteq T$
with $\dF(P,Z)\le\delta$ intersects an edge of $M$, where endpoint-only
intersection is allowed.

\begin{lemma}[Macro surrogate]
\label{lem:macro-surrogate}
Let $M=T[a,b]$ be an $s$-edge macro, and let $Z$ have $O(s)$ edges.  If
$M$ is marked by $Z$, then there exists a subcurve $\pi\subseteq Q_M$ with
$\dF(\pi,Z)\le(2+\eta)\delta$.  Moreover, the standard free-space DP on
$Q_M\times Z$, initialized with all feasible bottom-boundary points, finds
such a $\pi$ at threshold $(2+\eta)\delta$ in $O(s|Z|)$ time.
\end{lemma}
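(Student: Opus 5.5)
Let $P=T[x,y]\subseteq T$ be a subcurve with $\dF(P,Z)\le\delta$ that intersects an edge of $M$. The plan is to show $P\subseteq T[\bar a,\tilde b]$. Then restriction (\Cref{lem:restriction}) applied to \eqref{eq:macro-domain-QM} gives a contiguous $\pi\subseteq Q_M$ with $\dF(P,\pi)\le(1+\eta)\delta$. The triangle inequality then yields
\[
    \dF(\pi,Z)\le\dF(\pi,P)+\dF(P,Z)\le(2+\eta)\delta .
\]
The algorithmic claim follows from the free-start/free-end propagation of \Cref{subsec:local-propagation} with $P:=Q_M$ and $Q:=Z$. A reached top point certifies some $\pi=Q_M[x',y']$ with $\dF(\pi,Z)\le(2+\eta)\delta$, and existence guarantees one is reached. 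The running time is $O(|Q_M||Z|)=O(s|Z|)$.

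\textbf{Main step: containment.} By symmetry it suffices to show $x\ge_T\bar a$ when $x<_T a$; the case $y>_T b$ is analogous with $\tilde b$. Since $P$ meets an edge of $M$, we have $y\ge_T a$. Let $x^-$ be the host vertex at or preceding $x$ on its edge. Restrict the matching of $P$ and $Z$ to $T[x,a]$ via \Cref{lem:restriction}; this gives a subcurve $Z'\subseteq Z$ with $\dF(T[x,a],Z')\le\delta$. Following the argument in \Cref{lem:local-boundary-propagation}, prepend the segment $T[x^-,x]$ matched identically. The curve $T[x^-,x]\circ Z'$ then has $O(s)$ vertices and satisfies
\[
    \dF\bigl(T[x^-,a],\,T[x^-,x]\circ Z'\bigr)\le\delta,
\]
so $k_{\mathrm c}^*(T[x^-,a],\delta)=O(s)$. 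Here the endpoint of $Z'$ may need a connector, handled by convexity as in \Cref{rem:endpoint-augmentation}. By \Cref{lem:simplification}, the endpoint-augmented simplification has size at most $2k_{\mathrm c}^*+O(1)=O(s)$, which is at most $B_s$ once $C_0$ is chosen large relative to the hidden constant in $|Z|=O(s)$. Thus $T[x^-,a]$ is an admissible vertex-aligned suffix, and maximality of $T[\bar a,a]$ gives $\bar a\le_T x^-\le_T x$.

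\textbf{Expected obstacle.} The main difficulty is this maximality step. Maximality is defined via the output size of the approximation primitive rather than via $k_{\mathrm c}^*$, so admissibility must be read through the size guarantee of \Cref{lem:simplification}. If that output is not monotone in the subcurve, ``longest'' must be interpreted as the longest admissible one found by the preprocessing's search. In that case I would argue that the admissibility test used there, the size bound, is implied by $k_{\mathrm c}^*=O(s)$ for every such suffix, so every suffix reaching back to $x^-$ passes the test.

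A second subtlety arises when $P$ starts inside $M$ or exactly at $a$. In that case there is no suffix to consider, and $x\ge_T a\ge_T\bar a$ holds trivially.
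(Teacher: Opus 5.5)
Your proposal is correct and follows the paper's proof essentially step for step. You prove $P\subseteq T[\bar a,\tilde b]$ by bounding $k_{\mathrm c}^*(T[x^-,a],\delta)$ through $T[x^-,x]$, a connector, and the restricted image $Z'$, then invoke the size guarantee of \Cref{lem:simplification} and maximality, and finish by restricting \eqref{eq:macro-domain-QM} and applying the triangle inequality.

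Your monotonicity worry is harmless but unnecessary. Under the literal definition, $T[\bar a,a]$ is the longest among all admissible vertex-aligned suffixes, and the preprocessing tests them exhaustively. So admissibility of $T[x^-,a]$ alone already gives $\bar a\le_T x^-$, which is exactly how the paper argues.
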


\begin{proof}
Let $P=T[x,y]$ certify that $M$ is marked by $Z$.  We first prove that
$P\subseteq T[\bar a,\tilde b]$.

For the left endpoint, if $a\le_T x$, then $\bar a\le_T a\le_T x$.
Otherwise $x<_T a$.  Since $P$ intersects an edge of $M$ and endpoint-only
intersection is allowed, the subcurve from $x$ reaches the boundary point
$a$ before entering the interior of $M$.  Restrict a width-$\delta$ matching between $P$ and $Z$ to $T[x,a]$.  Let
$Z_L$ be the image subcurve and let $z_x$ be its first point.  Then
$\dF(T[x,a],Z_L)\le\delta$ and $\norm{x-z_x}\le\delta$.  Since $Z$ has
$O(s)$ edges, $|Z_L|\le|Z|+2\le C_1 s+3$ for an absolute constant $C_1$.  Let $x^-$ be $x$ if
$x$ is a vertex of $T$, and otherwise the preceding host vertex, and define
$R_L\defeq T[x^-,x]\circ\overline{xz_x}\circ Z_L$.  Matching $T[x^-,x]$
identically, traversing the connector $\overline{xz_x}$ while the host side
stays at $x$, and then following the matching between $T[x,a]$ and $Z_L$
gives $\dF(T[x^-,a],R_L)\le\delta$ with $|R_L|\le C_1 s+7$.  Therefore
$k_{\mathrm c}^*(T[x^-,a],\delta)\le C_1 s+7$.  By the continuous
simplification guarantee (\Cref{lem:simplification}),
$|\operatorname{simp}(T[x^-,a])|
\le\max\{1,2k_{\mathrm c}^*(T[x^-,a],\delta)-2\}+2\le 2C_1 s+14\le B_s$
after choosing $C_0\ge 2C_1+14$, so the
endpoint-augmented simplification of $T[x^-,a]$ is admissible for the
definition of $T[\bar a,a]$, and maximality gives
$\bar a\le_T x^-\le_T x$.

For the right endpoint, the symmetric argument gives $y\le_T\tilde b$.  If
$y\le_T b$, this is immediate.  Otherwise restrict the matching to
$T[b,y]$, let $Z_R$ be the image subcurve with last point $z_y$, take $y^+$
equal to $y$ if $y$ is a vertex and otherwise the next host vertex, and set
$R_R\defeq Z_R\circ\overline{z_y y}\circ T[y,y^+]$.  Matching the connector
while the host side stays at $y$, and then matching the added host segment
identically, gives $\dF(T[b,y^+],R_R)\le\delta$ with $|R_R|\le C_1 s+7$.  Hence
$k_{\mathrm c}^*(T[b,y^+],\delta)\le C_1 s+7$ and
$|\operatorname{simp}(T[b,y^+])|
\le\max\{1,2k_{\mathrm c}^*(T[b,y^+],\delta)-2\}+2\le 2C_1 s+14\le B_s$.
Maximality of $T[b,\tilde b]$ yields $y\le_T y^+\le_T\tilde b$.  The two
containments together prove $P\subseteq T[\bar a,\tilde b]$, including the
case where a matching cut lies in the interior of a host edge.

By \eqref{eq:macro-domain-QM} and restriction of a Fr\'echet matching to a
host subcurve, there exists $\pi\subseteq Q_M$ with
$\dF(P,\pi)\le(1+\eta)\delta$.  Combining this with the marking condition
$\dF(P,Z)\le\delta$ gives
$\dF(\pi,Z)\le\dF(\pi,P)+\dF(P,Z)\le(2+\eta)\delta$.  The stated
initialization tests all subcurves of $Q_M$ against $Z$ at once, and since
$|Q_M|=O(s)$, its running time is $O(|Q_M|\,|Z|)=O(s|Z|)$.
\end{proof}

%----------------------------------------------------------------------
\subsection{Continuous auxiliary propagation}
\label{subsec:continuous-auxiliary-propagation}
%----------------------------------------------------------------------

This subsection handles bottom-to-top propagation for one block pair.
Assume $1\le\mu_3\le\mu_2\le\mu_1$, $\mu_3\mid\mu_2$, $\mu_2\mid\mu_1$, and
$1\le\omega\le\mu_1/\mu_3$.  Split $\sigma_l$ into
$R=\lceil\mu_2/\mu_3\rceil$ consecutive pieces
$\sigma_{l,1},\ldots,\sigma_{l,R}$, each with at most $\mu_3$ edges, where
consecutive pieces share their common endpoint.  All transfer queries below
use threshold $(3+\eta)\delta$.

Use fine macros of length at most $\mu_3$.  A piece $\sigma_{l,r}$ is
\emph{dense} if it marks at least $\omega$ fine macros; there are at most
$\lceil\mu_1/\mu_3\rceil$ fine macros in total.  For a sufficiently large
constant $C$, sample $K=\lceil C(\mu_1/(\omega\mu_3))\log n\rceil$ fine
macros independently with replacement for each piece.  For each sampled fine
macro $F$, run the standard free-space DP on $Q_F\times\sigma_{l,r}$,
initialized with all feasible bottom-boundary points, at threshold
$(2+\eta)\delta$; if the computation returns a non-null subcurve, store one
such $\pi_r\subseteq Q_F$ satisfying
$\dF(\pi_r,\sigma_{l,r})\le(2+\eta)\delta$.  Let
$\mathcal E_{\mathrm{cont}}$ be the event that, over all block pairs and all
pieces, every dense piece has at least one marked sampled fine macro.

\begin{lemma}[Sampling success]
\label{lem:continuous-sampling-success}
For every fixed constant $c>0$, the hidden sampling constant can be chosen
so that $\Pr[\mathcal E_{\mathrm{cont}}]\ge 1-n^{-c}$.
\end{lemma}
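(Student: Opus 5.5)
The plan is a per-piece failure bound followed by a union bound over all pieces of all block pairs. Fix a block pair $\tau_k\times\sigma_l$ and a piece $\sigma_{l,r}$, and suppose the piece is dense, so it marks at least $\omega$ of the at most $\lceil\mu_1/\mu_3\rceil$ fine macros of $\tau_k$. Whether a fine macro is marked depends only on $\tau_k$, $\sigma_{l,r}$ and $\delta$, and not on the random choices, so the set of marked macros is fixed before sampling. Each of the $K$ independent samples, drawn uniformly with replacement, therefore hits a marked macro with probability at least $p\defeq\omega/\lceil\mu_1/\mu_3\rceil\ge\omega\mu_3/(2\mu_1)$; here we use $\mu_3\le\mu_1$, so that $\lceil\mu_1/\mu_3\rceil\le 2\mu_1/\mu_3$.

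The probability that no sample hits a marked macro is at most
\[
    (1-p)^K\le e^{-pK}\le \exp\!\Bigl(-\tfrac{\omega\mu_3}{2\mu_1}\cdot C\tfrac{\mu_1}{\omega\mu_3}\log n\Bigr)=n^{-C/2}.
\]
If some sampled macro $F$ is marked, \Cref{lem:macro-surrogate} (applied with $s=\mu_3$ and $Z=\sigma_{l,r}$, which has at most $\mu_3$ edges) guarantees that the DP on $Q_F\times\sigma_{l,r}$ at threshold $(2+\eta)\delta$ returns a non-null $\pi_r$. The event $\mathcal E_{\mathrm{cont}}$ only asks for a marked sampled macro, so this last step is a remark rather than a requirement.

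For the union bound, the number of block pairs is $O((n/\mu_1)(m/\mu_2))\le O(nm)$ and each has $R=\lceil\mu_2/\mu_3\rceil\le m$ pieces, so there are at most $O(nm^2)\le O(n^3)$ (piece, block pair) instances, using $m\le n$. Some care is needed if the procedure is invoked several times, for example inside the decision-to-optimization transformation. Then either the number of invocations is polynomial in $n$ and is absorbed into the union bound, or the lemma is read per invocation; I would state it per invocation, as the lemma does. The total failure probability is at most $O(n^3)\cdot n^{-C/2}\le n^{-c}$ once $C\ge 2c+8$ and $n$ is at least a constant; small $n$ is handled by enlarging $C$ or by running the exact algorithm.

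The only delicate point, and the one I expect to be the main obstacle, is making sure the marked set is independent of the randomness. Markedness is a deterministic property of the input, and samples for different pieces are drawn freshly. Independence across pieces is not actually needed for the union bound, only that the samples within a piece are independent and uniform. I would also note the degenerate case $\omega>\lceil\mu_1/\mu_3\rceil$, in which no piece is dense and the claim is vacuous; the standing assumption $\omega\le\mu_1/\mu_3$ keeps $p\le 1$ meaningful.
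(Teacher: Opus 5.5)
Your proposal is correct and takes the same approach as the paper: it bounds the probability that all $K$ samples miss the marked macros of a dense piece by $\exp(-K\omega/\lceil\mu_1/\mu_3\rceil)$, then takes a union bound over polynomially many pieces. Your count of $O(n^3)$ pieces is looser than the paper's $O(nm/(\mu_1\mu_3))\le O(n^2)$, but this only changes the required constant $C$.
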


\begin{proof}
For one dense piece, all $K$ samples miss its marked macros with probability
at most
$(1-\omega/\lceil\mu_1/\mu_3\rceil)^K
\le\exp(-K\omega/\lceil\mu_1/\mu_3\rceil)\le n^{-(c+3)}$
for a sufficiently large $C$.  The total number of pieces over all block
pairs is at most
$O((n/\mu_1)\cdot(m/\mu_2)\cdot(\mu_2/\mu_3))
=O(nm/(\mu_1\mu_3))\le O(n^2)$, using $m\le n$, and a union bound proves the
claim.
\end{proof}

\paragraph{Sequential case.}
Assume that every piece obtains a surrogate $\pi_r\subseteq Q_{F_r}$ with
$\dF(\pi_r,\sigma_{l,r})\le(2+\eta)\delta$.  This condition is all that the
correctness proof needs; density is used only to make sampling successful.
Let $\sigma_l^{\le r}\defeq\sigma_{l,1}\circ\cdots\circ\sigma_{l,r}$ for
$r\ge1$, and let $\sigma_l^{\le0}$ be the constant curve at $b_l$.  Define
$S_0\defeq A_B^\delta$ and
$S_r\defeq\mathsf{Transfer}_{Q_{F_r}}\bigl(\pi_r,(3+\eta)\delta,
S_{r-1}\bigr)$ for $r=1,\ldots,R$.  The raw bottom-to-top output is
$\widehat I_{\mathrm{BT}}\defeq S_R$.  The top-boundary contribution used by
the local update is its \emph{$\delta$-free lift} $I_{\mathrm{BT}}$,
consisting of all points $(y,b_{l+1})\in T_{k,l}$ with
$y\in\widehat I_{\mathrm{BT}}$ and $\norm{y-b_{l+1}}\le\delta$.

\begin{lemma}[Sequential bottom-to-top propagation]
\label{lem:aux-trans-sequential}
Assume the sequential branch is used and that $A_B$ is complete.  Then, for
every $r=0,\ldots,R$:
\begin{enumerate}
    \item If $(x,b_l)$ is globally $\delta$-reachable, $x\in A_B^\delta$,
    and $\dF(\tau_k[x,y],\sigma_l^{\le r})\le\delta$, then $y\in S_r$.
    \item If $y\in S_r$, then some $x\in A_B^\delta$ satisfies
    $\dF(\tau_k[x,y],\sigma_l^{\le r})\le(5+5\eta+\eta^2)\delta$.
\end{enumerate}
\end{lemma}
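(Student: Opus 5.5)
Both parts go by induction on $r$. The base case $r=0$ is immediate. Here $S_0=A_B^\delta$, and $\sigma_l^{\le0}$ is the constant curve at $b_l$. For part~1, $\dF(\tau_k[x,y],\sigma_l^{\le0})\le\delta$ forces $x=y$ up to the degenerate stationary matching. So $y=x\in A_B^\delta=S_0$; more precisely, $y$ lies on the same host edge inside $\mathbb B(b_l,\delta)$. We keep the convention that $S_0$ represents these points, which is harmless because any point reachable from $x$ along a stationary segment is handled in the next transfer step anyway. For part~2, take $x=y$: the trivial matching of a point to $b_l$ has cost $\norm{x-b_l}\le\delta$.

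For the completeness step (part~1, $r\ge1$), fix a width-$\delta$ matching between $\tau_k[x,y]$ and $\sigma_l^{\le r}$. Cut it at the common endpoint of $\sigma_l^{\le r-1}$ and $\sigma_{l,r}$. By \Cref{lem:restriction} applied to the $\sigma$ side, this yields a host point $z$ with $x\le_T z\le_T y$ such that $\dF(\tau_k[x,z],\sigma_l^{\le r-1})\le\delta$ and $\dF(\tau_k[z,y],\sigma_{l,r})\le\delta$. The induction hypothesis gives $z\in S_{r-1}$. The triangle inequality with the surrogate gives $\dF(\tau_k[z,y],\pi_r)\le\delta+(2+\eta)\delta=(3+\eta)\delta$. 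Hence $y\in\mathcal T^*_{Q_{F_r}}(\pi_r,(3+\eta)\delta,S_{r-1})$, and the completeness inclusion of \Cref{thm:canonical-transfer} gives $y\in S_r$. The hypothesis of global $\delta$-reachability of $(x,b_l)$ is only used to guarantee $x\in A_B^\delta$ via completeness of $A_B$, so it passes through unchanged.

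For the soundness step (part~2), let $y\in S_r$. The soundness inclusion of \Cref{thm:canonical-transfer} yields $z\in S_{r-1}$ with $z\le_T y$ and $\dF(\tau_k[z,y],\pi_r)\le(1+\eta)(3+\eta)\delta$. Converting back through the surrogate gives $\dF(\tau_k[z,y],\sigma_{l,r})\le(1+\eta)(3+\eta)\delta+(2+\eta)\delta=(5+5\eta+\eta^2)\delta$. The induction hypothesis provides $x\in A_B^\delta$ with $\dF(\tau_k[x,z],\sigma_l^{\le r-1})\le(5+5\eta+\eta^2)\delta$. \Cref{lem:frechet-concat} concatenates the two matchings at the shared point $(z,\cdot)$, and the width stays at the maximum, not the sum. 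This is the step that prevents error accumulation over the $R$ pieces.

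The main obstacle I expect is bookkeeping at the cut points rather than the inequalities. The restriction cut may land in the interior of a host edge, and $S_{r-1}$ is an interval array, so we must check that the continuous point $z$ is genuinely represented in $S_{r-1}$. Both the transfer completeness and the induction are stated pointwise, so this should go through. For $r=0$ in part~1, the degenerate stationary case also needs care. I would handle it by noting that every point of a free host segment matched to the single point $b_l$ lies in $\mathbb B(b_l,\delta)$. Then I would either treat $S_0$ as closed under such stationary advances, or absorb the advance into the first transfer step. The latter works because in \Cref{lem:interval-transfer} a source may slide along its edge while $\pi_1$ stays at its first point.
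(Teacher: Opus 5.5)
\textbf{Verdict: there is a gap in the base case $r=0$ of part~1; the rest matches the paper.} Your inductive step is correct and is the paper's argument. For completeness you cut at the piece boundary and pass through $\pi_r$ at threshold $(3+\eta)\delta$. For soundness you use transfer soundness, one more triangle inequality through $\pi_r$, and \Cref{lem:frechet-concat}. Your base case for part~2 (take $x=y$) is also fine.

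The gap is in part~1 at $r=0$. The condition $\dF(\tau_k[x,y],\sigma_l^{\le0})\le\delta$ says only that every point of $\tau_k[x,y]$ lies in $\mathbb B(b_l,\delta)$. It does not force $y=x$, and it does not put $y$ on the host edge of $x$: the subcurve $\tau_k[x,y]$ may cross many host vertices inside that ball. So your assertion that $y$ lies on the same edge is false. The ``convention'' that $S_0$ represents such $y$ is also not a proof, because $S_0$ is defined as $A_B^\delta$ and you never show $y\in A_B^\delta$.

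The missing idea is exactly the hypothesis you called redundant. It is not there to give $x\in A_B^\delta$, which the statement already assumes separately. Concatenate a global $\delta$-certificate for $(x,b_l)$ with the stationary matching of $\tau_k[x,y]$ to $b_l$. This shows that $(y,b_l)$ is itself globally $\delta$-reachable. Completeness of $A_B$ then gives $y\in A_B$, and the stationary matching gives $\norm{y-b_l}\le\delta$, so $y\in A_B^\delta=S_0$. With this one-line argument the base case holds as stated, and your induction at $r=1$ works for any choice of cut point.

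Your alternative of absorbing the stationary advance into the first transfer can be made to work, but only for $r\ge1$. At $r=1$ you must cut at the first host point matched to $b_l$, i.e.\ take $z=x$. Then $x\in S_0$ holds by hypothesis, and $\dF(\tau_k[x,y],\sigma_{l,1})\le\delta$ gives $y\in S_1$ directly; no edge-sliding remark is needed. As written, however, you cut at an unspecified point and invoke the unproved $r=0$ case. Even repaired this way, the workaround leaves the $r=0$ claim of part~1, which the lemma asserts, unproved.
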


\begin{proof}
We prove both statements by induction on $r$.  For $r=0$, the curve
$\sigma_l^{\le0}$ is the constant curve at $b_l$.  If $(x,b_l)$ is globally
$\delta$-reachable and $\dF(\tau_k[x,y],\sigma_l^{\le0})\le\delta$, then
concatenating the global certificate to $(x,b_l)$ with this constant-curve
matching shows that $(y,b_l)$ is globally $\delta$-reachable; completeness
of $A_B$ gives $y\in A_B$, and the same constant-curve matching gives
$\norm{y-b_l}\le\delta$, hence $y\in A_B^\delta=S_0$.  Conversely, if
$y\in S_0$, then taking $x=y$ gives
$\dF(\tau_k[y,y],\sigma_l^{\le0})=\norm{y-b_l}\le\delta
\le(5+5\eta+\eta^2)\delta$.

Let $r\ge1$.  For completeness, cut a width-$\delta$ matching between
$\tau_k[x,y]$ and $\sigma_l^{\le r}$ at the boundary between
$\sigma_l^{\le r-1}$ and $\sigma_{l,r}$.  This gives a point $z$ with
$\dF(\tau_k[x,z],\sigma_l^{\le r-1})\le\delta$ and
$\dF(\tau_k[z,y],\sigma_{l,r})\le\delta$.  By induction, $z\in S_{r-1}$, and
since $\dF(\pi_r,\sigma_{l,r})\le(2+\eta)\delta$,
\[
    \dF(\tau_k[z,y],\pi_r)
    \le
    \dF(\tau_k[z,y],\sigma_{l,r})
    +\dF(\sigma_{l,r},\pi_r)
    \le
    (3+\eta)\delta .
\]
Completeness of \Cref{thm:canonical-transfer} places $y$ in $S_r$.

For soundness, let $y\in S_r$.  By transfer soundness, some $z\in S_{r-1}$
satisfies $\dF(\tau_k[z,y],\pi_r)\le(1+\eta)(3+\eta)\delta$, and therefore
\[
    \dF(\tau_k[z,y],\sigma_{l,r})
    \le
    (1+\eta)(3+\eta)\delta+(2+\eta)\delta
    =
    (5+5\eta+\eta^2)\delta .
\]
The induction hypothesis gives a point $x\in A_B^\delta$ whose prefix to
$z$ matches $\sigma_l^{\le r-1}$ within the same bound.  Concatenating the
two matchings and using \Cref{lem:frechet-concat} proves the claim.
\end{proof}

\paragraph{Sparse case.}
If some piece $\sigma_{l,r^*}$ fails to produce a sampled surrogate, then on
the event $\mathcal E_{\mathrm{cont}}$ it is not dense.  Run the standard
free-space DP on $\tau_k\times\sigma_{l,r^*}$ at threshold $\delta$,
initialized with all feasible bottom-boundary points; its top row contains
exactly the endpoints $y'$ for which some subcurve $\tau_k[x',y']$ satisfies
$\dF(\tau_k[x',y'],\sigma_{l,r^*})\le\delta$.  Collect every host edge with
a nonempty top-row interval; if such an endpoint is a host vertex, include
both incident host edges that exist.  Bucket the collected edges into fine
macros.  Each collected edge belongs to a fine macro marked by the failed
piece; when the endpoint is a vertex, endpoint-only intersection makes both
incident edges marked.  If at least $\omega$ distinct fine macros are
obtained, declare sampling failure and reject; on
$\mathcal E_{\mathrm{cont}}$ this does not occur.
Otherwise, map the fine macros to their containing coarse macros and
deduplicate; fewer than $\omega$ coarse macros remain.

Use coarse macros of length at most $\mu_2$, and test the candidate coarse
macros in any order, stopping at the first successful one.  For a candidate
$G$, run the standard free-space DP on $Q_G\times\sigma_l$ at threshold
$(2+\eta)\delta$, initialized with all feasible bottom-boundary points.  If
a non-null subcurve $\pi\subseteq Q_G$ is found, define the raw output
$\widehat I_{\mathrm{BT}}\defeq
\mathsf{Transfer}_{Q_G}\bigl(\pi,(3+\eta)\delta,A_B^\delta\bigr)$; if no
candidate succeeds, set $\widehat I_{\mathrm{BT}}\defeq\emptyset$.  In
either case, obtain $I_{\mathrm{BT}}$ from $\widehat I_{\mathrm{BT}}$ by
the same $\delta$-free lift as in the sequential case.

\begin{lemma}[Sparse bottom-to-top propagation]
\label{lem:aux-trans-sparse}
Assume that $A_B$ is complete.  On $\mathcal E_{\mathrm{cont}}$, if the
sparse fallback is used, then $I_{\mathrm{BT}}$ contains every true
bottom-to-top target.  If $(y,b_{l+1})\in I_{\mathrm{BT}}$, then some
$x\in A_B^\delta$ satisfies
$\dF(\tau_k[x,y],\sigma_l)\le(5+5\eta+\eta^2)\delta$.
\end{lemma}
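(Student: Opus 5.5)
Soundness comes first and is independent of the event. If $(y,b_{l+1})\in I_{\mathrm{BT}}$, then $\widehat I_{\mathrm{BT}}$ is nonempty. So some candidate $G$ succeeded and returned $\pi\subseteq Q_G$ with $\dF(\pi,\sigma_l)\le(2+\eta)\delta$. By transfer soundness in \Cref{thm:canonical-transfer}, some $x\in A_B^\delta$ has $\dF(\tau_k[x,y],\pi)\le(1+\eta)(3+\eta)\delta$. The triangle inequality then gives $\dF(\tau_k[x,y],\sigma_l)\le(5+5\eta+\eta^2)\delta$, exactly as in the soundness step of \Cref{lem:aux-trans-sequential}.

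For completeness, fix a true bottom-to-top target $(y,b_{l+1})$. It has a witness from a globally $\delta$-reachable $(x,b_l)$ with $\dF(\tau_k[x,y],\sigma_l)\le\delta$. Completeness of $A_B$ gives $x\in A_B$, and $\norm{x-b_l}\le\delta$ gives $x\in A_B^\delta$. We also have $\norm{y-b_{l+1}}\le\delta$, so it suffices to show $y\in\widehat I_{\mathrm{BT}}$.

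Apply \Cref{lem:restriction} to the failed piece $\sigma_{l,r^*}$ with the roles of the curves swapped. This yields a subcurve $P'=\tau_k[x',y']\subseteq\tau_k[x,y]$ with $\dF(P',\sigma_{l,r^*})\le\delta$. The free-start DP on $\tau_k\times\sigma_{l,r^*}$ therefore reports $y'$ on its top row. The edge containing $y'$ (both incident edges if $y'$ is a vertex) lies in a collected fine macro, and that fine macro lies inside some candidate coarse macro $G^*$. Since $\tau_k[x,y]\supseteq P'$ meets an edge of $G^*$, the macro $G^*$ is marked by $\sigma_l$. Here $\sigma_l$ has $O(\mu_2)$ edges and the coarse scale is $s=\mu_2$.

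On $\mathcal E_{\mathrm{cont}}$ the failed piece is not dense, so rejection does not occur. By \Cref{lem:macro-surrogate}, the DP on $Q_{G^*}\times\sigma_l$ succeeds. Hence at least one candidate succeeds, and the procedure stops at the first successful $G$, which need not be $G^*$.

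The key observation, and the point I expect to need the most care, is that any successful candidate serves all targets. Completeness of the transfer does not depend on which macro produced the surrogate, only on the quality bound of $\pi$. For the chosen $\pi\subseteq Q_G$ with $\dF(\pi,\sigma_l)\le(2+\eta)\delta$, the triangle inequality gives $\dF(\tau_k[x,y],\pi)\le(3+\eta)\delta$. Transfer completeness in \Cref{thm:canonical-transfer}, applied with source $x\in A_B^\delta$, places $y$ in $\widehat I_{\mathrm{BT}}$. Note that $Q_G$ need not cover $\tau_k[x,y]$, since the transfer runs on the host block $T=\tau_k$ against $\pi$.

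The remaining subtle points need brief checks. The first is that the marking definition's allowance for endpoint-only intersection matches the edge-collection rule when $y'$ is a vertex. The second is that a fine macro of length at most $\mu_3$ is contained in a single coarse macro; this follows from $\mu_3\mid\mu_2$ and aligned partitions.
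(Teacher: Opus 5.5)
Your proposal is correct and follows the paper's proof essentially step for step. You restrict the witness to the failed piece to reach a collected edge and hence a tested coarse macro marked by $\sigma_l$, invoke \Cref{lem:macro-surrogate} to guarantee that some candidate succeeds, use the triangle inequality and transfer completeness to show that any successful $\pi$ serves every target, and derive soundness from transfer soundness plus one more triangle inequality; stating soundness first and noting that it does not need $\mathcal E_{\mathrm{cont}}$ only reorders the same argument.
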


\begin{proof}
Let $y$ be a true bottom-to-top target, so that some globally
$\delta$-reachable source $(x,b_l)$ satisfies
$\dF(\tau_k[x,y],\sigma_l)\le\delta$.  Completeness of $A_B$ and source
feasibility give $x\in A_B^\delta$.  Cut the matching at both endpoints of
the failed piece $\sigma_{l,r^*}$, the restricted host subcurve ends at a
point $y'$ collected by the enumeration, hence an incident edge containing
$y'$ is collected, its fine macro is bucketed, and the corresponding coarse
macro $G$ is tested.  The full witness also intersects $G$ and matches the
full curve $\sigma_l$ within $\delta$, so by \Cref{lem:macro-surrogate} the
test on $Q_G\times\sigma_l$ succeeds and returns a subcurve
$\pi\subseteq Q_G$ with $\dF(\pi,\sigma_l)\le(2+\eta)\delta$.  At least one
candidate therefore succeeds whenever a true target exists.

The algorithm may use any successful candidate $\pi$, not necessarily the
one associated with the particular target $y$.  This is sufficient because
every true witness satisfies
$\dF(\tau_k[x,y],\pi)\le\dF(\tau_k[x,y],\sigma_l)+\dF(\sigma_l,\pi)
\le(3+\eta)\delta$, so completeness of \Cref{thm:canonical-transfer} places
$y$ in $\widehat I_{\mathrm{BT}}$.  Since a true bottom-to-top target lies
on the $\delta$-free top boundary, $\norm{y-b_{l+1}}\le\delta$, and hence
$(y,b_{l+1})\in I_{\mathrm{BT}}$.

For soundness, let $(y,b_{l+1})\in I_{\mathrm{BT}}$.  Then
$y\in\widehat I_{\mathrm{BT}}$, and transfer soundness gives some
$x\in A_B^\delta$ with
$\dF(\tau_k[x,y],\pi)\le(1+\eta)(3+\eta)\delta$.  Therefore
$\dF(\tau_k[x,y],\sigma_l)
\le(1+\eta)(3+\eta)\delta+(2+\eta)\delta=(5+5\eta+\eta^2)\delta$.
\end{proof}

\begin{theorem}[Bottom-to-top propagation]
\label{thm:aux-trans-recurrence}
Assume that $A_B$ is complete.  On $\mathcal E_{\mathrm{cont}}$, the
bottom-to-top contribution $I_{\mathrm{BT}}$ contains every outgoing
top-boundary point with a bottom-to-top $\delta$-witness.  Moreover, if
$(y,b_{l+1})\in I_{\mathrm{BT}}$, then there exists $x\in A_B^\delta$ such
that $\dF(\tau_k[x,y],\sigma_l)\le(5+5\eta+\eta^2)\delta$.  The
per-block-pair running time, excluding host-block preprocessing, is
$\widetilde O(\mu_1\mu_3+\mu_1\mu_2/\mu_3+\mu_1\mu_2/\omega+\omega\mu_2^2)$.
\end{theorem}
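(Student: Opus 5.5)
The plan is to split the proof by branch and then add up the running time; correctness will be inherited almost entirely from the two preceding lemmas.

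\emph{Correctness.} The algorithm takes the sequential branch exactly when every piece $\sigma_{l,r}$ obtains a sampled surrogate $\pi_r$. In that case we apply \Cref{lem:aux-trans-sequential} with $r=R$, for which $\sigma_l^{\le R}=\sigma_l$.

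\begin{itemize}
\item \emph{Completeness.} Let $(y,b_{l+1})$ have a bottom-to-top $\delta$-witness from a globally $\delta$-reachable $(x,b_l)$. Completeness of $A_B$ and feasibility of the source give $x\in A_B^\delta$. Item~1 of the lemma then gives $y\in S_R=\widehat I_{\mathrm{BT}}$. Since the witness ends on the $\delta$-free top boundary, $\norm{y-b_{l+1}}\le\delta$, so $(y,b_{l+1})$ survives the $\delta$-free lift.
\item \emph{Soundness.} This is item~2 of the lemma.
\end{itemize}

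Otherwise some piece fails and the sparse fallback runs. On $\mathcal E_{\mathrm{cont}}$ the failed piece is not dense, so the rejection test never fires. \Cref{lem:aux-trans-sparse} then gives both completeness and soundness directly. The sequential case needs no probabilistic event, so the whole correctness argument is a case split with no new estimates.

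\emph{Running time.} We charge each stage separately, assuming the auxiliary curves $Q_F$ and $Q_G$, together with their transfer structures at threshold $(3+\eta)\delta$, are already preprocessed per host block.

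\begin{itemize}
\item \emph{Sampling.} There are $R=O(\mu_2/\mu_3)$ pieces, each drawing $K=\widetilde O(\mu_1/(\omega\mu_3))$ fine macros. Each sampled macro costs one free-start DP on $Q_F\times\sigma_{l,r}$, which is $O(\mu_3^2)$ by \Cref{lem:macro-surrogate}. The total is $\widetilde O\bigl((\mu_2/\mu_3)\cdot(\mu_1/(\omega\mu_3))\cdot\mu_3^2\bigr)=\widetilde O(\mu_1\mu_2/\omega)$.
\item \emph{Sequential transfers.} The branch makes $R$ transfer queries, each costing $\widetilde O(\mu_1)$ by \Cref{thm:canonical-transfer}, for $\widetilde O(\mu_1\mu_2/\mu_3)$ in total.
\item \emph{Sparse enumeration.} The DP on $\tau_k\times\sigma_{l,r^*}$ costs $O(\mu_1\mu_3)$. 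Bucketing the collected edges and deduplicating macros is linear in $\mu_1$.
\item \emph{Coarse tests.} At most $\omega$ coarse candidates are tested, each by a DP on $Q_G\times\sigma_l$ of cost $O(\mu_2\cdot\mu_2)$, since $|Q_G|=O(\mu_2)$. This gives $O(\omega\mu_2^2)$.
\item \emph{Final steps.} The single sparse transfer query costs $\widetilde O(\mu_1)$, and the $\delta$-free lift is a linear scan costing $O(\mu_1)$.
\end{itemize}

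Summing these terms gives $\widetilde O(\mu_1\mu_3+\mu_1\mu_2/\mu_3+\mu_1\mu_2/\omega+\omega\mu_2^2)$.

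The step needing most care is making sure no per-query work secretly depends on the size of the block-scale structure. In particular, materializing the stored surrogate $\pi_r$ and decomposing it into canonical intervals should cost $O(\mu_3)$ or $O(\log\mu_3)$, not more. This follows because $\pi_r$ is a subcurve of a curve of size $O(\mu_3)$, and its canonical decomposition has $O(\log\mu_3)$ pieces.
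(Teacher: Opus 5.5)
Your proposal is correct and follows the paper's proof essentially step for step. It uses the same case split: the sequential branch is handled by \Cref{lem:aux-trans-sequential} at $r=R$, with the $\delta$-free lift preserving true targets, and the sparse fallback by \Cref{lem:aux-trans-sparse}, since on $\mathcal E_{\mathrm{cont}}$ the failed piece is not dense. The running-time accounting is likewise the same term-by-term sum of enumeration, sequential transfers, fine sampling, and coarse tests, with your extra bucketing, lift, and canonical-decomposition costs being lower-order terms the paper absorbs silently.
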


\begin{proof}
If every piece obtains a surrogate, \Cref{lem:aux-trans-sequential} with
$r=R$ gives completeness and the stated local certificate for the raw set
$\widehat I_{\mathrm{BT}}$; the final $\delta$-free lift to
$I_{\mathrm{BT}}$ preserves every true top-boundary target and only removes
non-free boundary points.  Otherwise, on $\mathcal E_{\mathrm{cont}}$, the
first failed piece is not dense, so \Cref{lem:aux-trans-sparse} applies.

For the running time, the failed-piece endpoint enumeration costs
$O(\mu_1\mu_3)$.  The sequential branch makes $O(\mu_2/\mu_3)$ transfer
queries, costing $\widetilde O(\mu_1\mu_2/\mu_3)$ in total.  Fine-macro
sampling and surrogate search cost
$(\mu_2/\mu_3)\cdot(\mu_1/(\omega\mu_3))\cdot\mu_3^2=\mu_1\mu_2/\omega$, up
to logarithmic factors.  In the sparse fallback, fewer than $\omega$ coarse
macros are tested against the full block $\sigma_l$, and each test costs
$O(\mu_2^2)$, for a total of $O(\omega\mu_2^2)$.  The final transfer query
in the sparse branch is absorbed by the preceding terms.
\end{proof}

%----------------------------------------------------------------------
\subsection{Completing the continuous algorithm}
\label{subsec:continuous-completion}
%----------------------------------------------------------------------

The right-boundary contributions are
$I_{\mathrm{LR}}\cup I_{\mathrm{BR}}$, and the top-boundary contributions
are $I_{\mathrm{LT}}\cup I_{\mathrm{BT}}$.  We convert them into boundary
interval arrays by taking forward closure within the $\delta$-free space.
Let $g$ be an elementary edge of $R_{k,l}$ or $T_{k,l}$, oriented according
to the order of the varying curve, and let $F_\delta(g)$ be its
$\delta$-free interval.  For a set $J$ on the boundary, define
$\operatorname{cl}_\delta(J)$ edgewise, if $J\cap g=\emptyset$, the entry on
$g$ is empty, and otherwise
$\operatorname{cl}_\delta(J)[g]\defeq\set{z\in F_\delta(g):\ell_g\le_g z}$,
where $\ell_g$ is the first point of $J\cap g$.  The edge-assignment
convention from \Cref{subsec:continuous-dyadic-auxtrans} resolves shared
vertices.  The continuous local update returns
$A_R\defeq\operatorname{cl}_\delta(I_{\mathrm{LR}}\cup I_{\mathrm{BR}})$
and
$A_T\defeq\operatorname{cl}_\delta(I_{\mathrm{LT}}\cup I_{\mathrm{BT}})$.

\begin{lemma}[Continuous local update guarantee]
\label{lem:continuous-reach}
On $\mathcal E_{\mathrm{cont}}$, if the incoming sets $A_L$ and $A_B$ are
complete and $\beta$-sound, then the outgoing arrays $A_R$ and $A_T$ are
complete and $\beta$-sound.  After constructing the four contributions, the
assembly takes $O(\mu_1+\mu_2)$ time.
\end{lemma}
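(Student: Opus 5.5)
The proof splits into completeness, soundness, and running time. The main point is that the forward closure $\operatorname{cl}_\delta$ neither removes true reachable points nor adds unsound ones.

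\emph{Completeness.} Let $z$ be a globally $\delta$-reachable point on $R_{k,l}$ (the case of $T_{k,l}$ is symmetric).
\begin{enumerate}
\item By \Cref{lem:continuous-witness-classification}, $z$ has a local witness in one of the four classes.
\item If the class is left-to-right, bottom-to-right or left-to-top, then \Cref{lem:local-boundary-propagation} places $z$ in $I_{\mathrm{LR}}\cup I_{\mathrm{BR}}\cup I_{\mathrm{LT}}$. On $\mathcal E_{\mathrm{cont}}$, a bottom-to-top witness places $z$ in $I_{\mathrm{BT}}$ by \Cref{thm:aux-trans-recurrence}. In every case $z$ lies in the union $J$ fed to the closure on the correct side.
\item A globally $\delta$-reachable point is $\delta$-free, so $z\in F_\delta(g)$ for the elementary edge $g$ to which $z$ is assigned.
\item Since $z\in J\cap g$, the first point $\ell_g$ of $J\cap g$ satisfies $\ell_g\le_g z$. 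Hence $z\in\operatorname{cl}_\delta(J)[g]$.
\end{enumerate}
At the shared corner, the edge-assignment convention fixes one edge, and the corner argument of \Cref{lem:continuous-witness-classification} shows that the point is captured on each side.

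\emph{Soundness of the contributions.}
\begin{enumerate}
\item Points of $I_{\mathrm{LR}}$, $I_{\mathrm{BR}}$ and $I_{\mathrm{LT}}$ are globally $\beta\delta$-reachable by \Cref{lem:local-boundary-propagation}.
\item For $(y,b_{l+1})\in I_{\mathrm{BT}}$, \Cref{thm:aux-trans-recurrence} gives $x\in A_B^\delta\subseteq A_B$ with $\dF(\tau_k[x,y],\sigma_l)\le(5+5\eta+\eta^2)\delta\le\beta\delta$. By $\beta$-soundness of $A_B$, the pair $(x,b_l)$ is globally $\beta\delta$-reachable.
\item Concatenating the two certificates via \Cref{lem:frechet-concat} shows that $(y,b_{l+1})$ is globally $\beta\delta$-reachable.
\end{enumerate}

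\emph{Soundness of the closure.} Take $z\in\operatorname{cl}_\delta(J)[g]$, so $\ell_g\le_g z$ with both points on the same elementary boundary edge $g$ and $z\in F_\delta(g)$.
\begin{enumerate}
\item The point $\ell_g$ is the infimum of $J\cap g$, so it may not itself lie in $J$. This is the delicate step; I would handle it in one of two ways:
\begin{itemize}
\item argue that each contribution is a finite union of closed intervals (images of closed reachable sets under DP, transfer and preimage operations), so the first point is attained; or
\item pick a sequence in $J\cap g$ approaching $\ell_g$ and use closedness of the set of globally $\beta\delta$-reachable points (compactness of Fréchet matchings).
\end{itemize}
\item Granting $\ell_g\in J$, the segment of $g$ from $\ell_g$ to $z$ lies inside the interval $F_\delta(g)$, because $\ell_g,z\in F_\delta(g)$ and the free interval is convex.
\item Along this segment one curve stays fixed at the boundary vertex and the other moves along one edge. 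Every point of the segment is within $\delta\le\beta\delta$ of the fixed vertex, so the move can be appended to the certificate of $\ell_g$.
\item Monotonicity holds because the motion is in the forward direction of the varying curve. Hence $z$ is globally $\beta\delta$-reachable.
\end{enumerate}

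\emph{Running time.} Computing $F_\delta(g)$ costs $O(1)$ per elementary edge in fixed $d$ (a ball–segment intersection). Finding $\ell_g$ is one scan of the sorted interval lists of the contributions, whose total linear complexity is $O(\mu_1+\mu_2)$. So the assembly takes $O(\mu_1+\mu_2)$ time.
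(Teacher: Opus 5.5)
Your proposal is correct and follows the paper's proof essentially step for step: completeness comes from the witness classification together with \Cref{lem:local-boundary-propagation} and \Cref{thm:aux-trans-recurrence}; soundness comes from anchoring each closure point at a contribution point on the same elementary edge (with $I_{\mathrm{BT}}$ anchors certified by concatenation with the $\beta$-sound prefix certificate of $(x,b_l)$) and extending along the $\delta$-free boundary segment; and the running time is an $O(\mu_1+\mu_2)$ edgewise scan. Your extra worry about whether $\ell_g$ is attained is something the paper passes over by simply asserting the anchor exists, and it only matters for $z=\ell_g$ itself, since any $z$ strictly after the infimum already has some $z_0\in J\cap g$ with $z_0<_g z$, and either of your two fixes handles that single endpoint.
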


\begin{proof}
Let $z\in R_{k,l}$ be globally $\delta$-reachable.  By
\Cref{lem:continuous-witness-classification}, it has either a left-to-right
or bottom-to-right witness.  By \Cref{lem:local-boundary-propagation},
$z\in I_{\mathrm{LR}}\cup I_{\mathrm{BR}}$, and hence $z\in A_R$.
Similarly, every globally $\delta$-reachable point of $T_{k,l}$ has either a
left-to-top or bottom-to-top witness; the first case is handled by
\Cref{lem:local-boundary-propagation}, and the second by
\Cref{thm:aux-trans-recurrence}.  Thus the point belongs to
$I_{\mathrm{LT}}\cup I_{\mathrm{BT}}$ and hence to $A_T$.  Completeness
follows.

For soundness, let $z\in A_R$.  There is an anchor
$z_0\in I_{\mathrm{LR}}\cup I_{\mathrm{BR}}$ on the same elementary boundary
edge such that $z_0\le_g z$ and the boundary segment from $z_0$ to $z$ lies
in $F_\delta(g)$.  The anchor is globally $\beta\delta$-reachable by
\Cref{lem:local-boundary-propagation}.  Extend its certificate from $z_0$ to
$z$ while holding the $\tau$-position fixed at $a_{k+1}$; this extension has
cost at most $\delta$, so the complete certificate has cost at most
$\beta\delta$.

The argument for $z\in A_T$ has one additional case.  Let
$z_0\in I_{\mathrm{LT}}\cup I_{\mathrm{BT}}$ be the anchor on the same top
boundary edge.  If $z_0\in I_{\mathrm{LT}}$, then
\Cref{lem:local-boundary-propagation} gives global $\beta\delta$-reachability
of $z_0$.  If $z_0=(y_0,b_{l+1})\in I_{\mathrm{BT}}$, then
\Cref{thm:aux-trans-recurrence} gives some $x\in A_B^\delta$ with
$\dF(\tau_k[x,y_0],\sigma_l)\le(5+5\eta+\eta^2)\delta\le\beta\delta$.  Since
$A_B$ is $\beta$-sound, $(x,b_l)$ is globally $\beta\delta$-reachable, and
concatenating this global prefix certificate with the local bottom-to-top
certificate proves global $\beta\delta$-reachability of $z_0$.  Finally,
extend from $z_0$ to $z$ along the top-boundary free interval while holding
the $\sigma$-position fixed at $b_{l+1}$; this extension has cost at most
$\delta\le\beta\delta$.

The right and top boundaries contain $O(\mu_2)$ and $O(\mu_1)$ elementary
edges, respectively, so forming forward closures takes $O(\mu_1+\mu_2)$
time.
\end{proof}

\paragraph{Preprocessing.}
By \Cref{thm:canonical-transfer}, the transfer tables of one auxiliary curve
$Q$ of size $t$ are preprocessed in $\widetilde O(\mu_1^2 t/\eta)$ time.  At
the fine scale, there are $\mu_1/\mu_3$ fine macros, each with auxiliary
curve size $O(\mu_3)$, so the total fine-scale transfer preprocessing per
host block is
$(\mu_1/\mu_3)\cdot\widetilde O(\mu_1^2\mu_3/\eta)
=\widetilde O(\mu_1^3/\eta)$; at the coarse scale, there are $\mu_1/\mu_2$
coarse macros, each with auxiliary curve size $O(\mu_2)$, giving
$(\mu_1/\mu_2)\cdot\widetilde O(\mu_1^2\mu_2/\eta)
=\widetilde O(\mu_1^3/\eta)$ as well.  For one scale $s$, there are
$O(\mu_1/s)$ macro boundaries, and testing all vertex-aligned suffixes and
prefixes at one boundary has total input complexity $O(\mu_1^2)$, so the simplification work at that scale is
$\widetilde O_{d,\eps}(\mu_1^3/s)\subseteq\widetilde O_{d,\eps}(\mu_1^3)$.
The two scales, the local prefix and suffix simplifications, and recovery of
the selected matchings are therefore bounded by
$\widetilde O_{d,\eps}(\mu_1^3)$ per host block.  Since there are
$O(n/\mu_1)$ host blocks, the total continuous preprocessing is
$\widetilde O_{d,\eps}(n\mu_1^2)$.

\paragraph{Running time.}
Including the lower-order local boundary work, the total running time is
\[
    \widetilde O\!\left(
        \frac{nm}{\mu_1\mu_2}
        \left(
            \mu_1+\mu_2^2
            +
            \mu_1\mu_3
            +
            \frac{\mu_1\mu_2}{\mu_3}
            +
            \frac{\mu_1\mu_2}{\omega}
            +
            \omega\mu_2^2
        \right)
        +
        n\mu_1^2
    \right).
\]
The local boundary terms contribute $nm/\mu_2+nm\,\mu_2/\mu_1$, which will
be lower order, and the dominant expression is
\[
    \widetilde O\!\left(
        nm\left(
            \frac{\mu_3}{\mu_2}
            +
            \frac1{\mu_3}
            +
            \frac1{\omega}
            +
            \frac{\omega\mu_2}{\mu_1}
        \right)
        +
        n\mu_1^2
    \right).
\]
Choose $\mu_1=m^{4/9}$, $\mu_2=m^{2/9}$, $\mu_3=m^{1/9}$, and
$\omega=m^{1/9}$, rounded so that $\mu_3\mid\mu_2$ and $\mu_2\mid\mu_1$;
this changes each parameter by at most a constant factor.  Then the
fixed-threshold procedure runs in
$\widetilde O_{d,\eps}(nm^{8/9})$ time.

\begin{proof}[Proof of \Cref{thm:continuous-gap}]
On $\mathcal E_{\mathrm{cont}}$, \Cref{lem:continuous-reach} shows that every
local update maps complete and $\beta$-sound incoming sets to complete and
$\beta$-sound outgoing sets.  By
\Cref{lem:continuous-block-induction}, the algorithm accepts whenever
$\dF(\tau,\sigma)\le\delta$ and rejects whenever
$\dF(\tau,\sigma)>\beta\delta=(5+\eps)\delta$.
By \Cref{lem:continuous-sampling-success}, the event
$\mathcal E_{\mathrm{cont}}$ holds with high probability.  The explicit
sampling-failure cutoff bounds the running time even outside this event, so
the runtime is $\widetilde O_{d,\eps}(nm^{8/9})$.
\end{proof}

\begin{theorem}[Continuous Fr\'echet approximation]
\label{thm:continuous-main}
For two polygonal curves $\tau$ and $\sigma$ in fixed dimension, with
$\abs{\tau}=n$ and $\abs{\sigma}=m\le n$, one can compute a randomized
$(5+\eps)$-approximation to their continuous Fr\'echet distance, with high
probability, in
\[
    \widetilde O_{d,\eps}(nm^{8/9})
\]
time.
\end{theorem}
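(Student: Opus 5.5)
The approach is to reduce optimization to the fixed-threshold gap decision of \Cref{thm:continuous-gap} via the decision-to-optimization transformation of Colombe and Fox~\cite{colombe2021approximating}, as announced in the introduction. Let $\mathcal D(\delta)$ denote the gap procedure run with an accuracy parameter $\eps'=\Theta(\eps)$. Its guarantee is: if $\mathcal D$ accepts, then $\dF\le(5+\eps')\delta$; if it rejects, then $\dF>\delta$. Equivalently, $\mathcal D$ is a $(5+\eps')$-approximate decider. Because a decider is monotone only up to the gap, I would use the transformation in its standard form. It needs such an approximate decider together with a way to restrict the search to a polynomially bounded candidate range, and it incurs only a polylogarithmic number of calls plus $\widetilde O(n)$ overhead.

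First, I would obtain a coarse estimate $\Delta$ with $\Delta\le\dF(\tau,\sigma)\le \mathrm{poly}(n)\,\Delta$. Any known near-linear polynomial-factor approximation provides this; for example, \cite{van2023subquadratic} with $\alpha=n$ takes $O(n\log^3n)$ time. Second, I would handle the degenerate case $\dF=0$ or very small values separately: these are detected by checking endpoint distances and coincidence, or absorbed by the lower end of the range. Third, I would run a binary search over the geometric grid $\Delta(1+\eps')^i$, $i=0,\dots,O(\eps'^{-1}\log n)$. This takes $O(\log(\eps'^{-1}\log n))$ calls to $\mathcal D$. Finally, I would output $(5+\eps')\delta^*$, where $\delta^*$ is the smallest grid value at which $\mathcal D$ accepted, or the top of the range if no call accepted.

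Correctness argument: let $\delta^*$ be the returned accepted grid value, with $\delta^*/(1+\eps')$ rejected (or below the range). Acceptance gives $\dF\le(5+\eps')\delta^*$. Rejection at $\delta^*/(1+\eps')$ gives $\dF>\delta^*/(1+\eps')$. Hence the output lies within a factor $(5+\eps')(1+\eps')\le 5+\eps$ of $\dF$ once $\eps'$ is chosen as a small constant multiple of $\eps$.

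The main obstacle is that binary search requires a consistent predicate, but $\mathcal D$ can answer arbitrarily on thresholds in the gap $(\delta,(5+\eps')\delta]$. The fix is to never rely on monotonicity. It suffices to find any pair of adjacent grid points where the lower one rejects and the upper one accepts. Such a transition exists whenever the bottom of the range rejects and the top accepts, and binary search on an arbitrary $0/1$ sequence with those endpoint values always locates one. The range endpoints are arranged by the coarse estimate. Each call succeeds with high probability, so a union bound over the $O(\log\log n+\log(1/\eps))$ calls preserves the high-probability guarantee. The total time is $\widetilde O_{d,\eps}(nm^{8/9})$, since the coarse estimate and all overhead are lower order.
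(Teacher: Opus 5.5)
Your argument is correct, but it takes a different route from the paper's proof of this theorem. Your first paragraph names Colombe--Fox, but the steps you carry out form a self-contained reduction and never use their transformation.

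\textbf{The paper's route.} The paper's proof is a black-box appeal to the transformation of~\cite{colombe2021approximating}. It runs \Cref{thm:continuous-gap} with $\eps_0=\eps/2$ and applies the transformation with $\gamma=\eps/12$. It asserts that the transformation consults the decider only through accept/reject answers, and so tolerates arbitrary answers inside the gap. This costs $O(\log n)$ decision calls and $O(n\log n)$ overhead.

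\textbf{Your route.} You take a polynomial-factor bracket from~\cite{van2023subquadratic} with $\alpha=n$. You then binary search a $(1+\eps')$-geometric grid, maintaining a certified-rejected lower value and a certified-accepted upper value. This is the scheme the paper uses for the discrete \Cref{thm:discrete-main}, seeded there by~\cite{bringmann2016approximability}, transplanted to the continuous case.

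\textbf{What each buys.} In your version, robustness to non-monotone gap answers is proved rather than asserted. The number of decision calls also drops to $O(\log\log n+\log(1/\eps))$. The cost is a dependence on an external coarse approximation, whose $O(n\log^3 n)$ time is lower order. The paper's version is shorter but rests entirely on the cited transformation's interface.

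\textbf{Two small cleanups.}
\begin{enumerate}
\item The lower sentinel must be the virtual value one grid step below $\Delta$, certified by the bracket, not a decider call. A call at $\Delta$ itself may legitimately accept, since only $\Delta\le\dF(\tau,\sigma)$ is known. Your correctness paragraph (``or below the range'') already uses the sentinel correctly, but the sentence ``the bottom of the range rejects'' should be read that way.
\item Your degenerate-case remark is vague; endpoint distances alone do not detect $\dF(\tau,\sigma)=0$. The clean fix, as in the paper's discrete proof, is that the coarse $O(n)$-approximation returns $0$ exactly when $\dF(\tau,\sigma)=0$, and then you output $0$. Otherwise $\Delta>0$, so the sentinel $\Delta/(1+\eps')<\dF(\tau,\sigma)$ is valid.
\end{enumerate}
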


\begin{proof}
Colombe and Fox~\cite{colombe2021approximating} convert any
$\alpha$-approximate decision procedure into a
$(1+\gamma)\alpha$-approximation for any $\gamma\in(0,1]$.  The
transformation makes $O(\log n)$ decision calls and spends $O(n\log n)$
additional time.  It queries the decision procedure only through its
accept and reject answers, so its value version applies to our
gap-decision procedure.  Run \Cref{thm:continuous-gap} with accuracy
parameter $\eps_0=\eps/2$ and apply the transformation with
$\gamma=\eps/12$; since $\eps\le1/2$,
$(1+\eps/12)(5+\eps/2)\le 5+\eps$.  Choosing the sampling constant so
that a union bound over the $O(\log n)$ calls preserves high
probability, the total running time is
$\widetilde O_{d,\eps}(nm^{8/9})$.
\end{proof}

%% file: discrete.tex
\section{Discrete Fr\'echet Algorithm}
\label{sec:discrete}
We now prove the discrete result through a fixed-threshold gap-decision
procedure.  The approximation algorithm for the discrete Fr\'echet distance is given
in \Cref{subsec:discrete-completion}.

\begin{theorem}[Discrete gap decision]
\label{thm:discrete-gap}
Let $\tau$ and $\sigma$ be polygonal curves in fixed dimension, with
$\abs{\tau}=n$ and $\abs{\sigma}=m\le n$.  Given a threshold $\delta>0$, 
there is a randomized algorithm that accepts if
$\ddF(\tau,\sigma)\le\delta$ and rejects if
$\ddF(\tau,\sigma)>5\delta$, with high probability.  The algorithm may
return either answer in the remaining gap.  Its running time is
$\widetilde O_d(nm^{4/5})$.
\end{theorem}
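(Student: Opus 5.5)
The proof will mirror the continuous argument of \Cref{sec:continuous}, with every approximate primitive replaced by an exact discrete one and $\beta\defeq 5$. First, we run the same block dynamic program on the discrete free-space graph. Boundary sets are vertex subsets, and the outer boundaries are initialized exactly. The discrete analogue of \Cref{lem:continuous-block-induction} then reduces the theorem to a \REACH{} update that maps complete and $5$-sound sets to complete and $5$-sound sets. Witness classification is identical, with \Cref{lem:crossing} holding in the grid.

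For the three easy classes we use simplifications from \Cref{lem:batched-discrete-simplifications}. These have error exactly $\delta$ and size exactly $k_{\mathrm d}^*$. By \Cref{cor:kstar-witness}, any host subcurve matching an $O(\mu_2)$-vertex piece has $k_{\mathrm d}^*=O(\mu_2)$. Local propagations at threshold $2\delta$ then give completeness and soundness $3\delta\le5\delta$.

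Macro surrogates are built the same way. For each macro boundary, we take the longest vertex suffix or prefix with $k_{\mathrm d}^*\le B_s$, read off in $O(1)$ from the batched table, so the maximal ones are found by binary search. The auxiliary curve is $Q_M=\bar\zeta_a\circ M\circ\tilde\zeta_b$ with $\ddF(T[\bar a,\tilde b],Q_M)\le\delta$. The discrete analogue of \Cref{lem:macro-surrogate} then yields $\pi\subseteq Q_M$ with $\ddF(\pi,Z)\le2\delta$. Here the witness extension is handled by \Cref{lem:restriction} and \Cref{cor:kstar-witness}, and no connector segments are needed, since discrete subcurves already start at vertices.

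The exact transfer structure replaces \Cref{thm:canonical-transfer}. For each canonical interval $J$ of $Q$ at threshold $3\delta$, we build the free-space graph of $T\times J$. This graph is planar with $O(\mu_1|J|)$ vertices, so we preprocess it with \Cref{thm:planar-reachability} and store, for each source column, the farthest reachable last-row vertex. A query is a greedy left-to-right merge over the sources. When a source's farthest target exceeds the frontier, we test the newly exposed last-row vertices one by one with the oracle. By \Cref{lem:crossing}, a target unreachable from the latest source that could reach beyond it is unreachable from every earlier source, so each target is tested $O(1)$ times and a query costs $\widetilde O(\mu_1)$. Tables for the canonical intervals are chained, and exactness is preserved because of \Cref{lem:frechet-concat} in its discrete form. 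Preprocessing therefore costs $\widetilde O(\mu_1 t)$ per auxiliary curve instead of $\mu_1^2t/\eta$.

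The sequential and sparse branches and the sampling lemma then go through verbatim. Transfer at $3\delta$ is complete, and soundness is $3\delta+2\delta=5\delta$ with $\lambda=1$.

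For the running time, per-host-block preprocessing is $\widetilde O_d(\mu_1^2)$: batched simplification plus $\sum_M\widetilde O(\mu_1|Q_M|)$ at both scales. The total is therefore
\[
\widetilde O\Bigl(nm\bigl(\tfrac{\mu_3}{\mu_2}+\tfrac1{\mu_3}+\tfrac1\omega+\tfrac{\omega\mu_2}{\mu_1}\bigr)+n\mu_1\Bigr).
\]
Choosing $\mu_1=m^{4/5}$, $\mu_2=m^{2/5}$, and $\mu_3=\omega=m^{1/5}$ gives $\widetilde O_d(nm^{4/5})$. The lower-order local terms $nm/\mu_2+nm\mu_2/\mu_1$ also stay within this bound.

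The main obstacle I expect is the exact transfer query. I need to argue precisely that testing each exposed target only against the current frontier source is complete; this needs \Cref{lem:crossing} applied with sources ordered along the first row and clipped to feasible vertices. I also need to ensure that the chaining across canonical intervals, where one interval's last row is the next interval's first row, stays within $\widetilde O(\mu_1)$ per step.
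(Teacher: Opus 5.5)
Your proposal is correct and follows the paper's proof essentially step for step. It uses the same block induction, batched discrete simplifications for the three local types at threshold $2\delta$, auxiliary curves $Q_M$ yielding quality-$2$ surrogates, exact dyadic transfer at $3\delta$ via the planar reachability oracle with a frontier merge, the same sequential and sparse branches, and the same parameters $\mu_1=m^{4/5}$, $\mu_2=m^{2/5}$, $\mu_3=\omega=m^{1/5}$.

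One sentence needs repair: the justification of the frontier merge. The $O(\mu_1)$ bound on oracle tests comes from monotonicity of the frontier. \Cref{lem:crossing} instead gives completeness, in the direction that a target not reachable from the source that first exposed it is also unreachable from every \emph{later} processed source; earlier sources cannot reach it at all. This is exactly the argument in the paper's proof of \Cref{lem:disc-fixed-transfer}.
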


%----------------------------------------------------------------------
\subsection{Fixed-threshold block dynamic program}
\label{subsec:discrete-block-dp}
%----------------------------------------------------------------------

The algorithm processes the block pairs $\tau_k\times\sigma_l$ in any
topological order of the block grid.  For each block, the discrete
implementation of \REACH{} receives subsets of the incoming boundary vertices
$L_{k,l}\cup B_{k,l}$ and returns subsets of
$R_{k,l}\cup T_{k,l}$.  A boundary shared by two adjacent blocks is stored
once and updated by taking unions.

On the two outer incoming boundaries $\{v_1\}\times\sigma$ and
$\tau\times\{w_1\}$, the algorithm stores exactly the vertex pairs reachable
from $(v_1,w_1)$ at threshold $\delta$.  These sets are computed by the
one-dimensional discrete dynamic program.  Every other boundary set is created
when first produced by a block update.  After all blocks have been processed,
the algorithm accepts exactly when $(v_n,w_m)$ is stored.

\begin{lemma}[Discrete block induction]
\label{lem:discrete-block-induction}
On any event on which every discrete \REACH{} call maps complete and
$5$-sound incoming sets to complete and $5$-sound outgoing sets, the
block dynamic program accepts whenever $\ddF(\tau,\sigma)\le\delta$ and rejects
whenever $\ddF(\tau,\sigma)>5\delta$.
\end{lemma}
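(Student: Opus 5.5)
The plan is to mirror the proof of \Cref{lem:continuous-block-induction}, replacing continuous matchings by discrete monotone grid paths. Work on the given event throughout. Induct over the chosen topological order of block pairs with the invariant that every stored boundary vertex set is complete and $5$-sound. Here complete means it contains every globally $\delta$-reachable boundary vertex, and $5$-sound means every stored vertex is globally $5\delta$-reachable.

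The base case consists of the two outer incoming boundaries $\{v_1\}\times\sigma$ and $\tau\times\{w_1\}$. There the stored sets are exactly the vertex pairs discretely $\delta$-reachable from $(v_1,w_1)$ within the one-row or one-column graph. A global path to a vertex of the first row or column can only use horizontal or vertical steps along that row or column, since indices never decrease. So this one-dimensional computation returns exactly the globally $\delta$-reachable vertices there. This gives completeness, and $5$-soundness follows because $\delta\le5\delta$.

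For the inductive step, take a block $\tau_k\times\sigma_l$. Each of its incoming boundaries $L_{k,l}$ and $B_{k,l}$ is either an outer boundary or the outgoing boundary of an earlier block in the order, so it already satisfies the invariant. By hypothesis the \REACH{} call returns complete and $5$-sound sets on $R_{k,l}\cup T_{k,l}$. Taking unions with anything already stored on a shared boundary, including the corner vertices shared by several blocks, preserves both properties. The one mild point is checking that every boundary vertex is stored by some block whose update is responsible for it, so completeness is not lost on unproduced sets. This holds because each non-outer boundary is the outgoing side of some block.

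For the conclusion, suppose $\ddF(\tau,\sigma)\le\delta$. Then $(n,m)$ is globally $\delta$-reachable, and it lies on the outgoing boundary of the last block, so completeness forces it to be stored and the algorithm accepts. Conversely, if $(n,m)$ is stored, $5$-soundness gives a monotone discrete path from $(1,1)$ to $(n,m)$ all of whose vertices are within $5\delta$. Equivalently, $\ddF(\tau,\sigma)\le5\delta$. Soundness is stated globally, so there is no accumulation across blocks: the concatenated discrete matching has cost equal to the maximum over its pairs. Hence $\ddF(\tau,\sigma)>5\delta$ forces rejection.

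No step is a real obstacle. The only care needed is the corner and shared-vertex bookkeeping in the union step.
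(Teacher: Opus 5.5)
Your proposal is correct and follows essentially the same route as the paper's proof. Both induct over the topological block order with the invariant that every initialized or produced boundary set is complete and $5$-sound, and preserve it under unions on shared boundaries. Both then conclude acceptance from completeness at $(v_n,w_m)$ and rejection from global $5$-soundness, so no factor accumulates across blocks.

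Your base case is slightly more explicit: you note that a path to a first-row or first-column vertex stays on that row or column. Your acceptance step also appeals directly to completeness of the final outgoing set rather than to every boundary vertex on the path.
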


\begin{proof}
We induct over the block order.  The invariant is that every boundary set
already initialized or produced is complete and $5$-sound, it holds initially because the two outer incoming boundaries are
computed exactly at threshold $\delta$.  When a block
$\tau_k\times\sigma_l$ is processed, its incoming boundary sets have already
been initialized or produced and therefore satisfy the invariant.  The assumed
local guarantee implies that its outgoing sets are complete and
$5$-sound.  Taking unions on shared boundaries preserves both properties.

If $\ddF(\tau,\sigma)\le\delta$, a monotone path in the discrete free-space
graph reaches $(v_n,w_m)$.  Completeness ensures that each boundary vertex
visited by this path is retained, and hence the algorithm accepts.
Conversely, if the algorithm accepts, then $5$-soundness of the final set
implies $\ddF(\tau,\sigma)\le5\delta$.  The factor $5$ does not
accumulate across blocks, the local discrete matchings concatenate
monotonically, and the cost of the concatenation is the maximum cost of its
pieces.
\end{proof}

%----------------------------------------------------------------------
\subsection{One-block witnesses and local cases}
\label{subsec:discrete-witness-local}
%----------------------------------------------------------------------

Fix a block pair $\tau_k\times\sigma_l$, with incoming sets
$A_L\subseteq L_{k,l}$ and $A_B\subseteq B_{k,l}$.  A
\emph{local discrete $\delta$-witness} for
$z\in R_{k,l}\cup T_{k,l}$ is a directed path at threshold $\delta$ in the
discrete free-space graph of the block, from a globally
$\delta$-reachable incoming vertex to $z$.

The four witness types are determined by their incoming and outgoing sides:
\[
\begin{array}{c|c|c}
\text{type} & \text{source and target} & \text{curve condition}\\ \hline
1
& (a_k,p)\to(a_{k+1},q)
& \ddF(\tau_k,\sigma_l[p,q])\le\delta
\\[1mm]
2
& (x,b_l)\to(a_{k+1},q)
& \ddF(\tau_k[x,a_{k+1}],\sigma_l[b_l,q])\le\delta
\\[1mm]
3
& (a_k,p)\to(y,b_{l+1})
& \ddF(\tau_k[a_k,y],\sigma_l[p,b_{l+1}])\le\delta
\\[1mm]
4
& (x,b_l)\to(y,b_{l+1})
& \ddF(\tau_k[x,y],\sigma_l)\le\delta .
\end{array}
\]
All points in this subsection are vertices, and the source in each row is
required to be globally $\delta$-reachable.  Types~1 and~2 contribute to
$R_{k,l}$, while Types~3 and~4 contribute to $T_{k,l}$.

\begin{lemma}[Discrete witness classification]
\label{lem:discrete-witness-classification}
Every globally $\delta$-reachable vertex of
$R_{k,l}\cup T_{k,l}$ has a local witness of one of the four types above.
\end{lemma}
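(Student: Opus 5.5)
I would mirror the proof of \Cref{lem:continuous-witness-classification}, working in the discrete free-space graph instead of the continuous diagram. Let $z\in R_{k,l}\cup T_{k,l}$ be globally $\delta$-reachable. By definition there is a directed path $\Gamma$ in the discrete free-space graph at threshold $\delta$ from $(1,1)$ to $z$; every vertex on it is feasible and every edge advances by $(1,0)$, $(0,1)$, or $(1,1)$.

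Let $s$ be the first vertex of the maximal suffix of $\Gamma$ whose vertices all lie in the closed grid block $[a_k,a_{k+1}]\times[b_l,b_{l+1}]$. If $s=(1,1)$, then $s$ lies on $L_{k,l}\cap B_{k,l}$ because the block containing the origin has $a_k=1$ and $b_l=1$. Otherwise the predecessor of $s$ on $\Gamma$ lies outside the block. Since each step increases each index by at most one and never decreases it, the predecessor has index $a_k-1$ in the $\tau$-coordinate or $b_l-1$ in the $\sigma$-coordinate. Hence $s$ has first coordinate $a_k$ or second coordinate $b_l$, so $s\in L_{k,l}\cup B_{k,l}$.

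This is the one place that differs from the continuous case, and it needs a short check. A diagonal step can enter the block at the corner $(a_k,b_l)$. That corner lies on both incoming sides, so the conclusion still holds.

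The prefix of $\Gamma$ up to $s$ shows that $s$ is globally $\delta$-reachable. The suffix from $s$ to $z$ stays inside the block, so it is a local discrete $\delta$-witness. It remains to read off the type.
\begin{itemize}
\item If $s\in L_{k,l}$ and $z\in R_{k,l}$, the witness matches all of $\tau_k$ with $\sigma_l[p,q]$: Type~1.
\item If $s\in B_{k,l}$ and $z\in R_{k,l}$: Type~2.
\item If $s\in L_{k,l}$ and $z\in T_{k,l}$: Type~3.
\item If $s\in B_{k,l}$ and $z\in T_{k,l}$: Type~4.
\end{itemize}
In each case the curve condition follows from the characterization of discrete reachability as $\ddF(P[i',i],Q[j',j])\le\delta$ in \Cref{subsec:curve-notation}.

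For corners, if $s$ is $(a_k,b_l)$, it qualifies as both a left and a bottom source. If $z$ is $(a_{k+1},b_{l+1})$, the curve conditions of the right-exiting and top-exiting types coincide. So a witness at a corner fits some type on each relevant side, and the lemma follows.
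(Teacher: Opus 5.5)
Your proof is correct and follows the paper's argument. You take the maximal suffix of a global witness path that lies in the block, note that its first vertex lies on $L_{k,l}\cup B_{k,l}$, read off the type from the entry and exit sides, and use the corner remark to handle $(a_{k+1},b_{l+1})$; your unit-step argument that the entry vertex has coordinate $a_k$ or $b_l$ just spells out the paper's one-line appeal to monotonicity.
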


\begin{proof}
Let $z\in R_{k,l}\cup T_{k,l}$ be globally $\delta$-reachable, and fix a
monotone grid path from $(v_1,w_1)$ to $z$.  Let $s$ be the first vertex of
the maximal suffix of this path contained in the block.  Monotonicity implies
that $s\in L_{k,l}\cup B_{k,l}$; entry through an outgoing side is possible
only at a corner that also lies on an incoming side.  The subpath from $s$ to $z$ is a local witness.  Its incoming and outgoing
sides place it in one of the four types.  At the shared corner vertex
$(a_{k+1},b_{l+1})$ of $R_{k,l}$ and $T_{k,l}$, the Type~4 and Type~2
curve conditions coincide, and so do the Type~3 and Type~1 conditions.  A
witness ending at this corner therefore has a type on each side.
\end{proof}

We construct the contributions $I_1$, $I_2$, and $I_3$ for the first three
types.  Identify $A_L$ and $A_B$ with their projections onto $\sigma_l$ and
$\tau_k$, and clip them to the $\delta$-feasible portions.  Define
$A_L^\delta\defeq\set{p\in A_L:\norm{a_k-p}\le\delta}$ and
$A_B^\delta\defeq\set{x\in A_B:\norm{x-b_l}\le\delta}$; this clipping
removes no source of a true $\delta$-witness.

For every simplification used below, add the endpoints of the original
subcurve if necessary; this increases its size by at most two and preserves
the discrete Fr\'echet bound, since while moving to or from an added
endpoint, the original curve remains fixed at the corresponding endpoint.
Let $M_P$ denote the monotone vertex selection from the stored matching.  For
every vertex subcurve $P[x,y]$,
$\ddF\bigl(P[x,y],\zeta[M_P(x),M_P(y)]\bigr)\le\delta$.

Fix a sufficiently large constant $C$, and preprocess each host block
$\tau_k$ using \Cref{lem:batched-discrete-simplifications}.  Among the stored
simplifications, retain the simplification $\zeta_k^{\mathrm{all}}$ of
$\tau_k$, provided its endpoint-augmented size is at most $C\mu_2$; a
simplification $\zeta_k^{\mathrm{suf}}$ of the longest suffix
$P_k^{\mathrm{suf}}=\tau_k[s_k,a_{k+1}]$ whose endpoint-augmented size is at
most $C\mu_2$; and a simplification $\zeta_k^{\mathrm{pre}}$ of the longest
prefix $P_k^{\mathrm{pre}}=\tau_k[a_k,t_k]$ whose endpoint-augmented size is
at most $C\mu_2$.  We materialize only these three simplifications and their
matching selections $M_k^{\mathrm{suf}}$ and $M_k^{\mathrm{pre}}$; their
preprocessing cost is accounted for in \Cref{subsec:discrete-completion}.

Set $r_{\mathrm{loc}}\defeq2\delta$.  For Type~1, run the discrete
free-space dynamic program on $\zeta_k^{\mathrm{all}}\times\sigma_l$ at
threshold $r_{\mathrm{loc}}$, with sources $A_L^\delta$ on the left boundary,
and let $W_1\subseteq\sigma_l$ be the reachable vertices on the right
boundary; if $\zeta_k^{\mathrm{all}}$ was not retained, the output is empty.
For Type~2, define
$S_{\mathrm{suf}}\defeq M_k^{\mathrm{suf}}(A_B^\delta\cap
P_k^{\mathrm{suf}})$, run the dynamic program on
$\zeta_k^{\mathrm{suf}}\times\sigma_l$ with $S_{\mathrm{suf}}$ on the bottom
boundary, and let $W_2\subseteq\sigma_l$ be the reachable vertices on the
right boundary.  For Type~3, run the dynamic program on
$\zeta_k^{\mathrm{pre}}\times\sigma_l$ with $A_L^\delta$ on the left
boundary, and let $W_3\subseteq\zeta_k^{\mathrm{pre}}$ be the reachable
vertices on the top boundary.  All three computations use threshold
$r_{\mathrm{loc}}$ and no sources on the other incoming boundary.  The
initial sets are feasible at threshold $r_{\mathrm{loc}}$, for Types~1
and~3 this follows from endpoint preservation, and for Type~2, if
$\bar x=M_k^{\mathrm{suf}}(x)$ and $x\in A_B^\delta$, then
$\norm{\bar x-b_l}\le\delta+\delta=r_{\mathrm{loc}}$.  Define
\[
\begin{aligned}
    I_1
    &\defeq
    \set{(a_{k+1},q)\in R_{k,l}:
          q\in W_1,\ \norm{a_{k+1}-q}\le\delta},\\
    I_2
    &\defeq
    \set{(a_{k+1},q)\in R_{k,l}:
          q\in W_2,\ \norm{a_{k+1}-q}\le\delta},\\
    I_3
    &\defeq
    \set{(y,b_{l+1})\in T_{k,l}:
          y\in(M_k^{\mathrm{pre}})^{-1}(W_3),\
          \norm{y-b_{l+1}}\le\delta}.
\end{aligned}
\]

\begin{lemma}[Discrete local cases]
\label{lem:discrete-types-123}
Assume that $A_L$ and $A_B$ are complete and $5$-sound.  Then
$I_1\cup I_2\cup I_3$ contains every outgoing vertex with a Type~1, Type~2,
or Type~3 $\delta$-witness.  Every vertex in this union is globally
$5\delta$-reachable.  After host-block preprocessing, the construction
takes $O(\mu_1+\mu_2^2)$ time.
\end{lemma}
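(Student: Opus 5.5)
The plan follows the proof of \Cref{lem:local-boundary-propagation}. The continuous simplification bounds are replaced by the exact discrete primitives of \Cref{lem:batched-discrete-simplifications}, so every loss of $(1+\eta)$ becomes $1$. The argument has three parts: completeness for each of the three types separately, a common soundness argument, and the running-time count.

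\emph{Completeness.} For Type~1, take a witness from $(a_k,p)$ to $(a_{k+1},q)$. Completeness of $A_L$ together with feasibility gives $p\in A_L^\delta$. \Cref{cor:kstar-witness} applied to $\tau_k$ and $\sigma_l[p,q]$ gives $k_{\mathrm d}^*(\tau_k,\delta)\le\abs{\sigma_l[p,q]}=O(\mu_2)$. For $C$ large, the endpoint-augmented simplification $\zeta_k^{\mathrm{all}}$ is therefore retained. The triangle inequality then gives $\ddF(\zeta_k^{\mathrm{all}},\sigma_l[p,q])\le2\delta$, and endpoint augmentation keeps the start pairing $(a_k,p)$ feasible. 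The dynamic program therefore reaches $q$, and since $q$ is $\delta$-feasible with $a_{k+1}$, we get $(a_{k+1},q)\in I_1$.

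For Type~2, take a witness from $(x,b_l)$ to $(a_{k+1},q)$. Since $x$ is already a vertex, no edge-rounding is needed. Form the sequence $R_x\defeq(b_l)\circ\sigma_l[b_l,q]$; because $\norm{x-b_l}\le\delta$, in fact $\sigma_l[b_l,q]$ itself suffices, and it has $O(\mu_2)$ vertices. \Cref{cor:kstar-witness} bounds $k_{\mathrm d}^*(\tau_k[x,a_{k+1}],\delta)$ by $O(\mu_2)$. Because $k_{\mathrm d}^*$ is monotone under taking subsequences, which follows from \Cref{lem:ball-partition}, maximality of the suffix forces $s_k\le x$. I would write out this monotonicity explicitly. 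Next take $\bar x=M_k^{\mathrm{suf}}(x)$. The restricted matching gives $\ddF(\zeta_k^{\mathrm{suf}}[\bar x,\mathrm{last}],\tau_k[x,a_{k+1}])\le\delta$, and the triangle inequality then gives distance $2\delta$ to $\sigma_l[b_l,q]$. Here the last vertex of $\zeta_k^{\mathrm{suf}}$ is matched to $a_{k+1}$ by endpoint augmentation. The source $\bar x$ lies in $S_{\mathrm{suf}}$, so $q\in W_2$ and the pair lands in $I_2$.

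Type~3 is symmetric: the maximal prefix contains $y$, and $M_k^{\mathrm{pre}}(y)\in W_3$, so $y\in(M_k^{\mathrm{pre}})^{-1}(W_3)$. One point needs care: the inverse must be taken with respect to the specific selected vertex. I would define the preimage as the set of all host vertices whose selected image lies in $W_3$, so that membership follows directly.

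\emph{Soundness.} Every inserted vertex carries a $2\delta$ path in a simplified diagram, starting at a source from $A_L^\delta$ or from $S_{\mathrm{suf}}$. This path must be composed with the stored simplification matching, whose cost is $\delta$. For Type~2, the source $\bar x=M(x)$ and the endpoint $a_{k+1}$ give $\ddF(\tau_k[x,a_{k+1}],\sigma_l[b_l,q])\le3\delta$ by the triangle inequality through $\zeta_k^{\mathrm{suf}}[\bar x,\mathrm{last}]$. For Type~3, a path ending at $\bar y=M(y)$ gives $\ddF(\tau_k[a_k,y],\sigma_l[p,b_{l+1}])\le3\delta$. Type~1 gives $3\delta$ on all of $\tau_k$ in the same way. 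Since $3\le5$, concatenating with the global $5\delta$-certificate of the source, which exists by $5$-soundness of the inputs, yields global $5\delta$-reachability.

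The main obstacle is the Type~3 preimage step. The selection $M_P$ may map several host vertices to the same simplification vertex, and the triangle inequality must use a sub-simplification that starts at the first vertex and ends at $M(y)$. Monotonicity of the stored grid path, as in \Cref{subsec:local-propagation}, handles this.

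\emph{Running time.} Each of the three dynamic programs runs on a grid of size $O(\mu_2)\times O(\mu_2)$, giving $O(\mu_2^2)$. Clipping, computing images under $M_k^{\mathrm{suf}}$, and taking preimages under $M_k^{\mathrm{pre}}$ are linear scans of $O(\mu_1)$ size. The total is $O(\mu_1+\mu_2^2)$.
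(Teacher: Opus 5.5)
Your proposal is correct and follows the paper's proof essentially step for step. Both arguments do the following. Completeness uses \Cref{cor:kstar-witness} to show the simplifications are retained and that $x$ (resp.\ $y$) lies in the maximal suffix (resp.\ prefix), and then the restricted stored matching together with the triangle inequality to show the $2\delta$ propagations reach the target. Soundness composes the $2\delta$ grid certificate with the $\delta$ simplification matching to get $3\delta$, and concatenates with the $5\delta$-sound source. The running time is the $O(\mu_2^2)+O(\mu_1)$ count.

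Your detour through $R_x$ and through monotonicity of $k_{\mathrm d}^*$ is not needed. Admissibility of $\tau_k[x,a_{k+1}]$ alone already forces $s_k\le x$, by the definition of $P_k^{\mathrm{suf}}$ as the longest admissible suffix.
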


\begin{proof}
For Type~1, let $(a_k,p)\to(a_{k+1},q)$ be a witness.  Completeness gives
$p\in A_L^\delta$, and $\ddF(\tau_k,\sigma_l[p,q])\le\delta$.  Since
$\sigma_l[p,q]$ has $O(\mu_2)$ vertices,
$k_{\mathrm d}^*(\tau_k,\delta)\le\abs{\sigma_l[p,q]}=O(\mu_2)$ by
\Cref{cor:kstar-witness}, so by
\Cref{lem:batched-discrete-simplifications} the stored simplification of
$\tau_k$ has $O(\mu_2)$ vertices and $\zeta_k^{\mathrm{all}}$ is retained.
The triangle inequality gives
$\ddF(\zeta_k^{\mathrm{all}},\sigma_l[p,q])\le2\delta$, so the dynamic
program reaches $q$ and $(a_{k+1},q)\in I_1$.

For Type~2, let the witness start at $(x,b_l)$.  The suffix
$\tau_k[x,a_{k+1}]$ is within distance $\delta$ of the $O(\mu_2)$-vertex
curve $\sigma_l[b_l,q]$, so by \Cref{cor:kstar-witness} its stored batched
simplification has $O(\mu_2)$ vertices, and by maximality of $P_k^{\mathrm{suf}}$ we have
$x\in P_k^{\mathrm{suf}}$.  For $\bar x=M_k^{\mathrm{suf}}(x)$, the
restricted matching and the witness imply
\[
    \ddF\bigl(
        \zeta_k^{\mathrm{suf}}
        [\bar x,\operatorname{last}(\zeta_k^{\mathrm{suf}})],
        \sigma_l[b_l,q]
    \bigr)
    \le2\delta ,
\]
so the dynamic program reaches $q$ and $(a_{k+1},q)\in I_2$.

The Type~3 argument is symmetric, a witness ending at $(y,b_{l+1})$ implies
that the stored simplification of $\tau_k[a_k,y]$ has $O(\mu_2)$ vertices,
hence $y\in P_k^{\mathrm{pre}}$ by maximality, and the restricted matching
shows that $M_k^{\mathrm{pre}}(y)\in W_3$, so $(y,b_{l+1})\in I_3$.

For soundness, every inserted vertex has a certificate at threshold
$2\delta$ in a simplified grid.  Composing it with the corresponding
stored matching gives a local discrete matching in the original block of cost
at most $3\delta$.  Its source belongs to $A_L$ or $A_B$ and is
globally $5\delta$-reachable.

Each local dynamic program has size $O(\mu_2^2)$.  Images and preimages under
the stored matchings are computed by linear scans in $O(\mu_1)$ total time.
\end{proof}

\subsection{Exact discrete auxiliary transfer}
\label{subsec:discrete-dyadic-transfer}

Fix a host block $T=(z_1,\ldots,z_N)$ with $N=O(\mu_1)$, an auxiliary curve
$Q=(q_1,\ldots,q_t)$, and a threshold $r>0$.  For a vertex subcurve
$\pi\subseteq Q$ and a source set $S\subseteq T$, define
\[
    \operatorname{Trans}_Q^*(\pi,r,S)
    \defeq
    \set{z_j\in T: \exists z_i\in S,
        \ i\le j,
        \ \ddF(T[i,j],\pi)\le r}.
\]

\begin{lemma}[Fixed-interval exact transfer]
\label{lem:disc-fixed-transfer}
Let $J=Q[a,b]$ be a fixed vertex-to-vertex subcurve of $Q$.  There is a data
structure with preprocessing time $O(\mu_1 |J|)$ and query time $O(\mu_1)$
which, for every source set $S\subseteq T$, returns exactly
$\operatorname{Trans}_Q^*(J,r,S)$.
\end{lemma}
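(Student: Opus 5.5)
The data structure is the discrete free-space graph $G_J$ of $T\times J$ at threshold $r$: its vertices are the feasible pairs $(i,c)$ with $1\le i\le N$ and $a\le c\le b$. This graph has $O(\mu_1|J|)$ vertices and is planar and acyclic. We preprocess it with the oracle of \Cref{thm:planar-reachability} in $O(\mu_1|J|)$ time. In the same time bound, a single backward sweep computes for every source row vertex $(i,a)$ the value $\mathrm{far}(i)$, the largest $j$ such that $(j,b)$ is reachable from $(i,a)$ (or $\bot$ if none). This sweep is a standard DP: for each vertex, store the maximum top-row index reachable from it, taking the maximum over its at most three out-neighbours. Since $\operatorname{Trans}_Q^*(J,r,S)$ is exactly the set of $z_j$ such that $(j,b)$ is reachable in $G_J$ from some $(i,a)$ with $z_i\in S$ feasible, correctness of any procedure reduces to deciding these reachabilities.

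The query processes the feasible sources $i_1<i_2<\cdots$ of $S$ (those with $\norm{z_i-q_a}\le r$) in increasing order, maintaining a frontier $f$ (initially $0$) and an output set. For source $i$ with $\mathrm{far}(i)=F>f$, we scan the candidate targets $j=\max(f+1,i),\dots,F$. For each feasible top vertex $(j,b)$ we test reachability from $(i,a)$ with the $O(1)$ oracle and output $z_j$ if the test succeeds; afterwards we set $f=F$. Sources with $\mathrm{far}(i)\le f$ or $\mathrm{far}(i)=\bot$ are skipped in $O(1)$ time. Each index $j$ is scanned at most once because the frontier only increases, so the query time is $O(N)=O(\mu_1)$.

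Soundness is immediate, since every output is certified by the oracle from a source in $S$. The main obstacle is completeness: a target $j\le f$ that is reachable from a later source $i$, but was never tested from $i$, must already have been output. I would argue this with \Cref{lem:crossing}, mirroring the continuous merge in \Cref{lem:interval-transfer}. Let $j\le f$ be reachable from $(i,a)$, and let $i'<i$ be the source that last advanced the frontier past $j$, so $\mathrm{far}(i')=F'\ge j$. The containment $i'\le i\le j\le F'$ together with the paths $(i',a)\to(F',b)$ and $(i,a)\to(j,b)$ gives, via the discrete crossing lemma, that $(j,b)$ is reachable from $(i',a)$. At the time $i'$ was processed, either $j$ exceeded the then-current frontier, in which case $j$ was tested from $i'$ and output, or $j$ was at most that frontier, in which case we induct on an earlier source exactly as in the continuous case. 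One detail to check is the range $j\ge i$: monotonicity forces $j\ge i'$ for any target reachable from $i'$, so starting the scan at $\max(f+1,i)$ loses nothing.
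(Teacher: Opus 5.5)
Your proposal is correct and follows essentially the same route as the paper. Both build the planar reachability oracle on the free-space graph $G_J$ and precompute the farthest reachable final-row index by a reverse topological scan. Both answer a query with a monotone frontier, so each final-row vertex is tested at most once, and both prove completeness via \Cref{lem:crossing} with induction on the processing order. Starting your scan at $\max(f+1,i)$ rather than $f+1$ is a harmless refinement, since no target with index below $i$ is reachable from $(z_i,q_a)$.
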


\begin{proof}
Let $G_J$ be the discrete free-space graph of $T\times J$ at threshold $r$.
It has $O(\mu_1|J|)$ vertices and edges and is a planar directed acyclic
graph.  Build the planar reachability oracle of
Theorem~\ref{thm:planar-reachability}
on $G_J$.  For every feasible vertex $u$ of $G_J$, compute
$F(u)\defeq\max\set{j: u \leadsto (z_j,q_b)\text{ in }G_J}$, with value $0$
if no final-row vertex is reachable; this is computed by one reverse
topological scan.  The total preprocessing time is $O(\mu_1|J|)$.

Given $S$, scan the host vertices in increasing order, maintaining the
largest final-row index $f$ already exposed.  For a source $z_i\in S$, skip
it if $(z_i,q_a)$ is infeasible.  Otherwise set $g=F(z_i,q_a)$.  If $g>f$,
test by the reachability oracle all final-row vertices
$(z_{f+1},q_b),\ldots,(z_g,q_b)$, insert exactly those reachable from
$(z_i,q_a)$, and set $f:=g$. Infeasible final-row pairs are not vertices of $G_J$ and count as
unreachable. Since $f$ only increases, the total number of
oracle tests is $O(\mu_1)$.

Soundness is immediate, because a vertex is inserted only after a positive
reachability query.  For completeness, suppose a processed source $z_i$
reaches $(z_j,q_b)$.  If $j>f$ when $z_i$ is processed, then
$j\le F(z_i,q_a)$, so $z_j$ is tested and inserted.  If $j\le f$, let $z_h$
be a previously processed source that last advanced the frontier to $f$.
Since any path from $z_i$ to $z_j$ has $i\le j$, we have
$h\le i\le j\le f$, and the crossing lemma (\Cref{lem:crossing}) in the
discrete free-space graph implies that $(z_j,q_b)$ is reachable from
$(z_h,q_a)$.  If $z_j$ lay in the range tested when $z_h$ advanced the frontier, it was
inserted at that step, since it is reachable from $(z_h,q_a)$.  Otherwise
the same argument applies with an earlier source in place of $z_h$.  By
induction on the processing order, $z_j$ was already inserted.  So
every reachable final-row vertex is returned.
\end{proof}

\begin{theorem}[Dyadic exact auxiliary transfer]
\label{thm:disc-dyadic-transfer}
For an auxiliary curve $Q$ of size $t$, one can preprocess $Q$ in
$\widetilde O(\mu_1 t)$ time and space so that every query $\pi\subseteq Q$
is answered exactly in $\widetilde O(\mu_1)$ time.
\end{theorem}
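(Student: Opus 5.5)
The proof follows the continuous argument of \Cref{thm:canonical-transfer}, with \Cref{lem:disc-fixed-transfer} as the per-node primitive. Build a balanced binary decomposition tree on the edge sequence of $Q$. A node spanning edges $q_iq_{i+1},\ldots,q_{j-1}q_j$ represents $J=Q[i,j]$ and stores the structure of \Cref{lem:disc-fixed-transfer} for $J$ at threshold $r$. Singleton subcurves $Q[i,i]$ also need handling, since discrete queries may be degenerate. For these we either store a trivial one-row structure per vertex, or answer them directly: $z_j$ is in the output if and only if some $z_i\in S$ with $i\le j$ has $z_i,\ldots,z_j$ all within $r$ of $q_i$, which one left-to-right scan settles in $O(\mu_1)$ time. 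The tree has $O(t)$ nodes and total represented length $O(t\log t)$, so preprocessing is $\sum_J O(\mu_1|J|)=\widetilde O(\mu_1 t)$ time and space.

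For a query $\pi=Q[x,y]$ with $x<y$, decompose its edge range into canonical intervals $J_1,\ldots,J_h$ in curve order, with $h=O(\log t)$. Set $S_0=S$ and $S_g=\mathsf{Trans}_{J_g}(S_{g-1})$ using the stored exact structures, and return $S_h$. Each step costs $O(\mu_1)$, so a query costs $\widetilde O(\mu_1)$.

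The main issue is exactness, because in the discrete setting consecutive pieces share a vertex. I will prove by induction on $g$ that $S_g=\operatorname{Trans}_Q^*(J_1\circ\cdots\circ J_g,r,S)$, where $\circ$ identifies the shared vertex $q_c$ at each junction.

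For the inclusion $\supseteq$, take a monotone grid path in the free-space graph of $T\times(J_1\circ\cdots\circ J_g)$ from $(z_i,\text{first})$ to $(z_j,\text{last})$. The row of $q_c$ is a contiguous run of this path. Let $z_m$ be the first vertex on that run. Then the prefix up to $(z_m,q_c)$ certifies $z_m\in S_{g-1}$ by induction. The remainder starts at $(z_m,q_c)$, a bottom-row source for $J_g$, so it certifies $z_j\in S_g$.

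For the inclusion $\subseteq$, a path for the prefix ends at $(z_m,q_c)$ and a path for $J_g$ starts at $(z_m,q_c)$. Splicing them at that common grid vertex gives a valid monotone path. Equivalently, \Cref{lem:frechet-concat}'s discrete analogue applies with the shared vertex identified, and the cost is the maximum of the two costs, hence at most $r$.

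Exactness at every step then follows from the exactness in \Cref{lem:disc-fixed-transfer}. No loss factor arises, unlike the $(1+\eta)$ of the continuous case, because no portals are used. The only delicate point is the identification of the shared vertex. Cutting exactly at the row $q_c$ of the grid path, rather than at a Fr\'echet matching cut, is what makes both inclusions hold without duplicating vertices.
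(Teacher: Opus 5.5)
Your proof is correct and follows essentially the same route as the paper: a balanced binary decomposition with the fixed-interval exact structure at each node, a direct linear scan for singleton queries, and sequential application over the $O(\log t)$ canonical pieces. Exactness likewise comes from cutting the grid path at each shared junction row and, conversely, splicing the per-piece paths at the shared vertex; your induction simply spells this out in more detail than the paper does.
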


\begin{proof}
Build a balanced binary decomposition tree on the edge sequence of $Q$.  A
node whose edge interval runs from $q_iq_{i+1}$ through $q_{j-1}q_j$ stores
the fixed-interval structure for $Q[i,j]$.  The total length of all stored
canonical intervals is $O(t\log t)$, so \Cref{lem:disc-fixed-transfer}
gives preprocessing time $\widetilde O(\mu_1t)$.

A query vertex subcurve $\pi=Q[a,b]$ is decomposed into $O(\log t)$ canonical
vertex subcurves in order, and their exact transfer structures are applied
one after another, starting from $S$.  If $a=b$, answer the singleton query
by one linear dynamic-programming scan on $T\times\{q_a\}$.  Each
fixed-interval query costs $O(\mu_1)$, so the total query time is
$\widetilde O(\mu_1)$.  Exactness follows by cutting a discrete matching at the shared endpoints
of the canonical subcurves.  Conversely, exact membership in each
fixed-interval transfer set yields a discrete path for each canonical
subcurve.  These paths concatenate at the shared endpoints.
\end{proof}

%----------------------------------------------------------------------
\subsection{Discrete macro surrogates}
\label{subsec:discrete-macro-surrogates}
%----------------------------------------------------------------------

We use two aligned macro scales, $s=\mu_3$ and $s=\mu_2$.  By rounding the
parameters by constant factors, assume $\mu_3\mid\mu_2\mid\mu_1$; thus every
fine macro lies in a unique coarse macro.  A macro has $O(s)$ consecutive
host vertices, and boundary vertices are assigned to the later macro.

For every macro boundary vertex $a$, let $T[\bar a,a]$ be the longest
vertex-aligned suffix ending at $a$ whose endpoint-augmented simplification
from \Cref{lem:batched-discrete-simplifications} has at most $C_0s$
vertices, for one fixed sufficiently large constant $C_0$, and let
$T[a,\tilde a]$ be the analogous longest prefix with the same size
threshold.  Denote the two simplifications by $\bar\xi_a$ and
$\tilde\xi_a$.  For a macro $M=T[a,b]$, define
$Q_M\defeq\bar\xi_a\circ T[a,b]\circ\tilde\xi_b$.  Then $|Q_M|=O(s)$ and
$\ddF(T[\bar a,\tilde b],Q_M)\le\delta$.  A macro $M$ is
\emph{marked} by a vertex curve $Z$ if there is a vertex subcurve
$P\subseteq T$ containing a host vertex assigned to $M$ such that
$\ddF(P,Z)\le\delta$.

\begin{lemma}[Discrete macro surrogate]
\label{lem:disc-macro-surrogate}
Let $M=T[a,b]$ be an $s$-scale macro and let $Z$ have $O(s)$ vertices.  If
$M$ is marked by $Z$, then there exists a vertex subcurve $\pi\subseteq Q_M$
with $\ddF(\pi,Z)\le2\delta$.  Moreover, one free-start/free-end
discrete propagation on $Q_M\times Z$ finds such a $\pi$ in $O(s|Z|)$ time.
\end{lemma}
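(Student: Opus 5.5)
The plan mirrors the continuous \Cref{lem:macro-surrogate}, replacing approximate simplification by the exact batched simplifications and $k_{\mathrm c}^*$ by $k_{\mathrm d}^*$. Let $P=T[i,j]$ be a vertex subcurve containing a host vertex assigned to $M$ with $\ddF(P,Z)\le\delta$. The first goal is the containment $P\subseteq T[\bar a,\tilde b]$. After that, the conclusion follows from $\ddF(T[\bar a,\tilde b],Q_M)\le\delta$, which holds by concatenating the two stored simplification matchings with the identity on $T[a,b]$, using the discrete analogue of \Cref{lem:frechet-concat}.

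For the left endpoint, the case $a\le_T z_i$ is immediate. Otherwise $i<a$, and since $P$ contains a vertex of $M$ (which is at or after $a$), $P$ contains the vertex $a$. Restricting the matching to $T[i,a]$ via \Cref{lem:restriction} gives a contiguous vertex subcurve $Z_L\subseteq Z$ with $\ddF(T[i,a],Z_L)\le\delta$. By \Cref{cor:kstar-witness}, $k_{\mathrm d}^*(T[i,a],\delta)\le|Z_L|\le|Z|=O(s)$. By \Cref{lem:batched-discrete-simplifications}, the stored simplification of $T[i,a]$ has exactly this size, so its endpoint-augmented version has at most $|Z|+2\le C_0 s$ vertices once $C_0$ is chosen large relative to the constant in $|Z|=O(s)$. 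Thus $T[i,a]$ is admissible, and maximality of $T[\bar a,a]$ gives $\bar a\le i$.

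For the right endpoint, the case $z_j\le_T b$ is immediate. Otherwise $P$ contains $b$, and the symmetric argument on $T[b,j]$ gives $j\le\tilde b$. The discrete case avoids the connector curves $R_L,R_R$ needed in the continuous proof, because all cuts happen at vertices. The one point I will check carefully is the vertex-assignment convention. The boundary vertex $b$ belongs to the next macro, so a $P$ marking $M$ may contain only vertices strictly before $b$. This case is harmless: then $j<b\le\tilde b$ directly. Symmetrically, if $P$ starts inside $M$, then $i\ge a\ge\bar a$.

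Given the containment, restrict the matching between $T[\bar a,\tilde b]$ and $Q_M$ to $P$ by \Cref{lem:restriction}. This yields a contiguous vertex subcurve $\pi\subseteq Q_M$ with $\ddF(P,\pi)\le\delta$. The triangle inequality for $\ddF$ then gives $\ddF(\pi,Z)\le2\delta$. For the algorithmic claim, run the free-start/free-end discrete propagation from \Cref{subsec:local-propagation} with $Q_M$ as the host curve and $Z$ as the other curve, at threshold $2\delta$. It initializes all feasible first-row vertices and reports a reachable last-row vertex exactly when some vertex subcurve of $Q_M$ is within $2\delta$ of $Z$, so it finds such a $\pi$. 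Backtracking recovers its endpoints, and the cost is $O(|Q_M||Z|)=O(s|Z|)$. I expect the only real obstacle to be the constant bookkeeping relating $C_0$ to the implicit constant in $|Z|=O(s)$, together with the endpoint-augmentation $+2$.
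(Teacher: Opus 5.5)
Your proposal is correct and follows essentially the same route as the paper. You establish $P\subseteq T[\bar a,\tilde b]$ by restricting the witness matching to $T[i,a]$ (respectively $T[b,j]$), bounding $k_{\mathrm d}^*$ via \Cref{cor:kstar-witness}, and then using the exact sizes of the batched simplifications together with maximality; you then restrict the $\delta$-matching between $T[\bar a,\tilde b]$ and $Q_M$ to $P$, apply the triangle inequality, and run one free-start/free-end propagation at threshold $2\delta$, with your explicit handling of the vertex-assignment convention and of the concatenation giving $\ddF(T[\bar a,\tilde b],Q_M)\le\delta$ merely spelling out steps the paper leaves implicit.
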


\begin{proof}
Let $P=T[x,y]$ mark $M$.  We first prove that
$P\subseteq T[\bar a,\tilde b]$.  If $x\ge a$, then $x\ge\bar a$.  If
$x<a$, then the subcurve $P$ reaches the boundary vertex $a$.  Restrict a
width-$\delta$ discrete matching between $P$ and $Z$ to $T[x,a]$; its image
is a contiguous vertex subcurve of $Z$, hence has at most $C_1s$ vertices
for an absolute constant $C_1$, and therefore
$k_{\mathrm d}^*(T[x,a],\delta)\le C_1s$ by \Cref{cor:kstar-witness}.
\Cref{lem:batched-discrete-simplifications} and endpoint augmentation give a
simplification of $T[x,a]$ with at most $C_0s$ vertices, after choosing
$C_0$ large enough.  Thus $T[x,a]$ is admissible in the definition of
$T[\bar a,a]$, so by maximality $\bar a\le x$.  The right side is symmetric
and gives $y\le\tilde b$, proving the containment.

Restrict the stored matching between $T[\bar a,\tilde b]$ and $Q_M$ to $P$.
The image is a vertex subcurve $\pi\subseteq Q_M$ with
$\ddF(P,\pi)\le\delta$, and together with $\ddF(P,Z)\le\delta$ this
gives $\ddF(\pi,Z)\le\ddF(\pi,P)+\ddF(P,Z)\le2\delta$.  The
free-start/free-end propagation tests all subcurves of $Q_M$ against $Z$,
and its cost is $O(|Q_M||Z|)=O(s|Z|)$.
\end{proof}

%----------------------------------------------------------------------
\subsection{Discrete auxiliary propagation}
\label{subsec:discrete-macro-auxtransfer}
%----------------------------------------------------------------------

Assume that the incoming bottom-boundary set $A_B$ is complete and
$5$-sound, and set $\rho_0\defeq3\delta$.  Split $\sigma_l$ into
$R=O(\mu_2/\mu_3)$ consecutive pieces
$\sigma_{l,1},\ldots,\sigma_{l,R}$, each with $O(\mu_3)$ vertices.

Use fine macros at scale $\mu_3$.  A piece $\sigma_{l,r}$ is \emph{dense} if
it marks at least $\omega$ fine macros; let $N_F=\Theta(\mu_1/\mu_3)$ be the
number of fine macros.  For each piece, sample
$K=\lceil C(N_F/\omega)\log n\rceil$ fine macros independently with
replacement.  For each sampled fine macro $F$, run the search of
\Cref{lem:disc-macro-surrogate} on $Q_F\times\sigma_{l,r}$; if it succeeds,
store one returned subcurve $\pi_r\subseteq Q_F$.  Let
$\mathcal E_{\mathrm{disc}}$ be the event that every dense piece, over all
block pairs, receives at least one marked sampled fine macro.

\begin{lemma}[Sampling success]
\label{lem:disc-macro-sampling}
For every fixed $c>0$, the constant $C$ can be chosen so that
$\Pr[\mathcal E_{\mathrm{disc}}]\ge 1-n^{-c}$.
\end{lemma}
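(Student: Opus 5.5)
The plan mirrors the proof of \Cref{lem:continuous-sampling-success}: bound the failure probability for one dense piece, then take a union bound over all pieces in all block pairs.

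\emph{Single dense piece.} Fix a block pair and a dense piece $\sigma_{l,r}$. By definition it marks at least $\omega$ of the $N_F$ fine macros. Each of the $K$ independent samples is uniform over the fine macros. So each sample hits a marked macro with probability at least $\omega/N_F$. The probability that all $K$ samples miss is therefore at most
\[
    \Bigl(1-\frac{\omega}{N_F}\Bigr)^{K}
    \le
    \exp\!\Bigl(-\frac{K\omega}{N_F}\Bigr)
    \le
    \exp(-C\log n)
    =
    n^{-C},
\]
using $K\ge C(N_F/\omega)\log n$ and $1-u\le e^{-u}$.

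\emph{Union bound.} The number of block pairs is $O((n/\mu_1)(m/\mu_2))$, and each has $R=O(\mu_2/\mu_3)$ pieces. The total number of pieces is therefore $O(nm/(\mu_1\mu_3))\le O(nm)\le O(n^2)$, using $\mu_1,\mu_3\ge1$ and $m\le n$. The union bound over dense pieces gives failure probability at most $O(n^{2})\cdot n^{-C}$. Choosing $C\ge c+3$ makes this at most $n^{-c}$ for all sufficiently large $n$; small $n$ is absorbed into the constant. Whether a piece is dense is determined by the input and does not depend on the randomness. Hence the events are well defined before sampling, and the union bound is legitimate.

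\emph{Main point to check.} The only subtlety is independence: the samples for each piece are drawn afresh and independently with replacement. This makes the per-piece bound valid without conditioning on other pieces. Beyond that, the argument is routine. I expect no real obstacle other than keeping the piece count polynomial so that a constant $C$ suffices.
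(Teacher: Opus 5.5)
Your proposal is correct and follows the paper's proof essentially verbatim: the per-piece miss bound $(1-\omega/N_F)^K\le\exp(-K\omega/N_F)$, then a union bound over the $O(nm/(\mu_1\mu_3))\le O(n^2)$ pieces with $C$ chosen so that each failure probability is about $n^{-(c+3)}$. Your remark that density depends only on the input (the host block and the piece) and not on the samples or the incoming sets is a useful clarification the paper leaves implicit.
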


\begin{proof}
For one dense piece, all $K$ samples miss its marked macros with probability
at most $(1-\omega/N_F)^K\le\exp(-K\omega/N_F)\le n^{-(c+3)}$ for a
sufficiently large $C$.  The number of pieces over all block pairs is
$O((n/\mu_1)\cdot(m/\mu_2)\cdot(\mu_2/\mu_3))
=O(nm/(\mu_1\mu_3))\le O(n^2)$, because $m\le n$, and a union bound proves
the claim.
\end{proof}

\paragraph{Sequential branch.}
If every piece has a surrogate $\pi_r$ with
$\ddF(\pi_r,\sigma_{l,r})\le2\delta$, set $S_0\defeq A_B^\delta$ and
$S_r\defeq\operatorname{Trans}^*_{Q_{F_r}}(\pi_r,\rho_0,S_{r-1})$ for
$r=1,\ldots,R$, where $F_r$ is the sampled fine macro whose search produced
$\pi_r$.  Let $\sigma_l^{\le r}\defeq\sigma_{l,1}\circ\cdots\circ\sigma_{l,r}$
and let $\sigma_l^{\le0}$ be the constant curve at $b_l$.

\begin{lemma}[Sequential auxiliary transfer]
\label{lem:disc-sequential-transfer}
Assume that the sequential branch is used and that $A_B$ is complete.
For every $r=0,\ldots,R$:
\begin{enumerate}
    \item If $(x,b_l)$ is globally $\delta$-reachable, $x\in A_B^\delta$,
    and $\ddF(T[x,y],\sigma_l^{\le r})\le\delta$, then $y\in S_r$.
    \item If $y\in S_r$, then some $x\in A_B^\delta$ satisfies
    $\ddF(T[x,y],\sigma_l^{\le r})\le5\delta$.
\end{enumerate}
\end{lemma}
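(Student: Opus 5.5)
The proof will be an induction on $r$, mirroring the continuous \Cref{lem:aux-trans-sequential} but with exact transfer (\Cref{thm:disc-dyadic-transfer}) replacing the lossy one. This removes the $(1+\eta)$ factor, so the soundness constant becomes $3\delta+2\delta=5\delta$.

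\emph{Base case $r=0$.} The curve $\sigma_l^{\le0}$ is the single vertex $b_l$. For completeness, $\ddF(T[x,y],\sigma_l^{\le0})\le\delta$ means every vertex of $T[x,y]$ is within $\delta$ of $b_l$. Appending these horizontal steps to a global $\delta$-path reaching $(x,b_l)$ shows that $(y,b_l)$ is globally $\delta$-reachable. Completeness of $A_B$ then gives $y\in A_B$, and $\norm{y-b_l}\le\delta$ gives $y\in A_B^\delta=S_0$. For soundness, take $x=y$, so that $\ddF(T[y,y],\sigma_l^{\le0})=\norm{y-b_l}\le\delta\le5\delta$.

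\emph{Completeness for $r\ge1$.} Take a width-$\delta$ discrete matching between $T[x,y]$ and $\sigma_l^{\le r}$, and cut it at the shared vertex between $\sigma_l^{\le r-1}$ and $\sigma_{l,r}$. The matching contains some pair $(z,c)$, where $c$ is that shared vertex. The prefix up to this pair gives $\ddF(T[x,z],\sigma_l^{\le r-1})\le\delta$. The suffix starting from this pair gives $\ddF(T[z,y],\sigma_{l,r})\le\delta$. Because the junction pair is included in both parts, the two vertex subcurves share the vertex $z$, which is consistent with the convention that consecutive pieces share endpoints. By induction, $z\in S_{r-1}$. The triangle inequality then gives $\ddF(T[z,y],\pi_r)\le\delta+2\delta=\rho_0$. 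Exactness of the transfer places $y\in\operatorname{Trans}^*_{Q_{F_r}}(\pi_r,\rho_0,S_{r-1})=S_r$.

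\emph{Soundness for $r\ge1$.} If $y\in S_r$, exactness provides some $z\in S_{r-1}$ with $z\le y$ and $\ddF(T[z,y],\pi_r)\le3\delta$. Hence $\ddF(T[z,y],\sigma_{l,r})\le3\delta+2\delta=5\delta$. The induction hypothesis gives $x\in A_B^\delta$ with $\ddF(T[x,z],\sigma_l^{\le r-1})\le5\delta$. Concatenate the two discrete matchings at the common pair $(z,c)$: the first ends at $(z,c)$ and the second starts there. This is the discrete analogue of \Cref{lem:frechet-concat}, and it yields width $5\delta$ for $T[x,y]$ against $\sigma_l^{\le r}$.

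\emph{Main obstacle.} The only delicate point is the discrete cut-and-splice at the piece boundary: we must check that the cut produces a genuine shared host vertex $z$, and that gluing at a shared pair gives a valid monotone sequence. Both follow because every step of a discrete matching moves by $(1,0)$, $(0,1)$, or $(1,1)$, so the $\sigma$-index passes through $c$ exactly at some pair $(z,c)$. The degenerate case where $\pi_r$ is a single vertex is handled by the singleton query of \Cref{thm:disc-dyadic-transfer}, and its exactness is used in the same way.
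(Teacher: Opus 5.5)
Your proposal is correct and follows the paper's proof essentially step for step. You use induction on $r$, a constant-row base case, and cut the width-$\delta$ matching at the shared vertex before $\sigma_{l,r}$ to get completeness via exact transfer. For soundness you apply the triangle inequality $3\delta+2\delta=5\delta$ and concatenate at the common pair $(z,c)$. Your added check that the discrete cut produces a genuine shared pair is a detail the paper leaves implicit.
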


\begin{proof}
For $r=0$, the first claim follows because a constant matching to $b_l$ makes
$(y,b_l)$ globally $\delta$-reachable; completeness of $A_B$ and feasibility
give $y\in A_B^\delta=S_0$.  The second claim follows by taking $x=y$.

Let $r\ge1$.  For completeness, cut a width-$\delta$ matching between
$T[x,y]$ and $\sigma_l^{\le r}$ at the boundary before $\sigma_{l,r}$.  This
gives a vertex $z$ with $\ddF(T[x,z],\sigma_l^{\le r-1})\le\delta$ and
$\ddF(T[z,y],\sigma_{l,r})\le\delta$.  By induction, $z\in S_{r-1}$, and
since $\ddF(\pi_r,\sigma_{l,r})\le2\delta$, we get
$\ddF(T[z,y],\pi_r)\le3\delta=\rho_0$.  Exact transfer completeness
places $y$ in $S_r$.

For soundness, if $y\in S_r$, exact transfer gives some $z\in S_{r-1}$ with
$\ddF(T[z,y],\pi_r)\le\rho_0$, and thus
\[
    \ddF(T[z,y],\sigma_{l,r})
    \le
    5\delta .
\]
Concatenate this certificate with the induction certificate for the prefix;
the discrete Fr\'echet cost of the concatenation is the maximum of the two
piece costs, so the bound remains $5\delta$.
\end{proof}

\begin{lemma}[Marked-vertex enumeration]
\label{lem:disc-marked-vertex-enum}
For any vertex curve $Z$, all host vertices that are contained in some
subcurve $P\subseteq T$ with $\ddF(P,Z)\le\delta$ can be enumerated in
$O(\mu_1|Z|)$ time.
\end{lemma}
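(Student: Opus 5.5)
We will enumerate the marked host vertices with two free-start/free-end discrete propagations on $T\times Z$ at threshold $\delta$: one forward and one backward. Write $Z=(z'_1,\ldots,z'_k)$. A host vertex $z_j$ lies in some $P=T[i,i']$ with $\ddF(P,Z)\le\delta$ exactly when the discrete free-space graph $G$ of $T\times Z$ at threshold $\delta$ has a vertex $(z_j,z'_c)$, for some row $c$, that lies on a directed path from a first-row vertex $(z_i,z'_1)$ to a last-row vertex $(z_{i'},z'_k)$. To check this characterization, take a discrete matching certifying $\ddF(P,Z)\le\delta$. It is a monotone grid path from row $1$ to row $k$, and its host-index projection is the whole contiguous range $[i,i']$, since the host index advances by at most one per step. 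So every $z_j$ with $i\le j\le i'$ appears in some pair $(z_j,z'_c)$ on the path. Conversely, a path through $(z_j,z'_c)$ from row $1$ to row $k$ is such a matching for the host subcurve between its endpoints, and that subcurve contains $z_j$.

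The algorithm is as follows. First compute the forward set $\mathrm{Fw}$ of vertices of $G$ reachable from some feasible first-row vertex. This is the free-start propagation of \Cref{subsec:local-propagation}: initialize all feasible vertices on row $1$ and scan $G$ in topological order. Next compute the backward set $\mathrm{Bw}$ of vertices from which some feasible last-row vertex is reachable, by the symmetric reverse scan initialized with all feasible vertices on row $k$. Then mark every pair $(z_j,z'_c)\in\mathrm{Fw}\cap\mathrm{Bw}$, and output each host index $j$ that has at least one marked pair, i.e.\ scan each column and keep the host vertex if any row qualifies. A vertex in $\mathrm{Fw}\cap\mathrm{Bw}$ lies on a concatenated path from row $1$ to row $k$, and concatenating the two directed paths at that vertex gives a valid monotone path. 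So by the characterization above, the output is exactly the required set.

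Each of the two scans and the final intersection scan touches every vertex and edge of $G$ a constant number of times. $G$ has $O(\abs{T}\abs{Z})=O(\mu_1\abs{Z})$ vertices and edges, so the total time is $O(\mu_1\abs{Z})$.

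The main obstacle is the characterization, in particular that every intermediate host vertex of $P$ appears on the grid path, and not just the endpoints of $P$. This is handled by the observation that discrete steps advance the host index by at most one, as in the proof of \Cref{lem:ball-partition}. The degenerate cases $\abs{Z}=1$ or a singleton $P$ are covered by the same argument, with row $1$ equal to row $k$.
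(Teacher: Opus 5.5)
Your proposal is correct and follows the paper's proof: a forward free-start propagation from all feasible first-row vertices, a backward propagation from all feasible last-row vertices, and one scan that keeps every host vertex having some grid vertex reachable in both directions. The paper states the characterization of such host vertices without argument. Your observation that the host index advances by at most one per step, so every intermediate vertex of $P$ appears on the matching path, fills in that step.
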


\begin{proof}
Run the discrete free-start/free-end dynamic program on $T\times Z$ at
threshold $\delta$, initialized from all feasible vertices on the first row.
Run the same computation backward from all feasible vertices on the last row.
A host vertex $z_i$ is contained in such a witness subcurve if and only if for
some vertex $q_j$ of $Z$, the grid vertex $(z_i,q_j)$ is both forward- and
backward-reachable.  Scanning the $O(\mu_1|Z|)$ grid vertices therefore
enumerates exactly the marked host vertices.
\end{proof}

\paragraph{Sparse fallback.}
If some piece $\sigma_{l,r^*}$ fails to produce a sampled surrogate, run
\Cref{lem:disc-marked-vertex-enum} on $Z=\sigma_{l,r^*}$, and bucket the
enumerated vertices into fine macros.  If at least $\omega$ fine macros are
obtained, declare sampling failure and reject; on
$\mathcal E_{\mathrm{disc}}$ this never happens.  Otherwise, map the
obtained fine macros to their containing coarse macros and remove
duplicates; fewer than $\omega$ coarse macros remain.  For each candidate
coarse macro $G$, run the search of \Cref{lem:disc-macro-surrogate} on
$Q_G\times\sigma_l$, stopping at the first non-null output
$\pi\subseteq Q_G$.  If none succeeds, set
$\widehat I_{\mathrm{AT}}\defeq\emptyset$, otherwise set
$\widehat I_{\mathrm{AT}}\defeq
\operatorname{Trans}^*_{Q_G}(\pi,\rho_0,A_B^\delta)$.

\begin{lemma}[Sparse auxiliary transfer]
\label{lem:disc-sparse-transfer}
Assume that $A_B$ is complete, that $\mathcal E_{\mathrm{disc}}$ holds, and
that the sparse fallback is used.  If there is a true auxiliary-transfer
witness $\ddF(T[x,y],\sigma_l)\le\delta$ with $(x,b_l)$ globally
$\delta$-reachable, then $y\in\widehat I_{\mathrm{AT}}$.  Moreover, every
$y\in\widehat I_{\mathrm{AT}}$ has some $x\in A_B^\delta$ with
$\ddF(T[x,y],\sigma_l)\le5\delta$.
\end{lemma}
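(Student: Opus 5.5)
The plan mirrors the continuous sparse-case argument (\Cref{lem:aux-trans-sparse}), with the exact discrete primitives in place of the approximate ones. The proof has two parts: completeness, which reduces to showing that some candidate coarse macro succeeds, and soundness, which is a triangle inequality applied to whichever candidate the algorithm actually used.

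\emph{Completeness.} Let $y$ be a true target, witnessed by a globally $\delta$-reachable source $(x,b_l)$ with $\ddF(T[x,y],\sigma_l)\le\delta$.
\begin{enumerate}
\item Completeness of $A_B$ gives $x\in A_B$. The first pair of the witness matching is $(x,b_l)$, so $\norm{x-b_l}\le\delta$ and hence $x\in A_B^\delta$.
\item Apply \Cref{lem:restriction} to the witness matching, cutting on the $\sigma_l$ side at the two endpoints of the failed piece $\sigma_{l,r^*}$. This gives a contiguous vertex subcurve $P'\subseteq T[x,y]$ with $\ddF(P',\sigma_{l,r^*})\le\delta$.
\item Every host vertex of $P'$ is enumerated by \Cref{lem:disc-marked-vertex-enum}. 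Pick one such vertex and let $F$ be its fine macro. Then $F$ is among the bucketed fine macros.
\item We are on $\mathcal E_{\mathrm{disc}}$ and the algorithm did not reject, so the coarse macro $G\supseteq F$ is a tested candidate.
\item The full witness $T[x,y]$ contains a host vertex assigned to $G$, because fine macros nest in coarse macros under the same boundary assignment. So $G$ is marked by $\sigma_l$, which has $O(\mu_2)$ vertices.
\item \Cref{lem:disc-macro-surrogate} at scale $s=\mu_2$ now shows that the search on $Q_G\times\sigma_l$ succeeds. Hence at least one candidate succeeds, and the algorithm obtains some $\pi\subseteq Q_{G'}$ with $\ddF(\pi,\sigma_l)\le2\delta$, where $G'$ is possibly a different candidate from $G$.
\end{enumerate}
It does not matter which candidate produced $\pi$. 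The triangle inequality gives
\[
\ddF(T[x,y],\pi)\le\ddF(T[x,y],\sigma_l)+\ddF(\sigma_l,\pi)\le3\delta=\rho_0.
\]
Since $x\in A_B^\delta$ and $x\le y$, exactness of \Cref{thm:disc-dyadic-transfer} places $y$ in $\operatorname{Trans}^*_{Q_{G'}}(\pi,\rho_0,A_B^\delta)=\widehat I_{\mathrm{AT}}$.

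\emph{Soundness.} Suppose $y\in\widehat I_{\mathrm{AT}}$. Then $\widehat I_{\mathrm{AT}}$ is nonempty, so some candidate succeeded and returned $\pi$ with $\ddF(\pi,\sigma_l)\le2\delta$. Exactness of the transfer gives $x\in A_B^\delta$ with $x\le y$ and $\ddF(T[x,y],\pi)\le3\delta$. The triangle inequality then yields $\ddF(T[x,y],\sigma_l)\le5\delta$.

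\emph{Expected obstacle.} The delicate step is the marking transfer from the fine scale to the coarse scale. Two facts must be checked. First, the restricted subcurve $P'$ is a genuine contiguous vertex subcurve of the witness, which holds by the discrete case of \Cref{lem:restriction}. Second, a vertex of $P'$, lying in fine macro $F$, is also a vertex of $T[x,y]$ assigned to the coarse macro containing $F$. This needs the aligned partition $\mu_3\mid\mu_2\mid\mu_1$ together with the convention that boundary vertices are assigned to the later macro. With both facts in place, the marking definition for $G$ is satisfied literally and \Cref{lem:disc-macro-surrogate} applies.
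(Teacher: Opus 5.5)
Your proof is correct and follows essentially the same route as the paper's. You restrict the witness to the failed piece to obtain an enumerated vertex whose fine macro lifts to a tested coarse macro $G$ marked by $\sigma_l$, invoke \Cref{lem:disc-macro-surrogate} to guarantee some candidate succeeds, and then use exact transfer at $\rho_0=3\delta$ together with the triangle inequality for both completeness and soundness. You also make explicit three points the paper leaves implicit: the successful candidate may differ from $G$, the soundness claim is vacuous when no candidate succeeds, and the aligned partition with the later-macro boundary convention ensures a vertex assigned to $F$ is also assigned to $G$.
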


\begin{proof}
Completeness of $A_B$ gives $x\in A_B^\delta$.  Restrict the witness matching
to the failed piece $\sigma_{l,r^*}$, then the restricted host subcurve contains a
host vertex that is enumerated, so its fine macro is bucketed and the
containing coarse macro $G$ is tested.  The full witness also marks $G$ by
the whole curve $\sigma_l$, hence \Cref{lem:disc-macro-surrogate} returns a
subcurve $\pi\subseteq Q_G$ with $\ddF(\pi,\sigma_l)\le2\delta$.

The algorithm may use any successful candidate $\pi$.  For every true
witness,
$\ddF(T[x,y],\pi)\le\ddF(T[x,y],\sigma_l)+\ddF(\sigma_l,\pi)
\le3\delta=\rho_0$, so exact transfer completeness inserts $y$ into
$\widehat I_{\mathrm{AT}}$.  For soundness, if
$y\in\widehat I_{\mathrm{AT}}$, exact transfer gives $x\in A_B^\delta$ with
$\ddF(T[x,y],\pi)\le3\delta$, and using
$\ddF(\pi,\sigma_l)\le2\delta$ we conclude
$\ddF(T[x,y],\sigma_l)\le5\delta$.
\end{proof}

Define the top-boundary auxiliary-transfer contribution $I_{\mathrm{AT}}$ by
the final feasible clip, $I_{\mathrm{AT}}$ consists of all vertices
$(y,b_{l+1})\in T_{k,l}$ with $y\in\widehat I_{\mathrm{AT}}$ and
$\norm{y-b_{l+1}}\le\delta$, in the sequential branch, take
$\widehat I_{\mathrm{AT}}\defeq S_R$.

\begin{theorem}[Macro auxiliary-transfer recurrence]
\label{thm:disc-macro-transfer-recurrence}
Assume that $A_B$ is complete and $5$-sound.  On
$\mathcal E_{\mathrm{disc}}$, $I_{\mathrm{AT}}$ contains every outgoing
vertex with an auxiliary-transfer $\delta$-witness, and every vertex in
$I_{\mathrm{AT}}$ is globally $5\delta$-reachable, the per-block-pair time is
$\widetilde O(\mu_1\mu_3+\mu_1\mu_2/\mu_3+\mu_1\mu_2/\omega+\omega\mu_2^2)$.
\end{theorem}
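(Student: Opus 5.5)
The plan mirrors the proof of \Cref{thm:aux-trans-recurrence}, splitting on which branch the algorithm takes. If every piece obtains a surrogate, the sequential branch is used, and \Cref{lem:disc-sequential-transfer} with $r=R$ gives both properties for the raw set $\widehat I_{\mathrm{AT}}=S_R$, because $\sigma_l^{\le R}=\sigma_l$. Otherwise some piece fails. On $\mathcal E_{\mathrm{disc}}$ the failed piece is not dense, so the sampling-failure rejection never fires and the sparse fallback is used. In that case \Cref{lem:disc-sparse-transfer} supplies the same two properties for $\widehat I_{\mathrm{AT}}$.

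\emph{Completeness.} Take an outgoing vertex $(y,b_{l+1})$ with a Type~4 witness from a globally $\delta$-reachable $(x,b_l)$. Both branch lemmas put $y\in\widehat I_{\mathrm{AT}}$. The witness matches $y$ with $b_{l+1}$, so $\norm{y-b_{l+1}}\le\delta$, and the final feasible clip therefore keeps $(y,b_{l+1})$ in $I_{\mathrm{AT}}$.

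\emph{Soundness.} For $(y,b_{l+1})\in I_{\mathrm{AT}}$, either branch gives some $x\in A_B^\delta$ with $\ddF(T[x,y],\sigma_l)\le5\delta$. Because $A_B$ is $5$-sound, $(x,b_l)$ is globally $5\delta$-reachable. Concatenating that global discrete path with the local matching, which starts at $(x,b_l)$ and ends at $(y,b_{l+1})$, gives a monotone path of cost at most $5\delta$, since concatenation takes the maximum of the costs. One point needs care: the global path ends at $(x,b_l)$ and the local matching begins there, so the shared vertex is identified and no extra step is introduced.

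\emph{Running time.} I would tally the costs as follows, and this accounting is the only part needing care, mainly to check that each transfer query is $\widetilde O(\mu_1)$ by \Cref{thm:disc-dyadic-transfer}.
\begin{enumerate}
\item Sampling: each of the $R=O(\mu_2/\mu_3)$ pieces draws $K=\widetilde O(\mu_1/(\omega\mu_3))$ fine macros, and each search costs $O(\mu_3^2)$ by \Cref{lem:disc-macro-surrogate}. The total is $\widetilde O(\mu_1\mu_2/\omega)$.
\item Sequential branch: $R$ transfer queries at $\widetilde O(\mu_1)$ each give $\widetilde O(\mu_1\mu_2/\mu_3)$.
\item Sparse branch, enumeration: one call to \Cref{lem:disc-marked-vertex-enum} on a piece costs $O(\mu_1\mu_3)$, and bucketing costs $O(\mu_1)$.
\item Sparse branch, searches: fewer than $\omega$ coarse searches at $O(\mu_2^2)$ each give $O(\omega\mu_2^2)$.
\item Sparse branch, final step: one transfer query plus the clip costs $\widetilde O(\mu_1)$, which is absorbed by the other terms.
\end{enumerate}
Summing these gives the stated per-block-pair time bound.
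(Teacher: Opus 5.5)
Your proposal is correct and follows essentially the same route as the paper: split on the sequential versus sparse branch, invoke \Cref{lem:disc-sequential-transfer} with $r=R$ or \Cref{lem:disc-sparse-transfer}, use $\norm{y-b_{l+1}}\le\delta$ from the Type~4 witness to survive the final clip, concatenate with the $5$-sound certificate for $(x,b_l)$, and tally the same cost terms. The one hypothesis you leave implicit is $x\in A_B^\delta$ for the sequential lemma; it follows from completeness of $A_B$ together with $\norm{x-b_l}\le\delta$, and the paper likewise treats this as immediate.
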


\begin{proof}
If every piece obtains a surrogate, completeness and local soundness follow
from \Cref{lem:disc-sequential-transfer} with $r=R$.  If some piece fails,
then on $\mathcal E_{\mathrm{disc}}$ it is not dense, so after the explicit
cutoff the sparse fallback has fewer than $\omega$ candidate coarse macros,
and \Cref{lem:disc-sparse-transfer} applies.  In either branch, the final
feasible clip preserves every true target, since a Type~4 witness matches
$y$ to $b_{l+1}$ within distance $\delta$.  Global soundness follows by
concatenating the local certificate from $x\in A_B^\delta$ with the
$5$-sound global certificate for $(x,b_l)$.

For the running time, the failed-piece enumeration costs $O(\mu_1\mu_3)$.
The sequential branch has $O(\mu_2/\mu_3)$ exact transfer queries, costing
$\widetilde O(\mu_1\mu_2/\mu_3)$.  Sampling and fine-scale surrogate search
cost $(\mu_2/\mu_3)\cdot(\mu_1/(\omega\mu_3))\cdot\mu_3^2=\mu_1\mu_2/\omega$,
up to logarithmic factors.  The sparse fallback tests fewer than $\omega$
coarse macros against the full block, and each test costs $O(\mu_2^2)$. The final exact transfer query in the sparse branch costs
$\widetilde O(\mu_1)$ and is absorbed by the enumeration term.
\end{proof}

%----------------------------------------------------------------------
\subsection{Completing the discrete algorithm}
\label{subsec:discrete-completion}
%----------------------------------------------------------------------

The outgoing boundary sets are $A_R\defeq I_1\cup I_2$ and
$A_T\defeq I_3\cup I_{\mathrm{AT}}$.  Together with the local cases from
\Cref{subsec:discrete-witness-local} and
\Cref{thm:disc-macro-transfer-recurrence}, the discrete \REACH{} update maps
complete and $5$-sound incoming sets to complete and $5$-sound
outgoing sets on $\mathcal E_{\mathrm{disc}}$.

We now give the running time.  The local cases cost
$\widetilde O(\mu_1+\mu_2^2)$ per block pair, and the auxiliary-transfer
contribution is the bound in \Cref{thm:disc-macro-transfer-recurrence}, so
the per-block-pair work is
$\widetilde O(\mu_1+\mu_2^2+\mu_1\mu_3+\mu_1\mu_2/\mu_3
+\mu_1\mu_2/\omega+\omega\mu_2^2)$.

Preprocessing is per host block.
\Cref{lem:batched-discrete-simplifications} costs
$\widetilde O_d(\mu_1^2)$ per host block. Selecting the longest admissible suffix or prefix at a macro boundary
takes $O(\mu_1)$ lookups, each in $O(1)$ time by
\Cref{lem:batched-discrete-simplifications}.  Over both scales and the
three simplifications of \Cref{subsec:discrete-witness-local}, all
selections cost $O(\mu_1^2/\mu_3)$ per host block, and materializing them
costs $O(\mu_1)$ per scale.  Both costs are dominated. For one macro auxiliary curve
of size $O(s)$, \Cref{thm:disc-dyadic-transfer} gives preprocessing
$\widetilde O(\mu_1 s)$, so the fine scale contributes
$(\mu_1/\mu_3) \cdot\widetilde O(\mu_1\mu_3)=\widetilde O(\mu_1^2)$ per host
block, and the coarse scale contributes the same bound.  Thus the total
preprocessing over all host blocks is
$(n/\mu_1)\cdot\widetilde O_d(\mu_1^2)=\widetilde O_d(n\mu_1)$.

Since there are $O(nm/(\mu_1\mu_2))$ block pairs, the total time is
\[
\widetilde O_d\!\left(
 nm\left(
 \frac1{\mu_2}
 +\frac{\mu_2}{\mu_1}
 +\frac{\mu_3}{\mu_2}
 +\frac1{\mu_3}
 +\frac1{\omega}
 +\frac{\omega\mu_2}{\mu_1}
 \right)
 + n\mu_1
\right).
\]
Choose $\mu_1=m^{4/5}$, $\mu_2=m^{2/5}$, $\mu_3=m^{1/5}$, and
$\omega=m^{1/5}$, rounded by constant factors so that
$\mu_3\mid\mu_2\mid\mu_1$.  Then $nm\,\mu_3/\mu_2$, $nm/\mu_3$,
$nm/\omega$, $nm\,\omega\mu_2/\mu_1$, and $n\mu_1$ all equal $nm^{4/5}$,
while $nm/\mu_2=nm^{3/5}$ and $nm\,\mu_2/\mu_1=nm^{3/5}$ are absorbed.
Therefore the discrete gap-decision running time is
$\widetilde O_{d}(nm^{4/5})$.

\begin{proof}[Proof of \Cref{thm:discrete-gap}]
On $\mathcal E_{\mathrm{disc}}$, the local cases and
\Cref{thm:disc-macro-transfer-recurrence} make every discrete \REACH{} call
complete and $5$-sound.  The block induction from
\Cref{subsec:discrete-block-dp} gives acceptance when
$\ddF(\tau,\sigma)\le\delta$ and rejection when
$\ddF(\tau,\sigma)>5\delta$.  By \Cref{lem:disc-macro-sampling},
$\mathcal E_{\mathrm{disc}}$ holds with high probability, and the explicit
sampling-failure cutoff bounds the running time even outside this event.
The time bound is $\widetilde O_{d}(nm^{4/5})$ by the calculation
above.
\end{proof}

\begin{theorem}[Discrete Fr\'echet approximation]
\label{thm:discrete-main}
For two polygonal curves $\tau$ and $\sigma$ in fixed dimension, with
$\abs{\tau}=n$ and $\abs{\sigma}=m\le n$, one can compute a randomized
$(5+\eps)$-approximation to their discrete Fr\'echet distance, with high
probability, in $\widetilde O_{d,\eps}(nm^{4/5})$ time.
\end{theorem}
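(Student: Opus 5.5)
The proof converts the $5$-gap decision procedure of \Cref{thm:discrete-gap} into an optimization algorithm by a threshold search. The key fact about the discrete distance is that $\ddF(\tau,\sigma)$ is one of the $nm$ vertex-pair distances, but we never enumerate these. Instead we seed the search with a coarse estimate. Run the algorithm of Bringmann and Mulzer~\cite{bringmann2016approximability} with $\alpha=m^{1/5}$, or any polynomial factor whose cost is $\widetilde O(n+nm/\alpha)\subseteq\widetilde O(nm^{4/5})$. For unequal lengths we can pad or invoke their imbalanced variant; I expect their bound to extend to $O(n\log n+nm/\alpha)$ with $m\le n$, and this is the one external step I would need to check. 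The output is a value $D$ with $\ddF\le D\le O(\alpha)\ddF$. The degenerate case $\ddF=0$ (so $D=0$) is detected directly, since the estimate is then $0$; we could also test it exactly with the decision procedure at $\delta=0$, where the clipped sets are exact.

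Next, fix the grid $\delta_i=D(1+\eps')^{-i}$ for $i=0,\ldots,I$, where $I=O(\eps^{-1}\log\alpha)=O_{\eps}(\log m)$ and $\eps'=\eps/10$. Run the decision procedure at every $\delta_i$, or binary search over $i$, which needs only $O(\log I)$ calls. Since acceptance is not guaranteed to be monotone in the gap region, the plain scan is the cleanest choice and its cost is still polylogarithmic. Let $i^*$ be the largest index whose call accepts, and output $\hat d=5\delta_{i^*}$.

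Correctness on the high-probability event, after a union bound over the $O_\eps(\log m)$ calls (the sampling constant $C$ is enlarged accordingly):
\begin{itemize}
\item \emph{Upper bound.} The call at $\delta_{i^*}$ accepted, so it did not reject. The gap guarantee then gives $\ddF\le5\delta_{i^*}=\hat d$.
\item \emph{Lower bound.} Let $j$ be the largest index with $\delta_j\ge\ddF$; it exists because $\delta_0=D\ge\ddF$. The call at $\delta_j$ accepts, so $i^*\ge j$ and $\delta_{i^*}\le\delta_j$. Maximality of $j$ gives $\delta_{j+1}<\ddF$, hence $\delta_j<(1+\eps')\ddF$. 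If $j=I$, this inequality follows instead from $\delta_I\le D/\Theta(\alpha)\le\ddF$ once $I$ is chosen large enough. Therefore $\hat d=5\delta_{i^*}\le5(1+\eps')\ddF\le(5+\eps)\ddF$.
\end{itemize}
So $\hat d$ lies in $[\ddF,(5+\eps)\ddF]$, which is the claimed approximation.

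For the running time, each call costs $\widetilde O_d(nm^{4/5})$ by \Cref{thm:discrete-gap}. The $O_\eps(\log m)$ calls, together with the seed computation, give $\widetilde O_{d,\eps}(nm^{4/5})$. The main obstacle is the seeding step: a coarse estimate must be obtained within the target time for imbalanced $m\le n$, and the grid must stay polylogarithmic. If the Bringmann–Mulzer bound is only stated for equal lengths, I would fall back on a cruder seed. The $m$-vertex $\sigma$ can be compared with an $O(m)$-size exact greedy simplification of $\tau$ from \Cref{lem:batched-discrete-simplifications} (single-curve version, $\widetilde O(n)$ time). The alternative is to accept an $O(\log(\text{aspect ratio}))$ search, which is avoided only by having a polynomial-factor estimate.
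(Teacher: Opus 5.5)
\textbf{Gap in the seeding step.} The rest of your argument is fine. The scan over $\delta_i=D(1+\eps')^{-i}$ is correct without any monotonicity. You take the largest accepted index $i^*$, use the accepted call for $\ddF\le 5\delta_{i^*}$, and use the accepted call at $\delta_j$ for $\delta_{i^*}\le\delta_j<(1+\eps')\ddF$. The paper instead binary searches a $(1+\eps/5)$-geometric grid. It maintains a rejected value $v$ (so $\ddF>v$) and an accepted value $v'$ (so $\ddF\le 5v'$) until they are grid-adjacent. With polylogarithmically many grid points, the two choices cost the same up to polylog factors.

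\textbf{Why $\alpha=m^{1/5}$ fails.} This choice needs a Bringmann--Mulzer running time of $O(n\log n+nm/\alpha)$ for curves of lengths $n$ and $m$. That is not what they prove. Their bound is $O(n\log n+n^2/\alpha)$ for two $n$-vertex sequences, $1\le\alpha\le n$. Padding $\sigma$ to $n$ vertices by repeating $w_m$ does make their algorithm applicable. This preserves $\ddF$: match the extra copies of $w_m$ to $v_n$, and conversely clamp indices above $m$ to $m$. But then $\alpha=m^{1/5}$ costs $n^2/m^{1/5}$, which exceeds $nm^{4/5}$ by a factor $n/m$. So it is not $\widetilde O(nm^{4/5})$ when $m\ll n$. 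Your simplification-based fallback does not close this. It needs a threshold that is unknown in advance, and as written it does not produce an estimate.

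\textbf{The fix.} The seed may be extremely coarse. Take $\alpha=\min\{n/m^{4/5},\,n/\log n\}$, so that $n^2/\alpha=O(nm^{4/5}+n\log n)$. Accept a bracket of ratio $O(\alpha)\subseteq O(n)$. The grid then has $O(\eps^{-1}\log n)$ points, so even your plain scan makes only polylogarithmically many gap-decision calls. A union bound over these calls preserves high probability. With this seed, the rest of your proof goes through unchanged and coincides with the paper's argument.
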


\begin{proof}
A single run of the $O(\alpha)$-approximation of Bringmann and
Mulzer~\cite{bringmann2016approximability} with $\alpha=\min\{n/m^{4/5},\,n/\log n\}$ takes
$O(n\log n+n^2/\alpha)=O(n\log n+nm^{4/5})$ time and brackets
$\ddF(\tau,\sigma)$ within a factor $O(n)$; if it returns $0$, we are
done.  Binary search over a $(1+\eps/5)$-geometric grid of thresholds in
this bracket and call \Cref{thm:discrete-gap} at each step.  Initialize
$v$ one grid step below the bracket and $v'$ at its upper end; both
certificates follow from the bracket itself. It maintains a
rejected value $v$ (certifying $\ddF(\tau,\sigma)>v$) and an accepted
value $v'$ (certifying $\ddF(\tau,\sigma)\le 5v'$) until they are
grid-adjacent, at which point $5v'$ is a $(5+\eps)$-approximation. The
grid has $O(\eps^{-1}\log n)$ values, so there are
$O(\log(1/\eps)+\log\log n)$ gap-decision calls; choosing the sampling
constant so that a union bound over the calls preserves high probability,
the total running time is $\widetilde O_{d,\eps}(nm^{4/5})$.
\end{proof}

%% file: Discussion.tex
\section{Discussion}
\label{sec:discussion}

The approximation factor in this framework is governed by a single
quantity, the quality of the surrogates used in the bottom-to-top
transition.  Say that a surrogate $\pi$ for a piece $Z$ has
\emph{quality} $q$ if $\dF(\pi,Z)\le q\delta$ (discretely,
$\ddF(\pi,Z)\le q\delta$).  A witness subcurve $T[x,y]$ of the host
block with $\dF(T[x,y],Z)\le\delta$ is then within $(1+q)\delta$ of
$\pi$.  Transferring reachability through $\pi$ at threshold
$(1+q)\delta$ therefore loses no witness.  Conversely, every
transferred endpoint carries a certificate matching a subcurve of $T$
to $\pi$ at that threshold.  Converting the certificate back to $Z$
costs $q\delta$ once more.  The quality is thus paid twice.  A
surrogate search of quality $q$ decides the gap between $\delta$ and
roughly $(1+2q)\delta$.  Nothing else in the block dynamic program
depends on $q$, because the local cases stay at or below this
threshold whenever $q\ge1$.

This accounting places the known algorithms.  The bottom-to-top
surrogate construction of Cheng, Huang, and
Zhang~\cite{cheng2025constant} behaves as a quality-$3$ transfer in
this framework and yields their $(7+\eps)$ approximation factor.  Our
sampling-based search finds surrogates on the auxiliary curves of
\Cref{subsec:continuous-macro-surrogates,subsec:discrete-macro-surrogates}.
It realizes quality exactly $2$ in the discrete setting and quality
$2+\eta$ continuously, where $\eta\le\eps/6$ is the internal accuracy
parameter of \Cref{sec:continuous}.  The transfer structures operate
on the auxiliary curves directly, so no conversion back to the host
block is needed.  The transfer itself is exact in the discrete setting
(\Cref{thm:disc-dyadic-transfer}) and exact up to a factor $1+\eta$
continuously (\Cref{thm:canonical-transfer}).  The resulting constants
are exactly $5$ in \Cref{lem:disc-sequential-transfer} and at most $5+\eps$ in
\Cref{lem:aux-trans-sequential}.

Surrogates of quality $1$ always exist.  The witness subcurve itself
is one, and the free-start/free-end propagation of
\Cref{subsec:local-propagation} finds it.  Invoking it for every piece reconstructs the quadratic dynamic
program.  The obstacle is therefore not existence but total search
cost.

\begin{openproblem}\label{op:search}
Consider the discrete (respectively continuous) Fr\'echet distance.
Do there exist procedures, one for each host block $T$, that receive a
polygonal curve $Z$ and
\begin{enumerate}
\item either return a subcurve of $T$ certified to be within
Fr\'echet distance $(1+o(1))\delta$ of $Z$,
\item or report failure, which is permitted only when no subcurve of
$T$ is within Fr\'echet distance $\delta$ of $Z$,
\end{enumerate}
such that the total cost over all invocations made by the gap-decision
procedures is strongly subquadratic in $nm$?
\end{openproblem}

A positive answer, combined with the decision-to-optimization
conversions~\cite{colombe2021approximating,bringmann2016approximability},
gives strongly subquadratic $(3+\eps)$-approximation algorithms for
both variants.  This matches the conditional lower bound threshold
of~\cite{buchin2019seth}.  Partial progress pays immediately since any
quality $q<2$ gives a factor below $5$.  

Finally, the fine-macro samples of
\Cref{subsec:continuous-auxiliary-propagation,subsec:discrete-macro-auxtransfer}
are the only random choices, so modulo the surrogate search the
discrete gap-decision procedure is deterministic, and in this framework, the remaining randomness is isolated in the surrogate search.  Derandomizing
the search by exhaustively testing every fine macro costs
$\Theta(\mu_1\mu_2)$ per block pair and leads back to $\Theta(nm)$ in total.  What is missing is a
deterministic procedure that, within the sampling budget, either finds
a marked fine macro for a piece or certifies that fewer than $\omega$
are marked; a deterministic answer to \Cref{op:search} would settle
this as well.